
\documentclass[a4paper,onecolumn,superscriptaddress,accepted=2023-03-30,10pt]{quantumarticle}

\pdfoutput=1

\usepackage[T1]{fontenc} 
\usepackage[english]{babel} 
\usepackage[utf8]{inputenc}

\usepackage[margin=1.8cm]{geometry}

\usepackage[numbers,sort&compress]{natbib}

\usepackage[table,usenames,dvipsnames]{xcolor}
\usepackage[
		colorlinks,
		linkcolor={black!30!blue},
		citecolor={black!30!blue},
		urlcolor={black!30!blue}
]{hyperref}
\usepackage{graphicx}
\usepackage{hyperxmp} 

\usepackage{amsmath,amssymb,mathtools,amsthm,dsfont}

\usepackage{enumitem}
\usepackage{caption}
\usepackage{subcaption}

\usepackage{optidef}
\usepackage{algorithm,algorithmicx}
\usepackage[noend]{algpseudocode}
\algrenewcommand\algorithmiccomment[1]{\hfill $\triangleright$ #1}
\usepackage{tikz}
\usetikzlibrary{quantikz,fit}
\let\braket\undefined
\let\ket\undefined
\tikzset{
  operator/.append style={ minimum width=1.8em,  }
} 

\usepackage{cleveref}
\crefname{equation}{Eq.}{Eqs.}
\Crefname{equation}{Equation}{Equations}
\crefname{section}{Section}{Sections}
\crefname{figure}{Figure}{Figures} 
\crefname{theorem}{Theorem}{Theorems}
\crefname{appendix}{Appendix}{Appendices}
\crefname{proposition}{Proposition}{Propositions}
\crefname{definition}{Definition}{Definitions}
\crefname{algorithm}{Algorithm}{Algorithms}

\usepackage[printonlyused,
						withpage,
						]{acronym}
\makeatletter
\AtBeginDocument{%
 \renewcommand*{\AC@hyperlink}[2]{%
   \begingroup
     \hypersetup{hidelinks}%
     \hyperlink{#1}{#2}%
   \endgroup
 }%
}
\makeatother

\usepackage[compat=0.3]{yquant}


\newtheorem{theorem}{Theorem}
\newtheorem{corollary}[theorem]{Corollary}
\newtheorem{lemma}[theorem]{Lemma}
\newtheorem{proposition}[theorem]{Proposition}

\newtheorem{definition}[theorem]{Definition}
\theoremstyle{definition}
\newtheorem*{example}{Example}



\newcommand{\e}{\ensuremath\mathrm{e}}
\renewcommand{\i}{\ensuremath\mathrm{i}}

\DeclareMathOperator{\LandauO}{\mathrm{O}}

\DeclareMathOperator{\Tr}{Tr}

\DeclareMathOperator{\supp}{supp}

\newcommand{\fro}{\mathrm{F}}

\DeclareMathOperator{\cone}{cone}





\DeclareMathOperator{\U}{U}



\DeclareMathOperator{\GZZ}{\mathrm{GZZ}}

\newcommand{\Gate}[1]{\operatorname{#1}}

\NewDocumentCommand\Cl{mg}{
    \ensuremath{\mathrm{Cl}_{#1}\IfNoValueTF{#2}{}{(#2)}}%
}
\NewDocumentCommand\HW{mg}{
    \ensuremath{\mathrm{HW}_{#1}\IfNoValueTF{#2}{}{(#2)}}%
}


\newcommand{\RR}{\mathbb{R}}

\newcommand{\1}{\mathds{1}}

\newcommand{\FF}{\mathbb{F}}

\newcommand{\R}{\mathbb{R}}

\newcommand{\mc}[1]{\mathcal{#1}}






\renewcommand{\vec}[1]{\boldsymbol{#1}}

\DeclarePairedDelimiterX{\abs}[1]{\lvert}{\rvert}{%
  \ifblank{#1}{\,\cdot\,}{#1}
}   

\DeclarePairedDelimiterX\norm[1]\lVert\rVert{%
  \ifblank{#1}{\,\cdot\,}{#1}
}   


%

%
\DeclarePairedDelimiterX{\iiiNorm}[1]{\lvert}{\rvert}{%
  \delimsize\lvert\delimsize\lvert#1\delimsize\rvert\delimsize\rvert%
}

\DeclarePairedDelimiterXPP\snorm[1]{}\lVert\rVert{_\infty}{\ifblank{#1}{\,\cdot\,}{#1}}   

\DeclarePairedDelimiterXPP\twonorm[1]{}\lVert\rVert{_2}{\ifblank{#1}{\,\cdot\,}{#1}}   

\DeclarePairedDelimiterXPP\trnorm[1]{}\lVert\rVert{_1}{\ifblank{#1}{\,\cdot\,}{#1}}   

\DeclarePairedDelimiterXPP\fnorm[1]{}\lVert\rVert{_{\fro}}{\ifblank{#1}{\,\cdot\,}{#1}}   

\DeclarePairedDelimiterXPP\dnorm[1]{}\lVert\rVert{_\diamond}{\ifblank{#1}{\,\cdot\,}{#1}}   

\DeclarePairedDelimiterXPP\cbnorm[1]{}\lVert\rVert{_\mathrm{cb}}{\ifblank{#1}{\,\cdot\,}{#1}}   
\DeclarePairedDelimiterXPP\onenorm[1]{}\lVert\rVert{_{1\rightarrow 1}}{\ifblank{#1}{\,\cdot\,}{#1}}   
\DeclarePairedDelimiterXPP\ddnorm[1]{}\lVert\rVert{_{\diamond\rightarrow \diamond}}{\ifblank{#1}{\,\cdot\,}{#1}}   
\DeclarePairedDelimiterXPP\ssnorm[1]{}\lVert\rVert{_{\infty\rightarrow\infty}}{\ifblank{#1}{\,\cdot\,}{#1}}   

\providecommand\given{}
\newcommand\SetSymbol[1][]{%
  \nonscript\:#1\vert
  \allowbreak
  \nonscript\:
  \mathopen{}}
\DeclarePairedDelimiterX\Set[1]\{\}{%
  \renewcommand\given{\SetSymbol[\delimsize]}
  #1
}

\DeclarePairedDelimiterX\innerp[2]{\langle}{\rangle}{%
  \ifblank{#1}{\,\cdot\,}{#1} , \ifblank{#2}{\,\cdot\,}{#2}%
}


\DeclarePairedDelimiter{\ket}{\vert}{\rangle}
\newcommand{\ketb}[1]{\ket[\big]{#1}}

\DeclarePairedDelimiterX\braket[2]{\langle}{\rangle}%
  {#1\kern0.15ex\delimsize\vert\kern0.15ex\mathopen{}#2}

\DeclarePairedDelimiterX\ketbra[2]{\vert}{\vert}%
  {#1\kern0.15ex\delimsize\rangle\delimsize\langle\kern0.15ex\mathopen{}#2}

\DeclarePairedDelimiterX\sandwich[3]{\langle}{\rangle}%
  {#1\,\delimsize\vert\kern0.15ex\mathopen{}#2\kern0.15ex\delimsize\vert\kern0.15ex\mathopen{}#3}


\DeclarePairedDelimiterX\obraket[2]{(}{)}%
  {#1\kern0.15ex\delimsize\vert\kern0.15ex\mathopen{}#2}

\DeclarePairedDelimiterX\oketbra[2]{\vert}{\vert}%
  {#1\kern0.15ex\delimsize)\delimsize(\kern0.15ex\mathopen{}#2}

\DeclarePairedDelimiterX\osandwich[3]{(}{)}%
  {#1\,\delimsize\vert\kern0.15ex\mathopen{}#2\kern0.15ex\delimsize\vert\kern0.15ex\mathopen{}#3}


\newcommand{\myleft}{\mathopen{}\mathclose\bgroup\left}
\newcommand{\myright}{\aftergroup\egroup\right}





\newcommand{\verbat}[1]{\text{\normalfont{\ttfamily{#1}}}}




\DeclareMathOperator{\HTL}{\mathrm{Sym}_0}

\renewcommand{\Gate}[1]{\operatorname{\mathsf{#1}}}
\numberwithin{theorem}{section}
\numberwithin{equation}{section}

\newcommand{\hhu}{
	Institute for Theoretical Physics,
	Heinrich-Heine-Universit{\"a}t D{\"u}sseldorf,
	Germany
}

\newcommand{\siegen}{
  Department of Physics, 
  School of Science and Technology, 
  University of Siegen, 
  Germany
}

\newcommand{\tuhh}{
	Institute for Quantum and Quantum Inspired Computing, 
	Hamburg University of Technology, 
	Germany
}

\title{Synthesis of and compilation with time-optimal multi-qubit gates}

\author{P.\ Baßler}
\email{bassler@hhu.de}
\affiliation{\hhu}
\author{M.\ Zipper}
\affiliation{\hhu}
\author{C.\ Cedzich}
\affiliation{\hhu}
\author{M.\ Heinrich}
\affiliation{\hhu}
\author{P.\ H.\ Huber}
\affiliation{\siegen}
\author{M.\ Johanning}
\affiliation{\siegen}
\author{M.\ Kliesch}
\email{science@mkliesch.eu}
\affiliation{\hhu}
\affiliation{\tuhh}

\begin{document}
\maketitle

\begin{abstract}
We develop a method to synthesize a class of entangling multi-qubit gates for a quantum computing platform with fixed Ising-type interaction with all-to-all connectivity. 
The only requirement on the flexibility of the interaction is that it can be switched on and off for individual qubits. 
Our method yields a time-optimal implementation of the multi-qubit gates. 
We numerically demonstrate that the total multi-qubit gate time scales approximately linear in the number of qubits. 
Using this gate synthesis as a subroutine, we provide compilation strategies for important use cases: 
(i) we show that any Clifford circuit on $n$ qubits can be implemented using at most $2n$ multi-qubit gates without requiring ancilla qubits,
(ii) we decompose the quantum Fourier transform in a similar fashion, 
(iii) we compile a simulation of molecular dynamics, and
(iv) we propose a method for the compilation of diagonal unitaries with time-optimal multi-qubit gates, 
as a step towards general unitaries. 
As motivation, we provide a detailed discussion on a microwave controlled ion trap architecture with \ac{MAGIC} for the generation of the Ising-type interactions. 
\end{abstract}

 \hypersetup{
	     pdftitle = {Synthesis of and compilation with time-optimal multi-qubit gates},
	     pdfauthor = {Pascal Baßler, 
	     		Matthias Zipper,
		     	Christopher Cedzich,
		     	Markus Heinrich,
	     		Patrick Heinz Huber,
	     		Michael Johanning,
	     		Martin Kliesch},
	     pdfsubject = {Quantum computing},
	     pdfkeywords = {quantum, compiling, 
	     	 global, interaction, entangling, multi-qubit, gate, synthesis, Clifford, diagonal, unitary, all-to-all, connectivity,
		     Fourier transform, QFT, 
		     microwave, ion trap, Ising, electronic, Hamiltonian, MAGIC,
		     convex, optimization, linear program, LP, MIP, mixed integer program, 
		     digital-analog, DAQC, 
		     qubit, 
		     molecular, dynamics, structure problem, 
		     Molmer-Sorensen, 
		     frames		     
	     }
	    }

\section{Introduction} 
In order to run a program on any computing platform, it is necessary to decompose its higher-level logical operations into more elementary ones and eventually translate those into the platform's native instruction set. 
This process is called ``compiling.''
Both for classical and quantum computers, this is a non-trivial task. 
The performance of the compiled program depends not only on the optimizations done by the compiler but also on the available instructions and their implementation.

Especially in the era of \ac{NISQ} devices, quantum algorithms are limited by the coherence time of the noisy qubits and the number of noisy gates needed to run them \cite{Pre18}.
Thus, it is imperative not only to improve the current quantum devices but also to design fast gates and optimized compilers that use the specific architecture's peculiarities to reduce the circuit depth.
Moreover, these endeavors help to reduce the noise-levels of physical gates and are thus also important for reducing the overhead in quantum error correction \cite{LidBru13_new}.

Quantum compilation is further complicated by the fact that the type and performance of the native instructions depend severely on the available physical interactions and the extent to which they can be controlled.
Most of the compiling literature has focused on native instructions given by single and two-qubit gates.
Two-qubit gates are arguably the simplest entangling gates that can be experimentally realized and dominate in most quantum computing architectures, such as supercomputing qubits.

In contrast, ion trap quantum computers naturally involve all-to-all interactions and thus allow
for the realization of multi-qubit gates which entangle multiple qubits simultaneously \cite{wang_multibit_2001,monz_14-qubit_2011}.
Consequently, there has been a growing interest in studying compilation with multi-qubit gates, and advantages over the use of two-qubit gates have been demonstrated \cite{Linke2017ExperimentalComparison,martinez_compiling_2016,maslov_use_2018, VanDeWetering20ConstructingQuantumCircuits, grzesiak_efficient_2022, bravyi_constant_cost_2022}.

The experimental realization of multi-qubit gates in ion trap quantum computers remains an active field of research.
Recent proof-of-principle experiments have demonstrated such gates acting on up to 10 qubits \cite{figgatt_parallel_2019,lu_global_2019,grzesiak_efficient_2020}.
These rely on precomputing and controlling rather complicated laser pulse shapes to physically implement the desired interactions.

In this work, we propose a simple method that uses \emph{some} all-to-all interaction to emulate arbitrary couplings.
We use this idea to synthesize time-optimal multi-qubit gates under minimal experimental setup and control hardware assumptions.

Concretely, we consider a quantum computing platform that satisfies the following requirements:
\begin{enumerate}[label= (\Roman*)]\itemsep=0pt
	\item single-qubit rotations can be executed in parallel, and \label{item:parallel_1qubit}
	\item it offers	Ising-type interactions with all-to-all connectivity. \label{item:Ising}
\end{enumerate}
We also develop compilation strategies with these gates, 
for which we additionally require that
\begin{enumerate}[resume,label= (\Roman*)]\itemsep=0pt
	\item there is a way to exclude specific qubits from participating in the interaction. \label{item:exclude}
\end{enumerate}
This assumption is sufficient to guarantee that unitaries can be compiled in a circuit depth depending only on the size of their support.

The requirements \ref{item:parallel_1qubit}--\ref{item:exclude} are satisfied, for example, in ion trap platforms \cite{figgatt_parallel_2019,lu_global_2019,grzesiak_efficient_2020}.
The motivation for our research originates from working with an ion trap where all gate control is based on microwave pulses and where Ising-type interactions with all-to-all connectivity are mediated through \acf{MAGIC} \cite{mintert_ion_trap_2001, wunderlich_conditional_2001,timoney_quantum_2011,ospelkaus_microwave_2011,khromova2012designer,piltz2016versatile,lekitsch_blueprint_2017,wolk_quantum_2017}, see \cref{sec:MAGIC}.

For concreteness, we assume that all Ising interactions are of $\Gate{ZZ}$ type, and we call the multi-qubit gates generated by arbitrary $\Gate{ZZ}$ couplings `\emph{$\Gate{GZZ}$ gates}'. 
Furthermore, by requirement~\ref{item:Ising}, there is an Ising Hamiltonian $H$ with fixed $\Gate{ZZ}$ interactions
We then present a synthesis method which realizes an arbitrary $\Gate{GZZ}$ gate as a sequence of time evolutions under $H$, interleaved with suitable $\Gate{X}$ layers.
The purpose of these $\Gate{X}$ layers is to temporarily flip the signs of some $\Gate{ZZ}$ coupling terms in $H$ to accumulate the desired coupling over the sequence.
We show that such a sequence can always be found and use a \acf{LP} to find a time-optimal realization of the desired $\Gate{GZZ}$ gate.
The resulting gate time scales approximately linear with the number of participating qubits $n$ and requires at most $n(n-1)/2$ $\Gate{X}$ layers.

This method may produce very short evolution times that can lead to problematically crammed single-qubit rotations in practice. 
We address this issue with a variation of our approach that extends the \ac{LP} to a \ac{MIP}.

We proceed by developing several compiling strategies with $\Gate{GZZ}$ gates.
We show that any Clifford circuit on $n$ qubits can be implemented using at most $n+1$ $\Gate{GZZ}$ gates, $n$ two-qubit gates and few single-qubit gates.
As an example for non-Clifford unitaries, we decompose the \acf{QFT} in a similar fashion into $n/2$ $\Gate{GZZ}$ gates, $n/2$ two-qubit gates and single-qubit gates.
An important application of quantum computers is the simulation of molecular dynamics. 
We present a method to tailor the approximate simulation in Ref.~\cite{Cohn21QuantumFilter} to our setup by compiling layers of Givens rotations into time-optimal $\Gate{GZZ}$ gates. 
This method significantly reduces the required number of single-qubit rotations with arbitrary small angles, which are challenging to implement in practice. 
Moreover, we propose a compilation method for diagonal unitaries as a step toward compilation strategies for general unitaries.

\subsection{Comparison to previous works}

\paragraph*{Synthesis of multi-qubit gates.}
Previous works \cite{figgatt_parallel_2019,lu_global_2019,grzesiak_efficient_2020} have mainly focused on implementing multi-qubit gates on ion trap quantum computers using the laser-controlled \ac{MS} mechanism \cite{PhysRevLett.83.2274,PhysRevA.62.022311,choi_optimal_2014}.
This setup requires segmented, amplitude-modulated laser pulses, the shape of which can be efficiently precomputed using the \ac{EASE} gate implementation \cite{grzesiak_efficient_2020}.

Here, the novelty of our work is that we only require the engineering of a single, fixed Ising Hamiltonian, which can be calibrated and fine-tuned to high accuracy.
This situation can be found in \ac{MAGIC} ion traps \cite{mintert_ion_trap_2001, wunderlich_conditional_2001,timoney_quantum_2011,ospelkaus_microwave_2011,khromova2012designer,piltz2016versatile,lekitsch_blueprint_2017,wolk_quantum_2017} but may also serve as a practical design principle for other architectures. 
With our synthesis method, multi-qubit $\Gate{GZZ}$ gates can be realized using only additional $\Gate{X}$ gates, resulting in a sequential series of simple pulses.
Arguably, this requires less fine-grained control of the pulse shapes than the \ac{EASE} gate protocol \cite{grzesiak_efficient_2020} and may thus be more implementation-friendly. 
However, we leave a detailed comparison of the approaches for future experimental work.

Furthermore, we introduce \emph{gate time}, instead of gate count, as the central metric for our synthesis of multi-qubit gates.
This metric is meaningful, especially for \ac{NISQ} devices, since the execution of circuits is limited by the coherence time of the qubits.
As we show, a side effect of our method is that it also produces rather short circuits, but not necessarily the shortest ones.
From our numerical studies, we expect that the gate time of our approach scales at most linear with the number of \emph{participating} qubits $n$.
Hence, we expect our method to produce faster multi-qubit gates than the \ac{EASE} gate protocol, which additionally scales linearly in the \emph{total} number of qubits $N$.

A conceptually related approach to our synthesis method was presented in Ref.~\cite{parra_rodriguez_digital_analog_2020} in the context of \ac{DAQC}.
There, the gate synthesis is based on solving a system of linear equations and is inherently restricted to $\Gate{X}$ layers acting on at most two qubits.
In contrast, our work optimizes for the total gate time of the sequence and, to this end, allows for layers with arbitrary support.
In this way, we avoid the problem of negative times encountered in Ref.~\cite{parra_rodriguez_digital_analog_2020} and observe a gate time scaling approximately linear in $n$, in contrast to the quadratic scaling in Ref.~\cite{parra_rodriguez_digital_analog_2020}.

\paragraph*{Compilation with $\Gate{GZZ}$ gates.}
A strategy to decompose general unitaries with multi-qubit $\Gate{GZZ}$ gates is presented in Ref.~\cite{martinez_compiling_2016}. 
It is based on maximizing the fidelity while using as few multi-qubit gates as possible.
This optimization is computationally costly, so the numerical results in Ref.~\cite{martinez_compiling_2016} cover only up to $4$ qubits.

Different compiling strategies with multi-qubit $\Gate{GZZ}$ gates have recently been investigated for Clifford unitaries. 
In Ref.~\cite{maslov_use_2018}, an implementation with $12n-18$ $\Gate{GZZ}$ gates is reported, which has been improved to $6n-8$ $\Gate{GZZ}$ gates in Ref.~\cite{VanDeWetering20ConstructingQuantumCircuits}.
Subsequently, it was shown that $6\log(n)+O(1)$ $\Gate{GZZ}$ gates are enough if $n/2$ ancillary qubits are used \cite{grzesiak_efficient_2022}.
Here, our ancilla-free approach reduces the prefactor because it requires at most $n+1$ multi-qubit $\Gate{GZZ}$ gates and $n$ two-qubit gates.
Shortly after publishing the preprint of this work, it was shown in Ref.~\cite{bravyi_constant_cost_2022} that any Clifford unitary on $n$ qubits can, in fact, be implemented with at most $26$ so-called $\Gate{GCZ}$ gates which are equivalent to $\Gate{GZZ}$ up to single-qubit $\Gate{Z}$ rotations.
In Ref.~\cite{bravyi_constant_cost_2022}, the authors also pointed out that the results in Ref.~\cite{maslov_depth_2022} can be used to obtain an ancilla-free implementation with $2\log(n)+O(1)$ $\Gate{GZZ}$ gates.
The constant-depth scheme of Ref.~\cite{bravyi_constant_cost_2022} can be readily combined with the time-optimal synthesis of $\Gate{GZZ}$ gates discussed in \Cref{sec:J_shaping} to show that any Clifford unitary can be realized in linear time on a platform satisfying the requirements \ref{item:parallel_1qubit}--\ref{item:exclude}.
For large $n$ this would further reduce the number of required $\Gate{GZZ}$ gates. 
However, for small $n\leq 13$ the compilation method presented in \Cref{sec:clifford_compiling} may still be advantageous.

Refs.~\cite{piltz2016versatile, ivanov2015simplified} propose a hand-tailored implementation of the \acl{QFT} on three qubits that uses simultaneous Ising-type interactions to achieve a speed-up. 
We use the same interactions, but our scheme can be applied to systems of arbitrary size and employs our time-optimal multi-qubit gates (cf.\ \cref{sec:QFT}.)

\subsection{Outline}
The remainder of the paper is structured as follows: 
We close this introductory section with a brief introduction to the computational primitives obtained from the requirements \ref{item:parallel_1qubit}--\ref{item:exclude} and how these are realized in microwave controlled ion traps with \ac{MAGIC}.
In \cref{sec:J_shaping}, we introduce our $\Gate{GZZ}$ gate and the time-optimal gate synthesis method.
Also, we define the \ac{MIP} to solve the problem of too short Ising-evolution times and support our claim of time-optimality with numerical results.
In \cref{sec:compiling} we present compiling strategies with our multi-qubit gate for Clifford circuits, the \ac{QFT}, molecular dynamics, and general diagonal unitaries.
Moreover, we demonstrate the performance of these compilation schemes with numerical results for the Clifford circuits and the \ac{QFT}.

\begin{figure*}[t]
\centering
\begin{subfigure}[b]{.4\textwidth}
	\centering
	\definecolor{LightGray}{gray}{0.7}
	\def\r{.12}
	\begin{tikzpicture}[
		ion/.style = {ball color=blue!30!LightGray},
		plt/.style = {thick},
		V/.style={blue!70!black},
		B/.style={red!50!black},
		]
		\foreach \i/\x in {1/-2.493, 2/-1.451, 3/-0.477, 4/0.477, 5/1.451, 6/2.493}
		\shade[ion] (\x, 1.5) circle (\r)
		coordinate (ion_\i);
		\foreach \i/\ip in {1/2, 2/3, 3/4, 4/5, 5/6}
		\draw[<->, shorten <=.2cm, shorten >=.2cm] (ion_\i) -- (ion_\ip)
		node (V_\i) [midway, above] {\small $V_\mathrm{C}$};
		\draw[->,plt] (-3,0) -- ++(6, 0)
		node (z) [anchor = north] {$z$};
		\draw[->,plt] (0, -1) -- (0, 1.1);
		\draw[domain=-3:3,plt,V] plot ({\x}, {0.04*\x*\x});
		\path (-3,.4) node[anchor = south west, inner sep = 1pt,V] {$V_\mathrm{ext}(z)$};
		\draw[domain=3:-3,plt,B] plot ({-\x}, {-0.18*\x})
		node[anchor = south east, inner sep = 1pt,B] {$B(z)$};
    \path (0,-1) ++(0,-.5);
	\end{tikzpicture}
	\caption{Sketch of a typical MAGIC ion trap}
	\label{fig:trap_sketch}
\end{subfigure}
\hfill
\begin{subfigure}[b]{.5\textwidth}
	\centering
	\begin{tikzpicture}[
		level/.style = {thick,line cap=round},
		transition/.style = {<->,shorten <=0.1cm, shorten >=0.1cm}, 
		help/.style = {midway,inner sep =0},
		plt/.style={thick},
		excited/.style={level, red!70!black, plt}, 
		ground/.style={level, blue!80!black, plt},
		transform canvas = {scale=0.9,xshift=-8em}
		]
		\def\b{3.5}
		\draw[->,plt] (0, 0) -- (\b, 0) -- ++(.1,0)
		node [right, inner sep = 2pt] {$B$};
		\draw[->,plt] (0, 0) -- (0, 4)
		node [right] {Energy};
		\draw[domain=0:\b,excited] plot ({\x}, {3.2 + 0.2*\x})
		coordinate (B1);
		\draw[domain=0:\b,excited] plot ({\x}, {2.5 + 0.7*sqrt(1 + 0.04*\x*\x)})
		coordinate (B2);
		\draw[domain=0:\b,excited] plot ({\x}, {3.2 - 0.2*\x})
		coordinate (B3);
		\draw[domain=0:\b,ground]  plot ({\x}, {1.2 - 0.7*sqrt(1 + 0.04*\x*\x)})
		coordinate (B4);
		
		\draw[excited] (-.1,3.2) -- ++(-.3,0) node[left,black] {$\ket{F=1}$};
		\draw[ground] (-.1,.5) -- ++(-.3,0) node[left,black] {$\ket{F=0}$};
		
		\draw[transition] (-.25, 0.5) -- (-.25, 3.2)	node[midway,left] 
      {\small \rotatebox{90}{$f_0 \approx 12.6 \text{GHz}$}};
		
		\draw[excited] (B1) ++ (1.2,0) --++ (1,0)
		node(nF1m-1)[help]{} 
		node(F1m-1)[right,black]{$\ket{1,+1}$};
		\draw[excited] (B2) ++ (0.8,0) --++ (1,0)
		node(nF1m0)[help]{}
		coordinate(cF1m0);
		\draw[excited] (B3) ++ (0.5,0) --++ (1,0)
		node(nF1m1)[help]{}
		coordinate(cF1m1);
		\draw[ground]  (B4) ++ (0.8,0) --++ (1,0)
		node(nF0)[help]{}
		coordinate(cF0); 
		
		\path (F1m-1.west|-cF1m0) node[anchor = west] {$\ket{1,\phantom{+}0}$}
		(F1m-1.west|-cF1m1) node[anchor = west] {$\ket{1,-1}$}
		(F1m-1.west|-cF0)   node[anchor = west] {$\ket{0,0}$};
		
		\draw[transition] (nF0.north) ++ (.2,0) -- node[right, inner sep = 2pt] {$\sigma^+$} (nF1m-1.north) ; 
		\draw[transition] (nF0.north) -- node[anchor = center, fill = white, opacity = .9] {$\pi$} (nF1m0.center);
		\draw[transition] (nF0.north) ++ (-.2,0) -- node[left, inner sep = 1pt] {$\sigma^-$} (nF1m1);
	\end{tikzpicture}
	\caption{Hyperfine levels and qubits in the $^{171}\mathrm{Yb}^+$ ground state}
	\label{fig:Yb_levels}
\end{subfigure}
\label{fig:trap_and_Yb}
\caption{
Schematics of a MAGIC ion trap. 
\textbf{Left: }six ions in linear configuration, with the plot below indicating the confining trap potential~$V_\mathrm{ext}$ in axial direction and magnetic gradient field~$B$.
There is Coulomb repulsion between any two ions (only indicated for next neighbors).
\textbf{Right: }the transition frequencies of the hyperfine sublevels of the $^{171}\mathrm{Yb}^+$ ground statedepend on the magnetic field strength
(Breit-Rabi diagram \cite{breit_rabi_1931}, Zeeman effect exaggerated).
The transitions on the right correspond to $\sigma^-$, $\pi$ and $\sigma^+$~qubit, respectively.
}
\end{figure*}
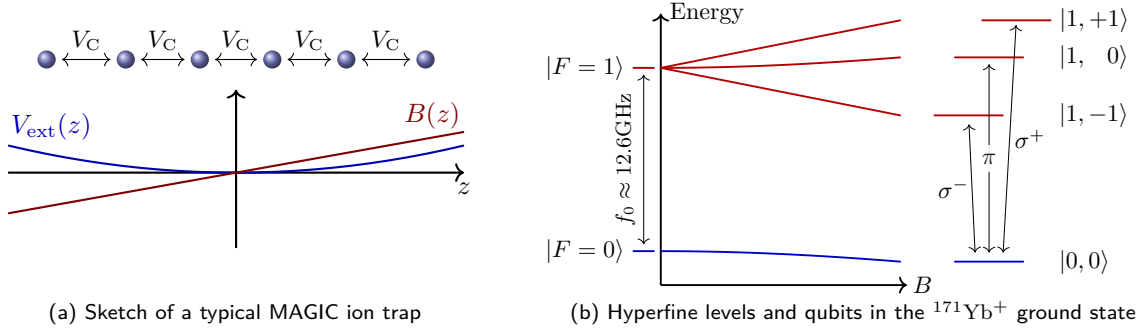

\subsection{Computational primitives}
\label{sec:comp_prim}

On the abstract quantum computing platform with $N$ qubits specified by the requirements \ref{item:parallel_1qubit}--\ref{item:exclude} above, interactions between the qubits are generated by the Ising-type Hamiltonian
\begin{equation}
\label{eq:Hising}
H_0	\coloneqq -\frac{1}{2} \vec{Z}^T J \vec{Z} = -\sum_{i<j}^N J_{ij} Z_i Z_j,
\end{equation}
where the vector $\vec{Z} = (Z_1, \dots, Z_N)^T$ collects all local Pauli $\Gate{Z}$ operators.
The \emph{coupling matrix} $J\in \RR^{N\times N}$ is a symmetric matrix which encapsulates the physical properties of the platform.
Since the diagonal entries of $J$ merely generate a global and hence unobservable phase, we henceforth assume that $J$ has vanishing diagonal.
By requirement \ref{item:exclude}, we can assume that $H_0$ acts only on the $n \leq N$ relevant qubits, and we thus assume w.l.o.g.~that $J$ is a $n\times n$ matrix.

Letting the system evolve under the Hamiltonian~\eqref{eq:Hising} for some time $t$ generates a unitary operation on the qubits.
Our gate synthesis method is based on the observation that layers of local $\Gate{X}$ gates can be used to emulate the evolution under a modified Hamiltonian:
For any binary vector $\vec{s} \in \FF_2^n$, set
\begin{equation}
\Gate{X}^{\vec s} \coloneqq \bigotimes_{i=1}^n \Gate{X}^{s_i},
\end{equation}
and define the \emph{qubit encoding} $\vec m = (-1)^{\vec{s}} \in \{-1, +1\}^n$ (to be understood entry-wise).
We then have the following modified time evolution:
\begin{equation}
\label{eq:virtual_encoding_basis}
\Gate{X}^{\vec s} \exp(-\i t H_0) \Gate{X}^{\vec s} = \exp\big( -\i t H(\vec{m})  \big).
\end{equation}
Here, $H(\vec m)	\coloneqq -\frac{1}{2} \vec{Z}^T J(\vec m) \vec{Z}$ is the modified Hamiltonian with coupling matrix $J(\vec{m}) \coloneqq J \circ \vec{m} \vec{m}^T$, and $\circ$ denotes the Hadamard (entry-wise) product.

If we apply multiple time evolution operators with different encodings in succession, we can further simplify this scheme.
For two encodings $\vec m = (-1)^{\vec s}$ and $\vec m' = (-1)^{\vec s'}$, we can combine the adjacent $\Gate{X}$~layers in \cref{eq:virtual_encoding_basis} and obtain
\begin{equation}
\label{eq:change_of_encoding}
\Gate{X}^{\vec s} \e^{ -\i t H_0} \Gate{X}^{\vec s} \Gate{X}^{\vec s'} \e^{ -\i t' H_0 } \Gate{X}^{\vec s'}
=
\Gate{X}^{\vec s} \e^{ -\i t H_0 } \Gate{X}^{\vec s \oplus \vec s'} \e^{ -\i t' H_0 } \Gate{X}^{\vec s'} .
\end{equation}
Hence, a change of encoding can be performed with a number of $\Gate X$~gates equal to the number of sign flips needed to obtain $\vec m'$ from $\vec m$.
The total number of $\Gate{X}$~layers needed to traverse a sequence of encodings is only one more than the length of the sequence.

\subsection{Experimental motivation: Ion trap quantum computing with microwaves}
\label{sec:MAGIC}
Let us give a brief overview of a physical platform on which our computational primitives can be realized. 
For details, we refer the reader to \cref{apdx:MAGIC} and Ref.~\cite{wunderlich_conditional_2001}.

The energy difference between hyperfine sublevels of some atomic state typically falls into the microwave regime of the electromagnetic spectrum, which makes pairs of such hyperfine states natural candidates for microwave-controlled qubits.
For example, the ``ground state'' of ions with nuclear spin $I = \tfrac{1}{2}$ and total electron angular momentum $J = \tfrac{1}{2}$ (e.g.\ Ytterbium-171 ions, $^{171}\mathrm{Yb}^+$) exhibits four hyperfine sublevels.
They group into a singlet with $F=0$ and a triplet with $F=1$, where $F$ is the quantum number specifying the magnitude of total angular momentum $\vec{F} = \vec{I} + \vec{J}$.
The triplet states are energetically degenerate, but can be distinguished by their value of the magnetic quantum number $m_F \in \{-1, 0, +1\}$.
The non-degenerate singlet state has $m_F = 0$.
There is an energy gap between the multiplets, which for $^{171}\mathrm{Yb}^+$ corresponds to a microwave frequency of about $12.6\text{GHz}$, see also \cref{fig:Yb_levels}.

We use one ion to implement a single qubit and choose the singlet state as the computational zero state $\ket{0} \coloneqq \ket{F=0, m_F=0}$.
We then have the freedom to encode the computational one state $\ket{1}$ into any of the triplet states, and indicate this by the magnetic quantum number $m_F \in \{-1, 0, +1\}$ of the chosen $\ket{1} \coloneqq \ket{F=1, m_F}$.

In ion traps, magnetic fields can be used to lift the degeneracy of the triplet through the Zeeman effect, see \cref{fig:Yb_levels}.
This separates the different qubit encodings in frequency space and enables single-qubit operations through microwave-driven two-level Rabi oscillations.
Certain sequences of pulses on different transitions in the multilevel system can also be used to change the qubit encoding coherently (see \cref{apdx:MAGIC}).
However, this possibility only plays a minor role in our analysis, as we will explain below.

We now extend our scope to multiple ions in the same trap.
They are stored in a linear configuration and form a ``Coulomb crystal'' due to their mutual repulsion.
In the \ac{MAGIC} scheme, the eponymous magnetic gradient along the crystal axis leads to different field strengths for the different ion positions, see \cref{fig:trap_sketch}.
The consequence are different Zeeman splittings, which make the qubits distinguishable in frequency space.
Thus, addressability is achieved, although the microwaves cannot be focused onto single ions.
Additionally, the ions experience a ``dipole force'' in the inhomogeneous field, which couples internal and external degrees of freedom (s.~\cref{apdx:MAGIC}).
This effect can be interpreted as an Ising-like interaction between the qubits, which we use in this work to generate multi-qubit gates.

To sum it up, the abstract requirements \ref{item:parallel_1qubit}--\ref{item:exclude} are realized in microwave-driven ion traps exposed to inhomogeneous magnetic fields as follows: \ref{item:parallel_1qubit} single qubit rotations are realized by microwave-driven Rabi oscillations which can be executed in parallel through digitally generated microwave signals \cite{Kasprowicz:20}. \ref{item:Ising} the Ising-type interaction is the natural interaction in this setup. \ref{item:exclude} selected ions can be taken out of the interaction by encoding them into the magnetic insensitive state with $m_F=0$.

\section{Synthesizing multi-qubit gates with Ising-type interactions}
\label{sec:J_shaping}
In this section, we investigate the set of gates which is generated by all possible time evolution operators of the Hamiltonians $H(\vec m)$ defined in \cref{eq:Hising}.
Given time steps $\lambda_{\vec m} \geq 0$ during which the encoding $\vec m$ is used, we thus consider unitaries of the form
\begin{equation}\label{eq:multi-qubitGate}
\prod_{\vec m} \e^{-\i \lambda_{\vec m} H(\vec m)}
 = \e^{-\i \sum_{\vec m} \lambda_{\vec m} H(\vec m)},
\end{equation}
where we used that the diagonal Hamiltonians $H(\vec m)$ mutually commute. 
For all possible encodings $\vec m\in \{-1,+1\}^n$ we collect the time steps $\lambda_{\vec m}$ in a vector $\vec \lambda \in \R^{2^n}$ and interpret $t= \sum_{\vec m}\lambda_{\vec m}$ as the total time of the unitary $\e^{-\i H}$.

We interpret the generated unitary as the time evolution operator under the \emph{total Hamiltonian}
\begin{equation}
\label{eq:targetHamiltonian}
  H \coloneqq -\frac 12 \vec Z^T A \vec Z\, ,
\end{equation}
where we defined the \emph{total coupling matrix}
\begin{equation}
\label{eq:Jdecomp}
A \coloneqq \sum_{\vec m} \lambda_{\vec m} J(\vec m)=J \circ \sum_{\vec m} \lambda_{\vec m} \vec m \vec m^T
\end{equation}
and used the linearity of the Hadamard product.

Since the $\vec m \vec m^T$ are symmetric, $A$ inherits the symmetry and the vanishing of the diagonal of $J$, see \cref{sec:comp_prim} and \cref{apdx:MAGIC}.
We wish to make our description of the coupling matrix independent of the platform dependent  details given by $J$. 
Therefore, we define the Hadamard quotient $M$ with entries
\begin{equation}\label{eq:MasQuotient}
	M_{ij} \coloneqq \begin{cases}
		A_{ij} / J_{ij}, & i \neq j\, , \\
		0, & i = j\, .
	\end{cases}
\end{equation}
The implicit assumption that $J$ has no vanishing non-diagonal entries is commonly met by experiments.
Our objective is to minimize the total gate time and the amount of $\vec m$'s needed to express the matrix $M$.
To this end we formulate the following \acf{LP}:
\begin{mini}
{}{\vec 1^T \vec \lambda}{}{}
\label{eq:LP1}
\addConstraint{M} {=\sum_{\vec m} \lambda_{\vec m}  \vec m  \vec m^T}{}{}
\addConstraint{ \vec \lambda}{ \in \R^{2^{n-1}}_{\geq 0}}{}{}
\addConstraint{ \vec m}{ \in \{ -1,+1 \}^n . }{}{}
\end{mini}
As above, $\vec 1=(1,1,\dots,1)$ is the all-ones vector such that $\vec{1}^T \vec \lambda = \sum_{\vec m}\lambda_{\vec m}$.
Moreover, we use the symmetry $(-\vec m)(-\vec m)^T = \vec m \vec m^T$ to reduce the degree of freedom in $\vec\lambda$ from $2^n$ to $2^{n-1}$.

This \ac{LP} has the form of a $\ell_1$-norm minimization over the non-negative vector $\vec\lambda$.
As such, it is a convex relaxation of minimizing the number of non-zero entries of $\vec \lambda$, sometimes called the $\ell_0$-``norm''.
Heuristically, it is thus expected that the \ac{LP}~\eqref{eq:LP1} favors sparse solutions.
As we shall see shortly in \cref{thrm:frame1}, the \ac{LP}~\eqref{eq:LP1} always has a \emph{feasible} solution (i.e.~there are variables $\vec \lambda$ such that all constraints are satisfied) for any symmetric matrix $M$ with vanishing diagonal.
The theory of linear programming then guarantees the existence of an optimal solution with at most $n(n-1)/2$ non-zero entries, see \cref{prop:optSparseSol1}.

For any symmetric $n\times n$ matrix $A$ with vanishing diagonal, we define an associated multi-qubit gate $\Gate{GZZ}(A)$, where $\Gate{GZZ}$ stands for ``global $ZZ$ interactions'',
\begin{equation}
\label{eq:GZZ}
\begin{aligned} 
 \Gate{GZZ}(A) &\coloneqq \e^{\i \frac{1}{2} \vec Z^T A \vec Z} \, .
\end{aligned}
\end{equation}
Here, the decomposition of $A$ is found using the \ac{LP}~\eqref{eq:LP1} and involves at most $n(n-1)/2$ different encodings $\vec m$.
Recall from \cref{sec:comp_prim}, that these encodings can be emulated with suitable $\Gate X$ layers and hence $\Gate{GZZ}(A)$ can be implemented using at most $n(n-1)/2+1$ such layers.
We call the exact number of $\Gate X$ layers the \emph{encoding cost} of $\Gate{GZZ}(A)$, and $\vec{1}^T \vec{\lambda}$ the \emph{total $\Gate{GZZ}$ time}.
For this derivation, we have intentionally been agnostic of the physical details of the ion trap but note that the values of $\lambda_{\vec m}$ and therefore $t$ depend on the (physical) coupling matrix $J$.

Finally, given an optimal decomposition of $A$, it is also possible to minimize the total number of $\Gate X$ gates needed for the implementation.
Since every $\Gate{X}$ gate introduces noise, such a minimization improves the fidelity of $\Gate{GZZ}$ gates in practice.
By \cref{eq:change_of_encoding}, the number of $\Gate X$ gates needed to change the encoding from $\vec m$ to $\vec m'$ is exactly the number of signs in $\vec m$ that have to be flipped to obtain $\vec m'$.
Since the resulting gate $\e^{-\i H}$ does not depend on the order of encodings $\vec m$, one can minimize the total number of sign flips over all possible orderings.
However, finding an optimal ordering  generally corresponds to solve a traveling salesman problem on the support of $\vec \lambda$ and is thus NP-hard \cite{dumer_hardness_nodate}.
Nevertheless, there are good heuristic algorithms, e.g.\ Christofides's algorithm introduced in Ref.~\cite{christofides_worstcase_1976}.

Before demonstrating how to use the flexibility and the all-to-all connectivity of $\Gate{GZZ}$ gates for compiling, we discuss some theoretical aspects as well as limitations and extensions of the above approach.
We conclude by presenting numerical results for the synthesis of $\Gate{GZZ}(A)$ gates for randomly chosen matrices $A$. 

\subsection{Theoretical aspects}
\label{sec:theory}
First, we show the existence of a solution for the \ac{LP}~\eqref{eq:LP1} via frame theoretic arguments, then we investigate the sparsity of optimal solutions from a geometrical viewpoint.
Let us define the $n(n-1)/2$-dimensional subspace of symmetric matrices with vanishing diagonal by 
\begin{equation}\label{def:symtraceless}
	\HTL(\RR^n) \coloneqq \Set*{M \in \mathrm{Sym}(\RR^n)\given M_{ii}= 0 \ \forall i\in [n]} \, .
\end{equation}
Moreover, we denote the set of outer products generated by all possible encodings by
 \begin{equation}\label{def:outprod}
	\mc{V} \coloneqq \Set*{ \vec m \vec m^T \given \vec m \in \{-1,+1\}^n , m_n = +1} \, .
\end{equation}
Due to the symmetry $\vec m \vec m^T = (-\vec m)(-\vec m)^T$ we can uniformly fix the value of one of the entries of $\vec m$. 
We chose the convention $m_n=+1$.
\begin{definition}
 Let $V$ be a (finite-dimensional) Hilbert space.
 A set of vectors $v_1,\dots,v_N \in V$ is called a \emph{frame} if their linear span is $V$.
 A frame is said to be \emph{tight} if there exists $a>0$ such that for all $v \in V$
 \begin{equation}
  a \| v \|^2 = \sum_{i=1}^N | \langle v, v_i \rangle |^2 \, .
 \end{equation}
 Moreover, a frame is said to be \emph{balanced} if $\sum_{i=1}^N v_i = 0$.
\end{definition}
With this definition, we obtain the following:
\begin{theorem}
	\label{thrm:frame1}
	The set $\mc{V}$ is a balanced tight frame for $\HTL(\RR^n)$.
	In particular, the \ac{LP}~\eqref{eq:LP1} has a feasible solution for any $M\in\HTL(\RR^n)$.
\end{theorem}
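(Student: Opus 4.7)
Since $\vec m \vec m^T$ has $1$'s on its diagonal while elements of $\HTL(\RR^n)$ have vanishing diagonal, the plan is to interpret $\mc V$ through the orthogonal projection $\pi\colon \mathrm{Sym}(\RR^n)\to\HTL(\RR^n)$ that zeroes the diagonal. This is the natural setting because the LP constraint in~\eqref{eq:LP1} only couples off-diagonal parts (the diagonal being annihilated by the Hadamard product with~$J$ in~\eqref{eq:Jdecomp}). All three claims — spanning, tightness, and balance — then reduce to a single moment computation on $\vec m \in \{-1,+1\}^n$ with $m_n=+1$.

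The key identity is that
\begin{equation*}
\sum_{\vec m \in \mc V} m_i m_j m_k m_l =
\begin{cases}
2^{n-1}, & \{i,j\}=\{k,l\}, \\
0, & \text{otherwise,}
\end{cases}
\end{equation*}
for $i<j$ and $k<l$: any free index that appears an odd number of times in the multiset $\{i,j,k,l\}$ makes the corresponding $\pm 1$ variable cancel upon summation, and $m_n=+1$ is harmless since $m_n^2=1$. Combined with $\langle M, \vec m \vec m^T\rangle_{\fro} = \vec m^T M \vec m = 2\sum_{i<j} M_{ij}\,m_i m_j$ for $M\in\HTL(\RR^n)$, this immediately yields
\begin{equation*}
\sum_{\vec m \in \mc V} \bigl|\langle M, \vec m \vec m^T\rangle_{\fro}\bigr|^2 = 2^n\,\|M\|_{\fro}^2 ,
\end{equation*}
which is tightness with constant $2^n$ and, since the left-hand side controls $\|M\|_{\fro}$, the spanning property. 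The same type of cancellation applied to two-index sums gives $\sum_{\vec m \in \mc V} \vec m \vec m^T = 2^{n-1}\,I$, whose image under $\pi$ vanishes — this is the balance property.

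For the in-particular statement on LP feasibility, I would combine tightness and balance: tight-frame reconstruction yields a signed expansion $M = 2^{-n}\sum_{\vec m}(\vec m^T M \vec m)\,\pi(\vec m \vec m^T)$, and shifting every coefficient by a common constant $c \geq -\min_{\vec m'}\bigl(2^{-n}\vec m'^T M \vec m'\bigr)$ produces non-negative $\lambda_{\vec m}$ without altering the off-diagonal sum, thanks to balance. The main obstacle I anticipate is purely bookkeeping in the moment identity — making sure that no incidental index collision involving~$n$ spoils the vanishing of odd-degree moments, in particular that the asymmetric convention $m_n=+1$ is consistent with the symmetry $\vec m \vec m^T = (-\vec m)(-\vec m)^T$ used to halve the index set. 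Once the moment identity is established cleanly, tightness, spanning, balance, and feasibility of~\eqref{eq:LP1} all follow in a few lines.
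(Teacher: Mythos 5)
Your proposal is correct, and it reaches the result by a more elementary, self-contained route than the paper. The paper's proof (\cref{thrm:frame} in \cref{apdx:frame}) identifies the rows of the synthesis operator with rows of a Hadamard matrix, using $(\vec m \vec m^T)_{gh} = (-1)^{b_g \oplus b_h} = W_{\vec b}(\vec e^g \oplus \vec e^h)$ with Walsh functions $W_{\vec b}$, and then cites frame-theoretic results (the row construction of tight frames from orthogonal projections and the characterization of harmonic frames) to conclude that $\mc V$ is even an equal-norm \emph{harmonic} tight frame; balance is the same one-line character sum in both arguments. Your fourth-moment identity $\sum_{\vec m \in \mc V} m_i m_j m_k m_l = 2^{n-1}$ for $\{i,j\}=\{k,l\}$ and $0$ otherwise is exactly the orthogonality of those Walsh rows, so the core computation is the same; you verify it directly (and your worry about the convention $m_n=+1$ is unfounded for the reason you suspect: the symmetric difference of two distinct pairs always contains an index different from $n$, over which the sum cancels), which avoids the external frame-theory machinery at the cost of not obtaining the extra structure the paper gets for free (equal-norm, harmonic, and the spherical $2$-design corollary). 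Where your write-up is actually more complete than the paper is the ``in particular'' clause: the paper deduces feasibility from the spanning property alone, which only yields a decomposition with real coefficients, while the \ac{LP}~\eqref{eq:LP1} requires $\vec\lambda \geq 0$; your shift of all coefficients by a common constant, which leaves the constraint untouched precisely because the frame is balanced ($\sum_{\vec m \in \mc V} \vec m \vec m^T = 2^{n-1}\1$ has vanishing off-diagonal part), is the explicit argument that turns the signed tight-frame expansion into a feasible point, and it is presumably the intended role of the balance property in \cref{thrm:frame1}. Do keep, as in your plan, the explicit remark that the equality constraint in \eqref{eq:LP1} is to be read on the off-diagonal entries (equivalently after applying your projection $\pi$), since $\vec m \vec m^T$ has unit diagonal while $M \in \HTL(\RR^n)$ does not.
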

The proof that $\mathcal V$ is a balanced tight frame can be found in \cref{apdx:frame}, along with other properties of $\mathcal V$.
Since $\mathcal V$ is a frame for $\HTL(\RR^n)$, any matrix $M\in\HTL(\RR^n)$ can be decomposed as $M=\sum_{\vec m} \lambda_{\vec m} \vec m \vec m^T$.
In other words, the \ac{LP}~\eqref{eq:LP1} has a feasible solution for any $M\in\HTL(\RR^n)$.
Then, a standard linear programming argument shows that there is always an optimal solution with sparsity at most $n(n-1)/2$.
Such a solution can be numerically found by using variants of the simplex algorithm (see e.g.~Ref.~\cite{Karloff1991} for more details).
We formulate this fact as the following proposition and defer the proof to \cref{apdx:convOpt}, where we also show some geometric properties of optimal solutions of a more general class of \ac{LP}'s.
\begin{proposition}[Sparsity of optimal solutions]
\label{prop:optSparseSol1}
 There exists an optimal solution to the \ac{LP}~\eqref{eq:LP1} with sparsity $\leq n(n-1)/2$ for every $M \in \HTL(\RR^n)$.
 The simplex algorithm is guaranteed to return such an optimal solution.
\end{proposition}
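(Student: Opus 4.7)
The plan is to derive both claims from the standard fundamental theorem of linear programming once feasibility has been established. Feasibility is already handed to us by \cref{thrm:frame1}: since $\mc V$ spans $\HTL(\RR^n)$, for any $M \in \HTL(\RR^n)$ the affine set $\{\vec\lambda : M = \sum_{\vec m} \lambda_{\vec m} \vec m \vec m^T\}$ intersects the non-negative orthant (one can, for instance, use the tight-frame reconstruction coefficients followed by adding a large multiple of the all-ones vector, which is a nullvector of the constraint because $\mc V$ is balanced, to enforce non-negativity).

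First I would recast \eqref{eq:LP1} as a standard-form LP $\min\{\vec 1^T\vec\lambda : B\vec\lambda = \vecmap(M), \ \vec\lambda \geq 0\}$, where $B$ is the real linear map whose columns are the vectorizations of the off-diagonal parts of $\vec m \vec m^T$ for $\vec m \in \{-1,+1\}^n$ with $m_n = +1$, viewed as elements of $\HTL(\RR^n)$. Since these columns live in the $n(n-1)/2$-dimensional space $\HTL(\RR^n)$, we have $\rank(B) \leq n(n-1)/2$.

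Next I would verify that the LP attains its minimum. The objective $\vec 1^T \vec\lambda$ is bounded below by $0$ on the feasible set, which is non-empty, closed, and has no recession direction along which the objective strictly decreases (any recession direction $\vec d$ must satisfy $\vec d \geq 0$, so $\vec 1^T \vec d \geq 0$). The non-negativity constraints ensure the feasible polyhedron is line-free and therefore possesses vertices. Hence, by the fundamental theorem of linear programming, the minimum is attained at a vertex (a basic feasible solution), and every such vertex has at most $\rank(B) \leq n(n-1)/2$ non-zero entries. This gives the sparsity claim. The simplex algorithm, by construction, only moves between vertices of the feasible polyhedron and terminates at an optimal one (employing e.g. Bland's rule to prevent cycling), so it always returns an optimal solution with the required sparsity.

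The main obstacle is essentially bookkeeping rather than analytic: one must be careful that the ambient space of the equality constraint is $\HTL(\RR^n)$ and not $\mathrm{Sym}(\RR^n)$, so that the sparsity bound is the correct $n(n-1)/2$ rather than $n(n+1)/2$. Since the more general geometric facts about optimal faces of such LPs are planned for a separate appendix (\cref{apdx:convOpt}), the proof at this stage only needs to cite the textbook fundamental theorem of linear programming (e.g.\ Ref.~\cite{Karloff1991}) after the reduction above.
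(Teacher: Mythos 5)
Your proposal is correct and follows essentially the same route as the paper's proof in \cref{apdx:convOpt}: feasibility from \cref{thrm:frame1}, then the standard linear-programming fact that an optimum is attained at a vertex (basic feasible solution) of the feasible polyhedron, which can have at most $n(n-1)/2$ non-zero entries (the number of equality constraints, equivalently $\rank(B)$), and the simplex method only visits such vertices. Your explicit treatment of non-negativity via the balanced (all-ones nullvector) property and of attainment are details the paper leaves implicit, but the argument is the same.
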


\subsection{Practical limitations}
\label{sec:limits}
In the previous section we showed how to implement a $\Gate{GZZ}$ gate through the Ising-evolution time under at most $n(n-1)/2$ different encodings.
However, in an actual ion trap, practical limitations might occur for very short evolution times.
For this work, we neglect the potential error introduced by a finite recoding time, i.e.\ the time needed to perform $\Gate{X}$ gates.
During recoding we have additional Hamiltonian terms corresponding to the $\Gate{X}$ gates simultaneously with the ``always-on'' Ising Hamiltonian~\eqref{eq:Hising}.
These cause unwanted effects and the introduced errors become non-negligible when the Ising-evolution time approaches the recoding time, see also \cref{apdx:single_qubit}.
Below we observe in our numerical results for the \ac{LP} that some $\lambda_{\vec m}$ are below that recoding time.
In this section, we address these problems and propose extensions to our approach to mitigate them.
First, we discuss the amount of error we would make in an appropriate norm by ignoring too small $\lambda_{\vec m}$, i.e.\ by defining a threshold $\varepsilon$ and setting $\lambda_{\vec m} \equiv 0$ if $\lambda_{\vec m} < \varepsilon$.
To avoid small evolution times in the first place, we then define a \acf{MIP} which can solve the problem exactly with a lower (and upper) bound on the $\lambda_{\vec m}$.

\subsubsection{Truncation error}
\label{sec:truncation_error}
Suppose the target Hamiltonian $H$ is decomposed as in \cref{eq:targetHamiltonian}.
Given a threshold $\varepsilon > 0$ for the $\lambda_{\vec m}$, we define $C \coloneqq \Set*{\vec m \given \lambda_{\vec m} \leq \varepsilon }$ and approximate $H$ by the Hamiltonian
\begin{equation}
 H' = -\sum_{\vec m \notin C} \lambda_{\vec m} \vec Z^T ( J \circ \vec m \vec m^T ) \vec Z \, .
\end{equation}
The diamond norm error made by replacing the time evolution operator of $H$ with the one of $H'$, is upper bounded by half their spectral norm distance (see e.g.\ Ref.~\cite{Kliesch2020TheoryOfQuantum}):
\begin{equation}
\label{eq:trunc_err}
	\frac{1}{2}\snorm*{\e^{-\i H} - \e^{-\i H'}}
	= \max_{x\in\FF_2^n} \abs*{ \sin\!\Big(\frac{1}{2} (H_{x,x}-H'_{x,x}) \Big) } \, .
\end{equation}
Here, we used that $H$ and $H'$ are diagonal in the computational basis with diagonal entries $H_{x,x} \coloneqq \sandwich xHx$ and, moreover, that $\abs{1-\e^{\i\varphi}} = 2\,\abs{\sin(\varphi/2)}$.
The difference of the diagonal entries is
\begin{equation}
 H_{x,x} - H'_{x,x} = - \sum_{i\neq j} J_{ij} \sum_{\vec m\in C} \lambda_{\vec m} m_i m_j (-1)^{x_i+x_j} .
\end{equation}
Since $|\sin(\theta)|\leq |\theta|$ for all $\theta\in\R$, we find that
\begin{equation}
\begin{aligned}
 \frac12\snorm*{\e^{-\i H} - \e^{-\i H'}}
 &\leq \frac14 \sum_{i\neq j} \left| J_{ij} \right| \sum_{\vec m\in C} \lambda_{\vec m} \, .
\end{aligned}
\end{equation}
Hence, the truncation error scales with the total truncated time and the interaction strength.
Both depend on the number of participating qubits $n$, however not in a straightforward way.
The scaling of the truncation error under realistic assumptions is showed in \cref{fig:trunc_err}.

\begin{figure}[t]
	\centering
    \includegraphics[width=.5\linewidth]{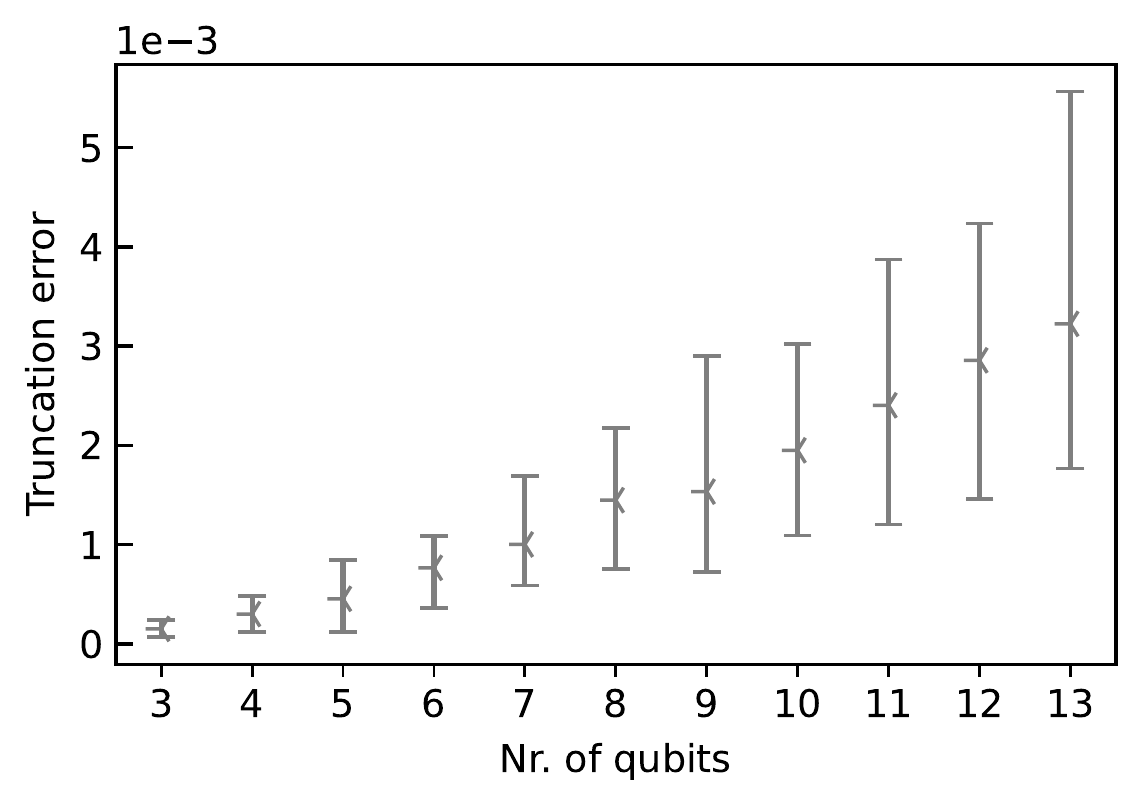}
	\caption{The error scaling, from \cref{eq:trunc_err}, introduced by truncating all $\lambda_{\vec m} < \varepsilon_l = 27 \mu \text{s}$ from the solution of the \ac{LP} is displayed.
  	The error bars show the minimum/maximum deviation from the mean over 20 randomly sampled binary $A$ matrices.
	}
	\label{fig:trunc_err}
\end{figure}

\subsubsection{The mixed integer program approach}
\label{sec:MIP}
To avoid small $\lambda_{\vec m}$ without introducing an additional error as above, we propose to add a lower bound on their values to the \ac{LP}~\eqref{eq:LP1}.
To this end, we introduce additional binary variables $\vec b$, as in Ref.~\cite{karahanoglu_mixed_2013}, which renders the optimization into the following \acf{MIP}:
\begin{mini}
{}{ \alpha \vec{1}^T \vec \lambda + (1- \alpha) \vec{1}^T  \vec b}{}{}
\label{eq:MIP}
\addConstraint{M}{=\sum_{\vec m} \lambda_{\vec m}  \vec m  \vec m^T}{}{}
\addConstraint{\varepsilon_l \vec b} {\leq \vec \lambda \leq \varepsilon_u \vec b}{}{}
\addConstraint{ \vec \lambda}{ \in \R_{\geq 0}^{2^{n-1}}}{}{}
\addConstraint{ \vec m}{ \in \{ -1,+1 \}^n }{}{}
\addConstraint{ \vec b}{ \in \FF_2^{2^{n-1}} . }{}{}
\end{mini}
Here $0 \leq \varepsilon_l < \varepsilon_u$ are bounds on the entries of $ \vec \lambda$:
The lower bound $\varepsilon_l$ can be set to the minimal Ising-evolution time which can be realized in practice.
The upper bound $\varepsilon_u$ can be chosen freely, but small values of $\varepsilon_u$ are favorable since the runtime of the \ac{MIP} is generally shorter for smaller intervals $[\varepsilon_l , \varepsilon_u]$ as explained in Ref.~\cite{karahanoglu_mixed_2013}.
Since $\max (  \vec \lambda)$ depends on $\max_{i<j} (M_{ij})$ one can select $\varepsilon_u \propto \max_{i<j} (M_{ij})$.
The parameter $\alpha\in[0,1]$ is used to assign weights to the optimization of $\vec{1}^T \vec b$ (sparsity) and $\vec{1}^T \vec \lambda$ (total $\Gate{GZZ}$ time), respectively.

\subsection{Numerical results}
\label{sec:numerics}
We investigate how well the proposed methods for the synthesis of $\Gate{GZZ}(A)$ gates performs in practice.
To this end, we solve the \ac{LP}~\eqref{eq:LP1} and the \ac{MIP}~\eqref{eq:MIP} numerically for randomly chosen matrices $A$ and compare the solutions to a naive approach.
As explicated above, our goal is to minimize the total $\Gate{GZZ}$ time $\vec{1}^T \vec{\lambda}$ subject to $A/J = M = \sum_{\vec m} \lambda_{\vec m} \vec m \vec m^T$.
The numerical results in this section show the performance of $\Gate{GZZ}$ gates on all qubits, i.e.\  $n=N$.

To demonstrate the performance of our approach in a realistic setting we consider the ion trap architecture of \cref{apdx:MAGIC} with a harmonic trap potential.
Concretely, we take Ytterbium $^{171}\mathrm{Yb}^+$ ions with Rabi frequency $\Omega = 2 \pi\, 100 \text{kHz}$, magnetic field gradient $B_1 = 100 \text{T/m}$ and axial trap frequency $\omega_z = 2 \pi\, 100 \text{kHz}$.
This determines the coupling matrix $J$ via \cref{eq:def_J}, see Ref.~\cite{johanning_quantum_2009} for more details.
An example coupling matrix $J$ for $10$ ions is shown in \cref{fig:example_coupling_matrix}.
We made the Python code for its computation available on GitHub \cite{GitHub}.

\begin{figure}[t]
	\centering
    \includegraphics[width=.5\linewidth]{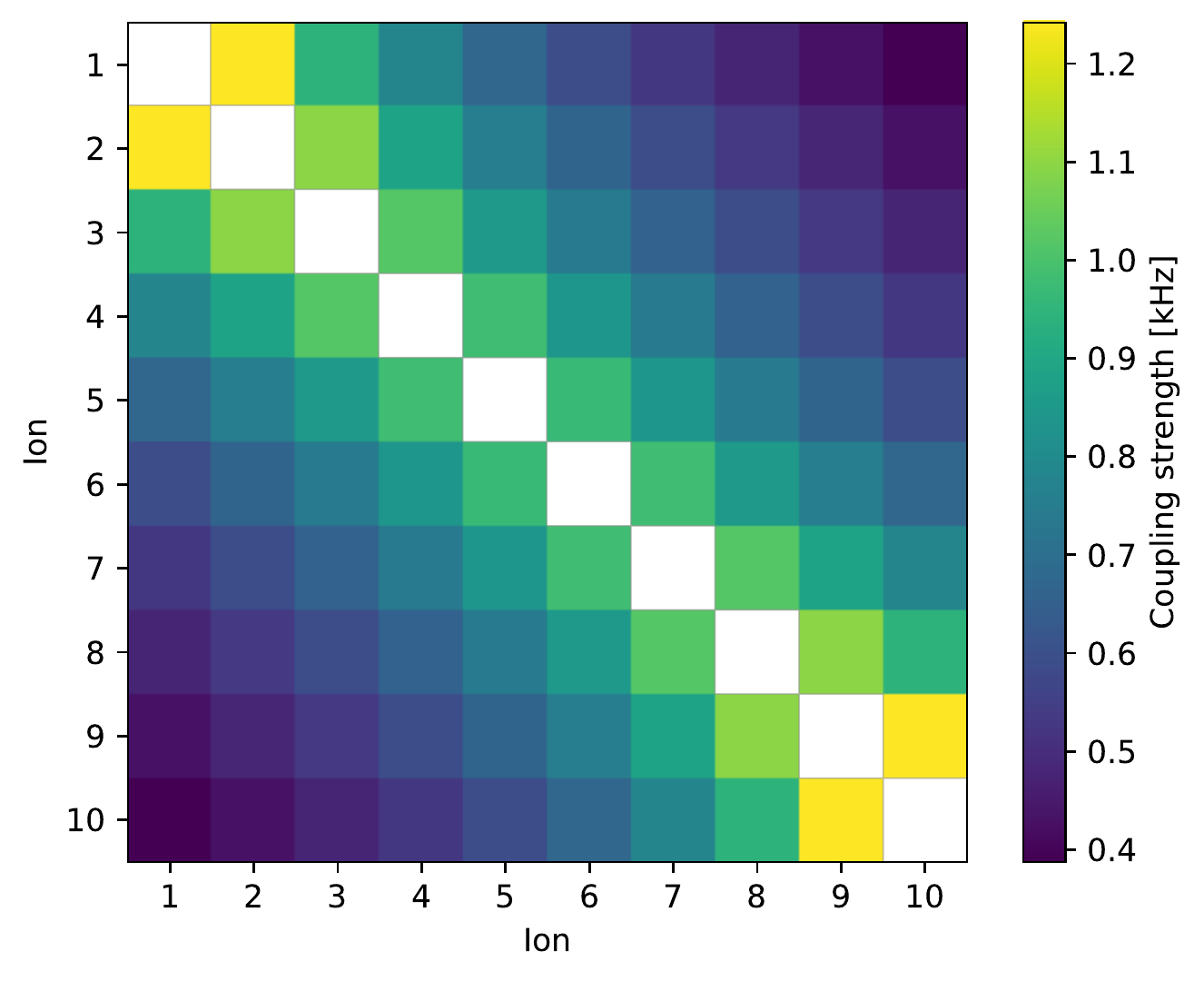}
	\caption{An example coupling matrix $J$ for an ion trap with $10$ ions and harmonic trap potential.
	$J$ is determined by the physical parameters of the trap, see \cref{sec:numerics} for the concrete values, and can be computed by \cref{eq:def_J}.
	}
	\label{fig:example_coupling_matrix}
\end{figure}

\begin{figure*}
	\begin{center}
		\begin{tabular}{lr}
			\includegraphics[width=0.45\linewidth]{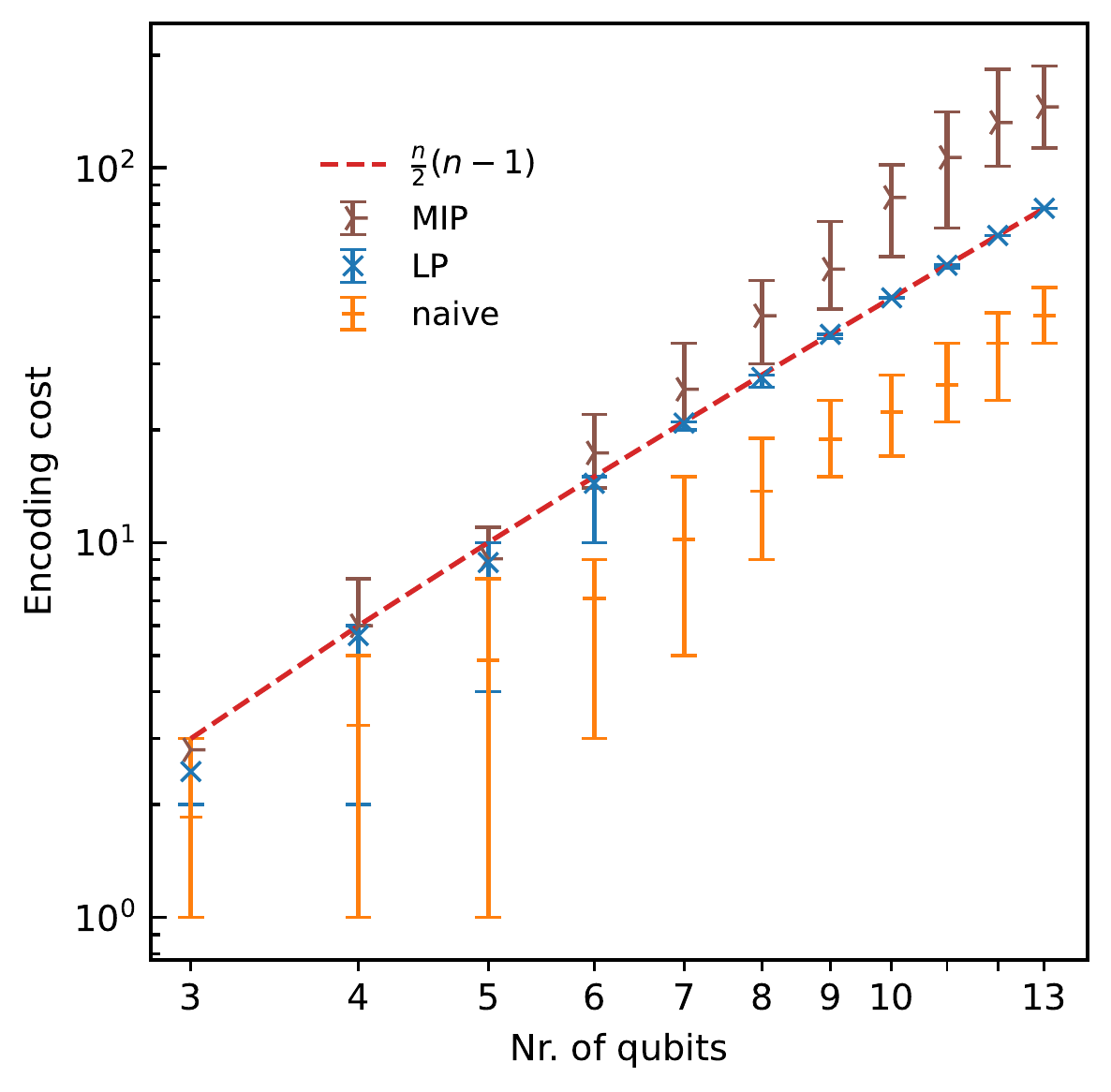}
			&
			\includegraphics[width=0.45\linewidth]{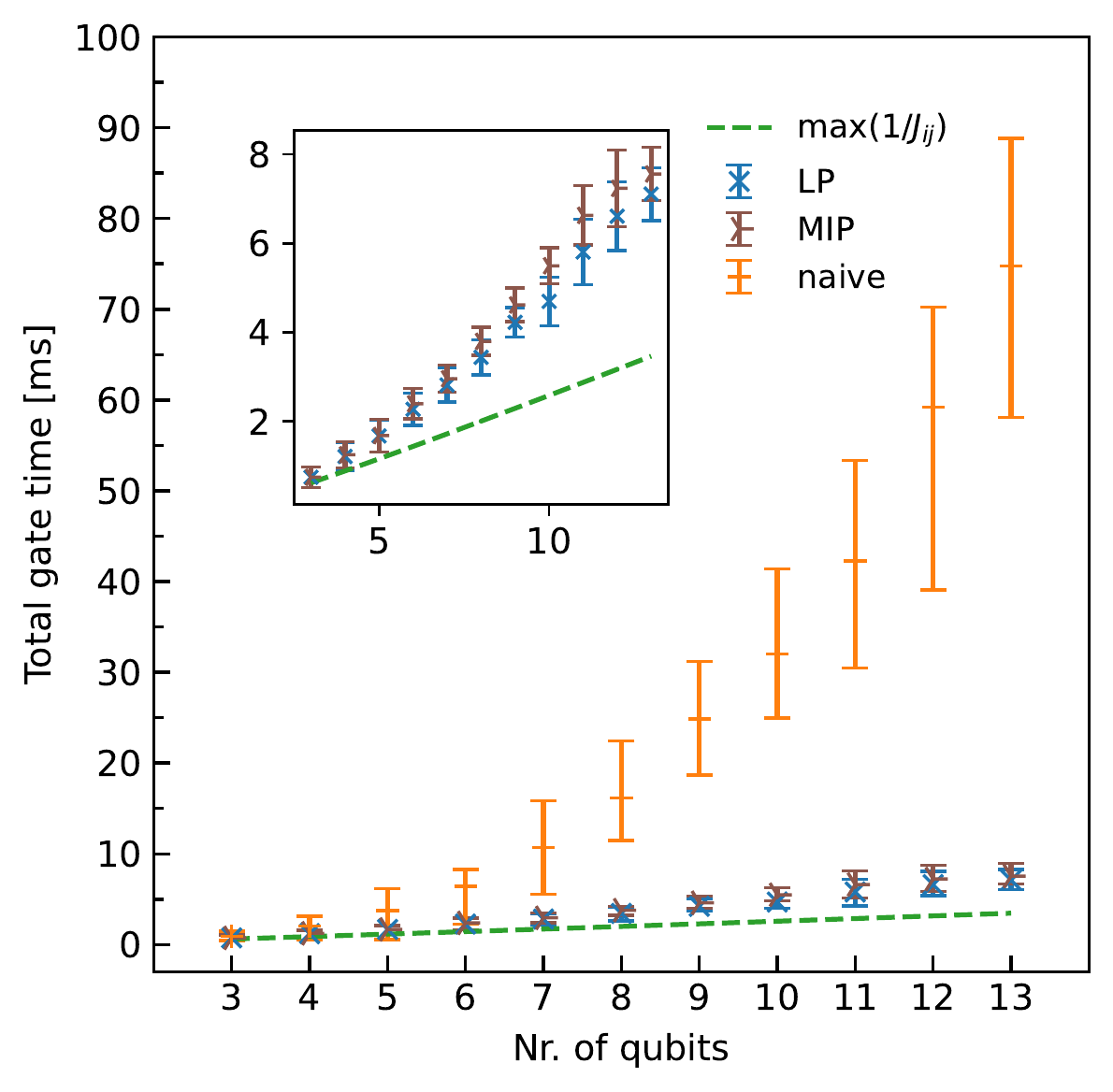}
		\end{tabular}
		\caption{\label{fig:MIP}
			Comparing the performances of the \ac{LP} and the \ac{MIP} to the naive approach for the implementation of a random $\Gate{ZZ}$ layer.
			The error bars show the minimum/maximum deviation from the mean over 20 randomly sampled binary $A$ matrices.
			\textbf{Left:} The encoding cost to implement a random $\Gate{ZZ}$ layer on $n$ qubits.
			The naive approach only costs half as much as the \ac{LP}, see \cref{sec:numerics}.
			For many qubits, the \ac{LP} approaches the upper bound (red dashed line) of \cref{prop:optSparseSol1}, whereas the \ac{MIP} exceeds this upper bound.
			\textbf{Right:} The total gate time of the naive approach scales quadratically with the number of qubits due to the quadratic scaling of the number of possible $\Gate{ZZ}$ gates.
			The total gate time of both the \ac{LP} and the \ac{MIP} scales approximately linear.
			The inset shows a zoomed in version of the plot.
			One can see that the \ac{LP} and the \ac{MIP} roughly take twice as long as the slowest two-qubit $\Gate{ZZ}$ gate (green dashed line).
		}
	\end{center}
\end{figure*}

On a logical level, we consider a symmetric binary matrix $A \in \HTL(\FF_2^n)$ defined in \cref{eq:Jdecomp}, which indicates where the $\Gate{ZZ}$ gates are located:
For all $i<j$, $A_{ij} = 1$ if there is a $\Gate{ZZ}$ gate between qubits $i$ and $j$. 
We simulate a random $\Gate{ZZ}$ gate layer by sampling the entries of the lower/upper triangular part of $A$ uniformly from $\{ 0, 1 \}$.

We compare the results of the \ac{LP} and the \ac{MIP} with a ``naive approach'', which corresponds to a sequential execution of the $\Gate{ZZ}$ gates.
As before, we neglect the gate time for single-qubit gates.
Moreover, ``total gate time'' refers to the total $\Gate{GZZ}$ time, i.e.\ $\vec{1}^T \vec{\lambda}$, for the \ac{LP} and the \ac{MIP}.
For the naive approach, the ``total gate time'' is the time needed to execute the sequence of $\Gate{ZZ}$ gates, i.e.\ $\sum_{i < j} A_{ij}/J_{ij}$, and the encoding cost is the number of $\Gate{ZZ}$ gates $\sum_{i < j} A_{ij}$.

\Cref{fig:MIP} shows the numerical results for solving the \ac{LP}~\eqref{eq:LP1} and the \ac{MIP}~\eqref{eq:MIP}.
The encoding cost for the naive approach scales roughly with $n^2/4$ since on average half of the randomly chosen $n(n-1)/2$ entries of the lower triangular part of $A$ do not vanish.
In \cref{fig:MIP} we find that the total $\Gate{GZZ}$ time scales linearly with the number of participating qubits.
In contrast, by \cref{prop:optSparseSol1} the number of encodings needed to implement a $\Gate{GZZ}$ gate increases quadratically with the number of participating qubits.
Thus, the time between the encodings becomes shorter and shorter, which results in arbitrarily small evolution times $\lambda_{\vec m}$ for many qubits, see also \cref{sec:limits}.
Hence, the more qubits we consider, the more solutions $\lambda_{\vec m}$ of the \ac{LP} are smaller than the lower bound $\varepsilon_l$, see \cref{fig:trunc_err}. 
This explains the deviation of the encoding cost of the \ac{MIP} from that of the \ac{LP}.

\Cref{fig:MIP} also shows that the \ac{MIP} can be solved in practice and, similar to the \ac{LP}, yields nearly time-optimal $\Gate{GZZ}$ gates.
Thus, even when taking practical limitations into account, $\Gate{GZZ}$ gates can be implemented in linear time and an encoding cost of approximately $n(n-1)/2$.

\paragraph*{Details of computer implementation.}
We use the Python package CVXPY \cite{diamond2016cvxpy,agrawal2018rewriting_new} with the GNU linear program kit simplex solver \cite{GLPK} to solve the \ac{LP}, and the MOSEK solver \cite{mosek} for the \ac{MIP}.
We change one parameter of MOSEK to improve the runtime at the expense of not finding the optimal solution. Concretely, we set the MOSEK parameter {\tt MSK\_DPAR\_MIO\_TOL\_REL\_GAP} to $0.6$.

For the \ac{MIP} we choose the lower bound $27 \mu \text{s} = \varepsilon_l \leq \lambda_{\vec m}$ for all $\vec m$, which is motivated by the concrete ion trap setup:
the duration of a robust $\Gate{X}$ gate is about five times longer than the duration of a $\Gate{X}$ gate (i.e.\ a $\pi$-pulse), which is roughly $\pi / \Omega = 5 \mu \text{s}$ \cite{kukita_short_2021}.
We further pick the upper bound $\varepsilon_u = 3/2 \max_{i<j} \abs{M_{ij}}$.
Note that if the interval $[\varepsilon_l , \varepsilon_u]$ is too narrow, we might not be able to find any feasible solution. If it is too wide, the runtime of the solver might increase \cite{karahanoglu_mixed_2013}. 
We observed in our numerical studies (not shown) that both the total gate time and the encoding cost are essentially constant in $\alpha$ as long as $\alpha$ is not too close to the extremal values $0$ and $1$.
We therefore, deliberately put equal weights on the two terms in the objective function and set $\alpha = 0.5$.

In practice, the simplex algorithm has a runtime which is polynomial in the problem size \cite{spielman_smoothed_2004}.
Since the \ac{LP}~\eqref{eq:LP1} has $2^{n-1}$ variables, the runtime is exponential in $n$.
For moderate $n$, this is however still manageable on modern hardware -- a Laptop with Intel Core i7 Processor (8x 1.8 GHz) and 16 GB RAM needs on average only $20$ seconds to solve the \ac{LP} for $n=13$.
This is about the size of most ion trap quantum computers nowadays.
The runtime of mixed integer programs is exponential in the worst case. 
Nevertheless, solving \ac{MIP}~\eqref{eq:MIP} for $n=13$ qubits using the MOSEK solver took on average about seven minutes on the same hardware.

An implementation of the \ac{LP} and \ac{MIP} used to generate \cref{fig:MIP} is provided on GitHub \cite{GitHub}.

\section{Compilation with \texorpdfstring{$\Gate{GZZ}$}{GZZ} gates}
\label{sec:compiling}
The $\Gate{GZZ}$ gates defined in \cref{eq:GZZ} can be used in compilation schemes to improve multi-qubit gate counts over previous results \cite{maslov_use_2018,VanDeWetering20ConstructingQuantumCircuits,parra_rodriguez_digital_analog_2020}. 
Our main result is that we can implement any Clifford circuit on $n$ qubits using only $n+1$ $\Gate{GZZ}$ gates, $n$ two-qubit gates and single-qubit gates.

First, we introduce the notation used in later sections and derive some gate equivalences that will help us later.
Then, we study the practically relevant case of compiling global Clifford unitaries. 
These ubiquitous unitaries play an important role as basic building blocks of quantum circuits, especially in fault-tolerant quantum computing, and are of major importance for cryptographic and tomographic protocols due to their statistical properties.
For the compilation of Clifford unitaries, we make use of their Bruhat decomposition, also used in Ref.~\cite{bravyi_hadamard_free_2020}, and compile the entangling layers into $\Gate{GZZ}$ gates.
Interestingly, this compilation scheme for entangling Clifford layers can be generalized beyond pure Clifford circuits. 
As an example, we show how it can be used to compile an $n$-qubit \ac{QFT}.
Finally, we propose a compiling scheme for general diagonal unitaries which may be used as a step towards decomposing general unitaries.

\subsection{Notation}
\label{sec:gate_notation}
We introduce some notation and link our $\Gate{GZZ}$ gate from \cref{eq:GZZ} to other known entangling gates.
Subscripts on gates indicate on which qubits they act. 
For $\vec x \in \FF_2^n$ and $\alpha \in [0, 2\pi)$ we thus have
\begin{equation}
\label{eq:notation}
 \begin{alignedat}{2}
  \Gate{Z} \text{ rotation: } & \qquad & \Gate{R_Z} (\alpha)_j \ket{\vec x} &= \e^{\i \alpha x_j} \ket{\vec x} \, ,\\
  \Gate{ZZ} \text{ gate: } &\qquad & \Gate{ZZ} (\alpha)_{i,j} \ket{\vec x} &= \e^{\i \alpha (x_i \oplus x_j)} \ket{\vec x} \, ,\\
  \text{Controlled }\Gate{Z} \text{ rotation: } &\qquad & \Gate{CR_Z} (\alpha)_{i,j} \ket{\vec x} &= \e^{\i \alpha x_i x_j } \ket{\vec x} \, ,\\
  \text{Controlled }\Gate{X} \text{ gate: } &\qquad & \Gate{CX}_{i,j} \ket{\vec x} &= \ket{x_1, \dots , x_{j-1}, x_{j} \oplus x_{i}, x_{j+1}, \dots, x_n} \, ,\\
  \text{Hadamard gate: } &\qquad & \Gate{H}_{j} \ket{\vec x} &= \frac{1}{\sqrt{2}}\ket{x_1, \dots, x_{j-1}}\big(\ket{0}+(-1)^{x_j}\ket{1}\big)\ket{x_{j+1}, \dots, x_n} \,,
 \end{alignedat}
\end{equation}
where $\oplus$ denotes addition modulo $2$.
In particular, we denote the phase gate and the controlled-Z gate as
\begin{equation}\label{eq:def_s_and_cz}
	\Gate{S}_j\coloneqq\Gate{R_Z}(\pi/2)_j\quad\text{and}\quad\Gate{CZ}_{i,j} \coloneqq \Gate{CR_Z}(\pi)_{i,j},
\end{equation}
respectively. 
Note that in the definition of the $\Gate{ZZ}$ gate $\alpha$ can be identified with entries of the matrix $A$ in
\cref{eq:Jdecomp}.
In terms of integer arithmetic
, we have $2 x y = x+ y-( x \oplus y)$ for $ x, y \in \{0,1 \}$ which yields
\begin{equation}\label{eq:CR_7}
  \Gate{CR_Z} (\alpha)_{i,j} \equiv \Gate{R_Z} (\alpha/2)_i \Gate{R_Z} (\alpha/2)_j \Gate{ZZ} (-\alpha/2)_{i,j} \, .
\end{equation}
Since the $\Gate{GZZ}$ gate from \cref{eq:GZZ} consists only of $\Gate{ZZ}$ gates, we can express it as
\begin{equation}
\label{eq:GZZ_to_ZZ}
 \Gate{GZZ}(A) = \e^{\i a} \prod_{i<j} \Gate{ZZ} (-2A_{ij})_{i,j} \, ,
\end{equation}
where $A \in \HTL (\RR^n)$
is a symmetric matrix with vanishing diagonal
and $a \coloneqq \sum_{i<j} A_{ij}$.
Similarly, a layer of arbitrary controlled $R_Z$ rotations is characterized by $A \in \HTL (\RR^n)$ via
\begin{equation}
\label{eq:CZlayer}
 \Gate{GCR_Z} (A)
 \coloneqq \prod_{i<j} \Gate{CR_Z} (A_{ij})_{i,j}
 = \e^{-\frac{\i}{4} a} \Gate{GZZ}(A/4) \prod_{i=1}^n \Gate{R_Z} (b_i/2)_i  \, ,
\end{equation}
where we used \cref{eq:CR_7} and abbreviated $b_i \coloneqq \sum_j A_{ij}$.
A general $\Gate{CX}$ layer is given by
\begin{equation}
\label{eq:dirGCX}
 \Gate{GCX} (B) \ket{\vec x} \coloneqq \ket{B \vec x}
 \, ,
\end{equation}
with a matrix $B \in \mathrm{GL}_n(\FF_2)$.
We call $\Gate{GCX} (B)$ a \emph{directed $\Gate{CX}$ layer} if $B$ is lower or upper triangular.

\subsection{Clifford circuits}
\label{sec:clifford_compiling}
The \emph{Clifford group} $\Cl{n}$ is a finite subgroup of the unitary group $\U(2^n)$ that is generated by the single-qubit Hadamard gate $\Gate{H}_j$ and phase gate $\Gate{S}_j$, as well as the two-qubit $\Gate{CX}_{i,j}$ gate.
Conversely, it is a natural task to decompose an arbitrary Clifford unitary $U\in\Cl{n}$ into these generators.
This task is solved by a number of algorithms, see e.g.\ Refs.~\cite{aaronson_improved_2004, maslov_shorter_2018,duncan_graph_theoretic_2020}, with the same asymptotic gate count, but a differently structured output circuit .

Here, we make use of the so-called ``Bruhat decomposition'' \cite{maslov_shorter_2018, bravyi_hadamard_free_2020}.
This decomposition has the advantage that the entangling gates are grouped either in $\Gate{CZ}$ gate layers or directed $\Gate{CX}$ layers, which both can be directly compiled into $\Gate{GZZ}$ gates.
More precisely, the algorithm in Ref.~\cite{bravyi_hadamard_free_2020} writes any Clifford unitary in the form -X-Z-CX-CZ-S-H-CX-CZ-S-, where
\begin{itemize}\itemsep=-2pt
\item -X-Z- is a layer of Pauli gates, 
\item -CX- are layers of directed $\Gate{CX}$ gates,
\item -CZ- are layers of controlled $\Gate{Z}$ gates,
\item -S- are layers of phase gates $\Gate{S}$ and
\item -H- is a layer consisting of Hadamard gates $\Gate{H}$ (and permutation operations).
\end{itemize}

In the following, we concentrate on the decomposition of the $\Gate{CZ}$ and $\Gate{CX}$ layers as the remaining layers consist of local gates with straightforward implementation.
As we show, the compilation of a directed $\Gate{CX}$ layer is a lot more expensive than the compilation of a $\Gate{CZ}$ layer.

As a corollary of $\Gate{CZ}$ layer compilation, we show how to efficiently prepare multi-qubit stabilizer states with $\Gate{GZZ}$ gates, i.e.\ states of the form $U\ket{0}$ where $U\in\Cl{n}$ is a Clifford unitary.
Stabilizer states are important e.g.\ for the construction of mutually unbiased bases, or more generally, informationally complete \acp{POVM}, and their preparation is thus of practical relevance for tomographic protocols, see e.g.\ Refs.~\cite{kueng_qubit_2015, huang_predicting_2020}.

First we transform the directed $\Gate{CX}$ layer by conjugating the $\Gate{CX}$ gate targets with Hadamard gates.
Then we use the structure of the resulting gate layer to reduce the encoding cost with the $\Gate{GZZ}$ gate.
Furthermore, we use the same method to reduce the encoding cost of the \ac{QFT}.
We underpin the advantages of our method with numerical simulations.

\subsubsection{Implementing CZ layers}
\label{sec:CZ}
Since $\Gate{CZ}$ gates commute, we can rewrite any $\Gate{CZ}$ circuit, using \cref{eq:def_s_and_cz,eq:CZlayer}, as
\begin{equation}\begin{aligned}
 \Gate{GCZ}(A)
 &\coloneqq \prod_{i<j} \Gate{CZ}_{i,j}^{A_{ij}} = \prod_{i<j} \Gate{CR}_{\Gate Z}(\pi A_{ij})_{i,j} \\
 &= \Gate{GCR_Z}(\pi A) \\
 &= \e^{-\frac{\i\pi}{4} a} \Gate{GZZ}\left( \frac{\pi}{4} A \right) \prod_{i=1}^n \Gate{R_Z}\left( \frac{\pi}{2} b_i \right)_i \\
 &= \e^{-\frac{\i\pi}{4} a} \Gate{GZZ}\left( \frac{\pi}{4} A \right) \prod_{i=1}^n \Gate{S}_i^{b_i},
\end{aligned}\end{equation}
where $A\in\HTL (\FF^n)$ is again a symmetric matrix with zero diagonal and $a$ and $b_i$ are defined as in \cref{sec:gate_notation}.
Note that the phase gate $\Gate S$ has order 4, so effectively only $b_i \text{ mod } 4$ plays a role and the single-qubit gates are from the set $\{ \Gate I, \Gate S, \Gate Z, \Gate S^\dagger \}$.

\subsubsection{Stabilizer state preparation}
Although any stabilizer state can be written as $U\ket{0}$ for some Clifford unitary $U\in\Cl{n}$, not the full Clifford group is needed to generate all stabilizer states.
In fact, it is well known that any stabilizer state can be obtained by acting with local Clifford gates on \emph{graph states} \cite{schlingemann_stabilizer_2001_new,schlingemann_quantum_2001}.
Graph states are defined as
\begin{equation}
 \ket{A} \coloneqq  \prod_{i<j} \Gate{CZ}_{i,j}^{A_{ij}} \ket{+^n},
\end{equation}
where $A\in\HTL (\FF_2^n)$ and $\ket{+^n}=H^{\otimes n}\ket{0}$.
Hence, by the above, any stabilizer state can be prepared by an initial global Hadamard layer, a $\Gate{GZZ}$ gate, and a final layer of single-qubit Clifford gates.

\subsubsection{Decomposing directed CX layers}
\label{sec:dirCX}
Let us now consider the two directed $\Gate{CX}$ layers in the Bruhat decomposition \cite{bravyi_hadamard_free_2020}.
With the notation introduced in \cref{eq:dirGCX}, a \emph{fan-out} gate with the following action on a state $\vec x \in \FF_2^n$,
\begin{equation}
 \Gate{GCX} (B_{FO}) \ket{\vec x} = \ket{x_1, x_2 \oplus x_1, \dots , x_n \oplus x_1} \, ,
\end{equation}
has the binary matrix
\begin{align}
\label{eq:fanoutParity}
 B_{FO} = \begin{pmatrix}
 1      & 0      & \cdots & 0 \\
 1      &       &  &  \\
 \vdots &  & \1_{n-1} &  \\
 1      &      &  &  \\
\end{pmatrix} \, .
\end{align}
From the structure of the matrix $B_{FO}$ it can be seen that any directed $\Gate{CX}$ layer can be realized by at most $n-1$ fan-out gates.
Each fan-out gate is equivalent to a $\Gate{GCR_Z}$ gate, conjugated with Hadamard gates on the target qubits.
Since by \cref{eq:CZlayer} each $\Gate{GCR_Z}$ can be realized with one $\Gate{GZZ}$ gate, we need at most $n-1$ $\Gate{GZZ}$ gates to realize a directed $\Gate{CX}$ layer, see \cref{eq:fullCX2CZ} for an example.
Concretely, the total encoding cost for a directed $\Gate{CX}$ layer, implementing each of the $n-1$ fan-out gates with one $\Gate{GZZ}$ gate scales as $\LandauO (1/6 n^3)$.
Below \cref{eq:fullCX2CZ} we show that $\lfloor \frac{n-1}{2} \rfloor$ $\Gate{GZZ}$ gates are enough to implement a directed $\Gate{CX}$ layer.
Therefore, one requires $n-1$ (if $n$ is odd) or $n-2$ (if $n$ is even) $\Gate{GZZ}$ gates, for the two directed $\Gate{CX}$ layers appearing in the Bruhat decomposition of Ref.~\cite{bravyi_hadamard_free_2020}.
Since we can realize the $\Gate{CZ}$ layer with exactly one $\Gate{GZZ}$ gate, each Clifford circuit requires only $n+1$ or $n$ $\Gate{GZZ}$ gates and only $n-1$ or $n$ two-qubit $\Gate{CZ}$ gates for $n$ odd or $n$ even, respectively.

\paragraph{Fully directed $\Gate{CX}$ layer.}
We call a directed $\Gate{CX}$ layer \emph{fully directed} if the corresponding gate $\Gate{GCX} (B)$ is characterized by a $n \times n$ matrix $B$ with zeros in the upper triangular matrix and ones everywhere else.
Fully directed $\Gate{CX}$ layers are related to the textbook \ac{QFT}, see \cref{sec:QFT} below.

Remember that we can represent any directed $\Gate{CX}$ layer as a concatenation of $n-1$ fan-out gates.
Commuting Hadamard gates through each target of the fan-out gate transforms the controlled $\Gate{X}$ to a controlled $\Gate{Z}$ gate by $\Gate{HXH=Z}$.
This converts a fan-out gate into a $\Gate{CZ}$-type fan-out gate, which we also call fan-out gate for short.
We can thus transform a fully directed $\Gate{CX}$ layer by applying $\1 = H^2$ from the left and commuting one of the Hadamard gates to the right until it hits a control:
\begin{equation}
\label{eq:fullCX2CZ}
\scalebox{0.80}{
\begin{quantikz}
 & \ctrl{5} & \qw      & \qw \ldots & \qw & \qw     & \qw\\[0.3em]
 & \targ{}  & \ctrl{4} & \qw \ldots & \qw & \qw     & \qw\\[0.5em]
 & \targ{}  & \targ{}  & \qw \ldots & \qw & \qw     & \qw\\[0.35em]
\wave&&&&&&\\[0.35em]
 & \targ{}  & \targ{}  & \qw \ldots & \qw & \ctrl{1}& \qw\\[0.45em]
 & \targ{}  & \targ{}  & \qw \ldots & \qw & \targ{} & \qw
\end{quantikz}}
=
\scalebox{0.80}{
\begin{quantikz}
 & \qw      & \ctrl{5}   & \qw      & \qw    & \qw      & \qw \ldots & \qw & \qw      & \qw     & \qw & \qw\\
 & \gate{\Gate H} & \gate{\Gate Z} & \gate{\Gate H} & \ctrl{4} & \qw      & \qw \ldots & \qw & \qw      & \qw     & \qw & \qw\\
 & \gate{\Gate H} & \gate{\Gate Z} & \qw      & \gate{\Gate Z} & \gate{\Gate H} & \qw \ldots & \qw & \qw      & \qw     & \qw & \qw\\
\wave&&&&&&&&&&&\\
 & \gate{\Gate H} & \gate{\Gate Z} & \qw      & \gate{\Gate Z} & \qw & \qw \ldots & \qw & \gate{\Gate H} & \ctrl{1}& \qw & \qw\\
 & \gate{\Gate H} & \gate{\Gate Z} & \qw      & \gate{\Gate Z} & \qw & \qw \ldots & \qw & \qw      & \gate{\Gate Z} & \gate{\Gate H} & \qw
\end{quantikz}}
\end{equation}
Note that such a transformation obviously works also for an arbitrary directed $\Gate{CX}$ layer.
The locations of the resulting $\Gate H$ and $\Gate{CZ}$ gates can be represented as tables
\begin{equation}
\label{eq:hadamard}
 T_{\Gate H} = \begin{bmatrix}
 0      & 0      & 0      & \cdots & 0 \\
 1      & 1      & 0      & \cdots & 0 \\
 1      & 0      & 1      &        & \vdots \\
 \vdots & \vdots &        & \ddots & 0 \\
 1      & 0      & \cdots & 0      & 1
\end{bmatrix}
 \,
\hspace{0.5em} \text{ and } \hspace{0.5em}
 T_{\Gate{CZ}} = \begin{bmatrix}
 1      & 0      & \cdots & 0 \\
 1      & 1      & \cdots & 0 \\
 \vdots & \vdots & \ddots & \vdots \\
 1      & 1      & \cdots & 1
\end{bmatrix}
\, ,
\end{equation}
where $T_{\Gate{CZ}}$ has the same form as the matrix $B$ for the fully directed $\Gate{CX}$ layer. 
The row index of these tables indicates the qubit on which the corresponding gate acts, while the column index indicates the ``time step'', i.e.\ the horizontal position in the circuit diagram. 
Thus, $T_{\Gate{CZ}}[i,j] = 1$ if qubit $i$ is either a control or a target of a $\Gate{CZ}$ at time step $j$, and $T_{\Gate{CZ}}[i,j] = 0$ if qubit $i$ is idle at time step $j$.
Similarly, $T_{\Gate H}[i,j] = 1$ if a Hadamard gate acts on qubit $i$ at time step $j$, and $T_{\Gate H}[i,j] = 0$ if qubit $i$ remains unchanged.
For a fully directed $\Gate{CX}$ layer, $T_{\Gate H}$ and $T_{\Gate{CZ}}$ always have this form.
To locate the components of the circuit on the right-hand side of \cref{eq:fullCX2CZ} one starts with reading the first column of $T_{\Gate H}$, then the first column of $T_{\Gate{CZ}}$ and so on.

We now aim at reducing the encoding cost of the fully directed $\Gate{CX}$ layer on $n$ qubits. 
To this end, we reduce the supports $m$ of the $\Gate{GZZ}$ gates implementing the fan-out gates in the $\Gate{CX}$ layer, since the encoding cost of a $\Gate{GZZ}$ gate scales as $m(m-1)/2$, see \cref{sec:J_shaping}.
Consider an odd column $i$ in $T_{\Gate{CZ}}$ which corresponds to a $\Gate{CZ}$-type fan-out gate with control on qubit $i$.
We split column $i$ into two columns as follows:
The first one representing a two-qubit $\Gate{CZ}$ gate on qubits $i$ and $i+1$, and the second column is the same as the original one except that it does not target qubit $i+1$.
This splitting increases the number of columns in $T_{\Gate{CZ}}$ by one.
For example, for $n=5$ and $i=1$ we split the column $[1,1,1,1,1]^T$ into $[1,1,0,0,0]^T$ and $[1,0,1,1,1]^T$, where we keep in mind that the first nonzero entry in a column denotes the control qubit and hence has to appear in both parts.
Furthermore, we update the Hadamard table $T_{\Gate{H}}$ by inserting a zero column after column $i+1$ to account for the new column in $T_{\Gate{CZ}}$.
The fan-out gate resulting from the split of the odd column $i$ of $T_{\Gate{CZ}}$ together with the even column $i+1$ can be implemented with one $\Gate{GZZ}$ on $i$ qubits.

Note that we can not move parts of columns of $T_{\Gate{CZ}}$ to the left since there is always a Hadamard gate on the left of the control qubit of that column that blocks it.
Therefore, we only split odd columns~$i$ and move them to the right.

This splitting of columns of $T_{\Gate{CZ}}$ corresponds to moving the Hadamard gate on the $i+1$ qubit to the left or, equivalently, moving all $\Gate{CZ}$ gates except the one acting on the $i+1$ qubit to the right, as exemplified in the following for $n=5$ qubits:
\begin{equation}
\label{eq:Hcommute}
\scalebox{0.75}{
\begin{quantikz}[column sep=tiny]
 & \qw            & \ctrl{4}       & \qw            & \qw            & \qw            & \qw            & \qw            & \qw            & \qw            & \qw \\
 & \gate{\Gate H} & \gate{\Gate Z} & \gate{\Gate H} & \ctrl{3}       & \qw            & \qw            & \qw            & \qw            & \qw            & \qw \\
 & \gate{\Gate H} & \gate{\Gate Z} & \qw            & \gate{\Gate Z} & \gate{\Gate H} & \ctrl{2}       & \qw            & \qw            & \qw            & \qw \\
 & \gate{\Gate H} & \gate{\Gate Z} & \qw            & \gate{\Gate Z} & \qw            & \gate{\Gate Z} & \gate{\Gate H} & \ctrl{1}       & \qw            & \qw \\
 & \gate{\Gate H} & \gate{\Gate Z} & \qw            & \gate{\Gate Z} & \qw            & \gate{\Gate Z} & \qw            & \gate{\Gate Z} & \gate{\Gate H} & \qw 
\end{quantikz}}=
\scalebox{0.75}{
\begin{quantikz}[column sep=tiny]
 & \qw            & \ctrl{1}       & \qw            & \ctrl{4}       & \qw            & \qw            & \qw            & \qw            & \qw            & \qw            & \qw            & \qw \\
 & \gate{\Gate H} & \gate{\Gate Z} & \gate{\Gate H} & \qw            & \ctrl{3}       & \qw            & \qw            & \qw            & \qw            & \qw            & \qw            & \qw \\
 & \gate{\Gate H} & \qw            & \qw            & \gate{\Gate Z} & \gate{\Gate Z} & \gate{\Gate H} & \ctrl{1}       & \qw            & \ctrl{2}       & \qw            & \qw            & \qw \\
 & \gate{\Gate H} & \qw            & \qw            & \gate{\Gate Z} & \gate{\Gate Z} & \qw            & \gate{\Gate Z} & \gate{\Gate H} & \qw            & \ctrl{1}       & \qw            & \qw \\
 & \gate{\Gate H} & \qw            & \qw            & \gate{\Gate Z} & \gate{\Gate Z} & \qw            & \qw            & \qw            & \gate{\Gate Z} & \gate{\Gate Z} & \gate{\Gate H} & \qw
\end{quantikz}}
\end{equation}
For general $n$, the scheme works exactly the same as in this example. 

The circuit before the splitting on the left-hand side of \cref{eq:Hcommute} can be implemented with $n-1$ $\Gate{GZZ}$ gates each acting on $n, \dots , 2$ qubits respectively, where $n$ is the number of qubits of the directed $\Gate{CX}$ layer, and thus have an encoding cost of $\sum_{i=2}^n i(i-1)/2=n^3/6+\LandauO(n^2)$.
On the right-hand side we need $\lceil \frac{n-1}{2} \rceil$ $\Gate{CZ}$ gates and $\lfloor \frac{n-1}{2} \rfloor$ $\Gate{GZZ}$ gates, resulting in an encoding cost of
\begin{equation}
\label{eq:dir_CX_implementation_cost}
 \bigg\lceil \frac{n-1}{2} \bigg\rceil + \sum_{i=0}^{\lfloor \frac{n-1}{2} \rfloor} \frac{n-2i}{2} (n-2i-1) = \frac{1}{12} n^3 + \LandauO(n^2) \, .
\end{equation}
The first term comes from the encoding cost of the $\Gate{CZ}$ gates, which is one per $\Gate{CZ}$ gate.
The second term comes from combining a $\Gate{GZZ}$ gate on $n$ qubits with a $\Gate{GZZ}$ gate on $n-1$ qubits resulting in a $\Gate{GZZ}$ gate on $n$ qubits.
The encoding cost in \cref{eq:dir_CX_implementation_cost} has the same cubic scaling with $n$ as before combining $\Gate{GZZ}$ gates, but we were able to improve the coefficient from $1/6$ to $1/12$.
Recall from \cref{sec:numerics} that the encoding cost of the naive approach scales only quadratic, so we trade higher encoding cost for faster gates.
Each $\Gate{CZ}$ gate can be implemented as described in \cref{eq:CR_7} for $\alpha=\pi$ by a single $\Gate{ZZ}$ gate and two additional single-qubit $\Gate{R_Z}(\pi/2) \equiv \sqrt{\Gate{Z}} = \Gate{S}$ gates on the control and target qubit, respectively.
Since the $\Gate{S}$ gates do commute with $\Gate{ZZ}$ and $\Gate{GZZ}$ gates but do not commute with Hadamard gates, we can combine most of the $\Gate{S}$ gates to an $\Gate{S}^k$ gate, where $k \in \{0,1,2,3\}$.
There are two Hadamard gates on $n-1$ qubits and none on the first qubit, therefore we have $2(n-1)+1$ $\Gate{S}^k$ gates.
To summarize, in addition to $\lceil \frac{n-1}{2} \rceil$ $\Gate{CZ}$ gates and $\lfloor \frac{n-1}{2} \rfloor$ $\Gate{GZZ}$ gates we need $2n-1$ $\Gate{S}^k$ gates to implement a fully directed $\Gate{CX}$ layer on $n$ qubits.

Our method optimizes both the encoding cost and the total gate time of the directed $\Gate{CX}$ layer.
We expect to lose time-optimality for a fully directed $\Gate{CX}$ layer if we split any $\Gate{GZZ}$ gate into smaller pieces and therefore reduce the encoding cost.
In the extreme case, we would split all $\Gate{GZZ}$ gates into two-qubit $\Gate{ZZ}$ gates and thereby would end up with the naive approach explained in \cref{sec:numerics}.
Our method combines parts of the $i$-th column of $T_{\Gate{CZ}}$ with the $i+1$-th column to a $\Gate{GZZ}$ gate on $n-i+1$ qubits. 
One could think that moving parts of the larger $\Gate{GZZ}$ gate farther to the right might improve the encoding cost.
But the support of the part left behind, and therefore the encoding cost increases the farther we push the other part to the right.
In the extreme case of pushing all parts as far as possible to the right we end up with a ``transposed'' table where the lower triangular part is zero, and we did not achieve any reduction of the encoding cost.
\begin{algorithm}[H]
	\caption{Moving Hadamard gates.}\label{alg:hadamard}
	\raggedright\textbf{Input:} $T_{\Gate{CZ}}$
	\begin{algorithmic}[0]
		\State Initialize $T_{\Gate H}$ as in \cref{eq:hadamard}
		\State{$h_\mathrm{max} \gets 0$} \Comment{Position of the rightmost $\Gate H$ that has already been moved left}
		\For{$i = 1, \dots, n-1$}
		\State{$T_{\Gate{H}}[i, i] \gets 0$} \Comment{$\Gate{H}$ on $i$-th qubit leaves its position}
		\State{$c \gets \max\{ j = 0, \dots, i-1 | T_{\Gate{CZ}}[i, j] = 1 \} + 1$} \Comment{Find position directly after first $\Gate{CZ}$ to the left}
		\If{c = \verbat{NaN}} \Comment{No $\Gate{CZ}$ found ($\max$ was taken on an empty set)}
		\State{$T_{\Gate{H}}[i, 0] \gets 0$} \Comment{Cancel $\Gate{H}$ in the first layer}
		\ElsIf{$c = i$} \Comment{Unable to move left, attempt to move right}
		\If{$\{ j = i, \dots, n-1 | T_{\Gate{CZ}}[i, j] = 1 \} = \emptyset$} \Comment{If no $\Gate{CZ}$ is to the right...}
		\State{$T_{\Gate{H}}[i, n-1] \gets 1$} \Comment{...move $\Gate{H}$ to the last layer, ...}
		\Else
		\State{$T_{\Gate{H}}[i, i] \gets 1$} \Comment{...otherwise remain in place.}
		\EndIf
		\Else
		\State{$h_\mathrm{max} \gets \max\{ h_\mathrm{max}, c \}$} \Comment{Find the more restrictive condition (either $\Gate{CZ}$ on current qubit or $\Gate H$ on previous)}
		\State{$T_{\Gate{H}}[i, h_\mathrm{max}] \gets 1$} \Comment{Move $\Gate{H}$ to the target layer}
		\EndIf
		\EndFor
	\end{algorithmic}
	\raggedright\textbf{Output:} $T_{\Gate H}$
\end{algorithm}

\paragraph{Arbitrary directed $\Gate{CX}$ layer.}
Until now, we considered only fully directed $\Gate{CX}$ layers which in practice is a very special case.
More common are arbitrary directed $\Gate{CX}$ layers which corresponds to a $\Gate{GCX}(B)$ with $B \in \FF_2^{n \times n}$ still being lower/upper triangular but more sparse.
This sparsity, which translates to the table $T_{\Gate{CZ}}$, can be used to further reduce the encoding cost.
One might be able to move the Hadamard gates to the left/right which changes the support of the $\Gate{GZZ}$ gates.
We explain three different scenarios which can be easily verified by the simple structure of $T_{\Gate{CZ}}$ and $T_{\Gate H}$.
Obviously, the two Hadamard gates on any qubit $i>1$ cancel if they are separated only by identities. 
Similarly, if there is no control or target of a $\Gate{CZ}$ gate to the right of a Hadamard gate $\Gate H$, one can move $\Gate H$ to the last position of the circuit (without cancellation).
Otherwise, it might still be possible to push the Hadamard gate on qubit $i$ to the left until it hits a target of some $\Gate{CZ}$ gate.
However, one needs to be careful to not disrupt any previously generated $\Gate{GZZ}$ gates with qubit $i$ in its support.
This can be accomplished by disallowing the Hadamard gate to move across other Hadamard layers.
\Cref{alg:hadamard} implements these moves of the Hadamard gates by updating the table $T_{\Gate H}$ accordingly.

\Cref{alg:CZ} pools multiple columns of $T_{\Gate{CZ}}$ together into a single $\Gate{GZZ}$ gate similar as for fully directed $\Gate{CX}$ layers.
This takes into account the sparsity of $T_{\Gate{CZ}}$ and the positions of the Hadamard gates, i.e.\ $T_{\Gate H}$ generated by \cref{alg:hadamard}.
As for fully directed $\Gate{CX}$ layers, we can split a column of $T_{\Gate{CZ}}$ into a two-qubit $\Gate{CZ}$ gate and a $\Gate{GZZ}$ gate if necessary.
The algorithm starts from the left and tries to move columns of $T_{\Gate{CZ}}$, or parts of it, to the right.
The following cases occur:
If a column of $T_{\Gate{CZ}}$ has a Hadamard gate left to the first nonzero entry, i.e.\ the control of the fan-out gate, then it can only be moved to the right.
If a column of $T_{\Gate{CZ}}$ has no Hadamard gate left to the first nonzero entry, then this column can be combined with the previous column.
If a column has one Hadamard gate to the right, then the column can be split into a two-qubit $\Gate{CZ}$ gate whose target is the qubit on which the Hadamard gate acts, and a $\Gate{GZZ}$ gate which can be moved to the right.
Due to the structure of $T_{\Gate{CZ}}$ and $T_{\Gate H}$ there is never a Hadamard gate to the right of the first nonzero entry, i.e.\ the control of the fan-out gate, of a column.

\begin{algorithm}[H]
\caption{Moving $\Gate{CZ}$ gates.}\label{alg:CZ}
\raggedright\textbf{Input:} $T_{\Gate{CZ}}$, $T_{\Gate H}$
\begin{algorithmic}[0]
\For{$i = 1, \dots, n-1$}
    \If{$T_{\Gate H}[:, i] \wedge T_{\Gate{CZ}}[:,i]$ has at least one $1$} \Comment{$T_{\Gate{CZ}}[:,i]$ has Hadamard gates to the left}
        \If{$T_{\Gate H}[:, i+1]$ has exactly one $1$} \Comment{$T_{\Gate{CZ}}[:,i]$ has one Hadamard gate to the right}
            \State Split $T_{\Gate{CZ}}[:,i]$ into:
            \State $T_{\Gate{CZ}}^1 \coloneqq T_{\Gate H}[:, i+1] \oplus e_i$ and \Comment{$e_i = [0, \dots , 0,1,0, \dots , 0 ]^T$ with $1$ at the $i$th position.}
            \State $T_{\Gate{CZ}}^2 \coloneqq T_{\Gate H}[:, i+1] \oplus T_{\Gate{CZ}}[:,i]$
            \If{$T_{\Gate{CZ}}^2$ has at least two $1$'s} \Comment{Check that $T_{\Gate{CZ}}^2$ is not trivial}
                \State Move $T_{\Gate{CZ}}^2$ to the right.
            \EndIf
        \EndIf
    \Else \Comment{$T_{\Gate{CZ}}[:,i]$ has no Hadamard gates to the left}
        \State Move $T_{\Gate{CZ}}[:,i]$ to the left.
    \EndIf
\EndFor
\end{algorithmic}
\raggedright\textbf{Output:} $T_{\Gate{CZ}}$ (modified)
\end{algorithm}

\begin{example}
Consider the following circuit implementing an arbitrary directed $\Gate{CX}$ layer on $n=5$ qubits where the first two fan-out gates do not target qubits $2$ and $3$, respectively:
\begin{equation*}
\begin{tikzpicture}[baseline=(X.base)]
\node[scale=.65, ampersand replacement=\&] (X)	{
\begin{tikzcd}
 \& \qw            \& \ctrl{4}       \& \qw            \& \qw            \& \qw            \& \qw            \& \qw            \& \qw            \& \qw            \& \qw \\
 \& \gate{\Gate H} \& \qw            \& \gate{\Gate H} \& \ctrl{3}       \& \qw            \& \qw            \& \qw            \& \qw            \& \qw            \& \qw \\
 \& \gate{\Gate H} \& \gate{\Gate Z} \& \qw            \& \qw            \& \gate{\Gate H} \& \ctrl{2}       \& \qw            \& \qw            \& \qw            \& \qw \\
 \& \gate{\Gate H} \& \gate{\Gate Z} \& \qw            \& \gate{\Gate Z} \& \qw            \& \gate{\Gate Z} \& \gate{\Gate H} \& \ctrl{1}       \& \qw            \& \qw \\
 \& \gate{\Gate H} \& \gate{\Gate Z} \& \qw            \& \gate{\Gate Z} \& \qw            \& \gate{\Gate Z} \& \qw            \& \gate{\Gate Z} \& \gate{\Gate H} \& \qw 
\end{tikzcd}
};
\end{tikzpicture}\stackrel{\text{\cref{alg:hadamard}}}{\Longrightarrow}
\begin{tikzpicture}[baseline=(X.base)]
\node[scale=.65, ampersand replacement=\&] (X)	{
\begin{tikzcd}
 \& \qw            \& \ctrl{4}       \& \qw            \& \qw            \& \qw            \& \qw            \& \qw            \& \qw            \& \qw \\
 \& \qw            \& \qw            \& \qw            \& \ctrl{3}       \& \qw            \& \qw            \& \qw            \& \qw            \& \qw \\
 \& \gate{\Gate H} \& \gate{\Gate Z} \& \gate{\Gate H} \& \qw            \& \ctrl{2}       \& \qw            \& \qw            \& \qw            \& \qw \\
 \& \gate{\Gate H} \& \gate{\Gate Z} \& \qw            \& \gate{\Gate Z} \& \gate{\Gate Z} \& \gate{\Gate H} \& \ctrl{1}       \& \qw            \& \qw \\
 \& \gate{\Gate H} \& \gate{\Gate Z} \& \qw            \& \gate{\Gate Z} \& \gate{\Gate Z} \& \qw            \& \gate{\Gate Z} \& \gate{\Gate H} \& \qw 
\end{tikzcd}
};
\end{tikzpicture}
\end{equation*}
\cref{alg:hadamard} cancels two Hadamard gates on the second qubit which is not targeted by the first fan-out gate, i.e.\ $T_{\Gate{CZ}}[2,1] = 0$. On the third qubit, \cref{alg:hadamard} moves the rightmost Hadamard gate one layer to the left since $T_{\Gate{CZ}}[3,2] = 0$ but $T_{\Gate{CZ}}[3,1] = 1$.

\cref{alg:CZ} takes the output of \cref{alg:hadamard} and splits $\Gate{GCZ}_1$ into $\Gate{CZ}_1$ and $\Gate{GCZ}^\prime_1$ since it has Hadamard gates to the left and right.
Then it pools $\Gate{GCZ}^\prime_1$, $\Gate{GCZ}_2$ and $\Gate{GCZ}_3$ together into a single gate $\Gate{GCZ}_5$:
\begin{equation*}
\begin{tikzpicture}[baseline=(X.base)]
\node[scale=.65, ampersand replacement=\&] (X)	{
\begin{tikzcd}
 \& \qw\slice{}    \& \ctrl{4}\slice{$\Gate{GCZ}_1 \qquad\quad $} \& \qw\slice{}            \& \qw\slice{$\Gate{GCZ}_2 \qquad\quad $}      \& \qw\slice{$\Gate{GCZ}_3 \qquad\quad $} \& \qw\slice{} \& \qw\slice{$\Gate{GCZ}_4 \qquad\quad $}            \& \qw            \& \qw \\
 \& \qw            \& \qw            \& \qw            \& \ctrl{3}       \& \qw            \& \qw            \& \qw            \& \qw            \& \qw \\
 \& \gate{\Gate H} \& \gate{\Gate Z} \& \gate{\Gate H} \& \qw            \& \ctrl{2}       \& \qw            \& \qw            \& \qw            \& \qw \\
 \& \gate{\Gate H} \& \gate{\Gate Z} \& \qw            \& \gate{\Gate Z} \& \gate{\Gate Z} \& \gate{\Gate H} \& \ctrl{1}       \& \qw            \& \qw \\
 \& \gate{\Gate H} \& \gate{\Gate Z} \& \qw            \& \gate{\Gate Z} \& \gate{\Gate Z} \& \qw            \& \gate{\Gate Z} \& \gate{\Gate H} \& \qw 
\end{tikzcd}
};
\end{tikzpicture}\stackrel{\text{\cref{alg:CZ}}}{\Longrightarrow}
\begin{tikzpicture}[baseline=(X.base)]
\node[scale=.65, ampersand replacement=\&] (X)	{
\begin{tikzcd}
 \& \qw\slice{}    \& \ctrl{2}\slice{$\Gate{CZ}_1 \qquad\quad $} \& \qw\slice{}            \& \ctrl{4}            \& \qw            \& \qw\slice{$\Gate{GCZ}_5 \qquad\qquad\qquad\qquad $} \& \qw\slice{}            \& \qw\slice{$\Gate{CZ}_2 \qquad\quad $}            \& \qw            \& \qw \\
 \& \qw            \& \qw            \& \qw            \& \qw            \& \ctrl{3}       \& \qw            \& \qw            \& \qw            \& \qw            \& \qw \\
 \& \gate{\Gate H} \& \gate{\Gate Z} \& \gate{\Gate H} \& \qw            \& \qw            \& \ctrl{2}       \& \qw            \& \qw            \& \qw            \& \qw \\
 \& \gate{\Gate H} \& \qw            \& \qw            \& \gate{\Gate Z} \& \gate{\Gate Z} \& \gate{\Gate Z} \& \gate{\Gate H} \& \ctrl{1}       \& \qw            \& \qw \\
 \& \gate{\Gate H} \& \qw            \& \qw            \& \gate{\Gate Z} \& \gate{\Gate Z} \& \gate{\Gate Z} \& \qw            \& \gate{\Gate Z} \& \gate{\Gate H} \& \qw 
\end{tikzcd}
};
\end{tikzpicture}
\end{equation*}
Note that any $\Gate{GCZ}$ gate is equivalent to a $\Gate{GZZ}$ up to single-qubit $\Gate{R_Z}$ rotations.
The total encoding cost of the compiled circuit is $2+n/2(n-1) = 12$ whereas the original circuit has the encoding cost $6+3+3+1 = 13$.
\end{example}

In this section we discussed how to implement the entangling operations of a Clifford unitary.
We only need one $\Gate{GZZ}$ gate to implement one fan-out gate, as one can see in \cref{eq:CZlayer}, \eqref{eq:fanoutParity} and Hadamard gate commutation, instead of two multi-qubit gates used in the compilation schemes in Refs.~\cite{maslov_use_2018, VanDeWetering20ConstructingQuantumCircuits}.
We further showed that we only require $n+1$ $\Gate{GZZ}$ and $n$ two-qubit $\Gate{CZ}$ gates to implement a Clifford unitary on $n$ qubits.
Due to the flexibility of our $\Gate{GZZ}$ gate, we further optimized the implementation of a directed $\Gate{CX}$ layer to reduce the encoding cost.

The compilation of directed CX circuits via Algorithms \ref{alg:hadamard} and \ref{alg:CZ} is available as a Python implementation on GitHub \cite{GitHub}.
\begin{figure*}[t]
\begin{center}
  \includegraphics[width=0.47\linewidth]{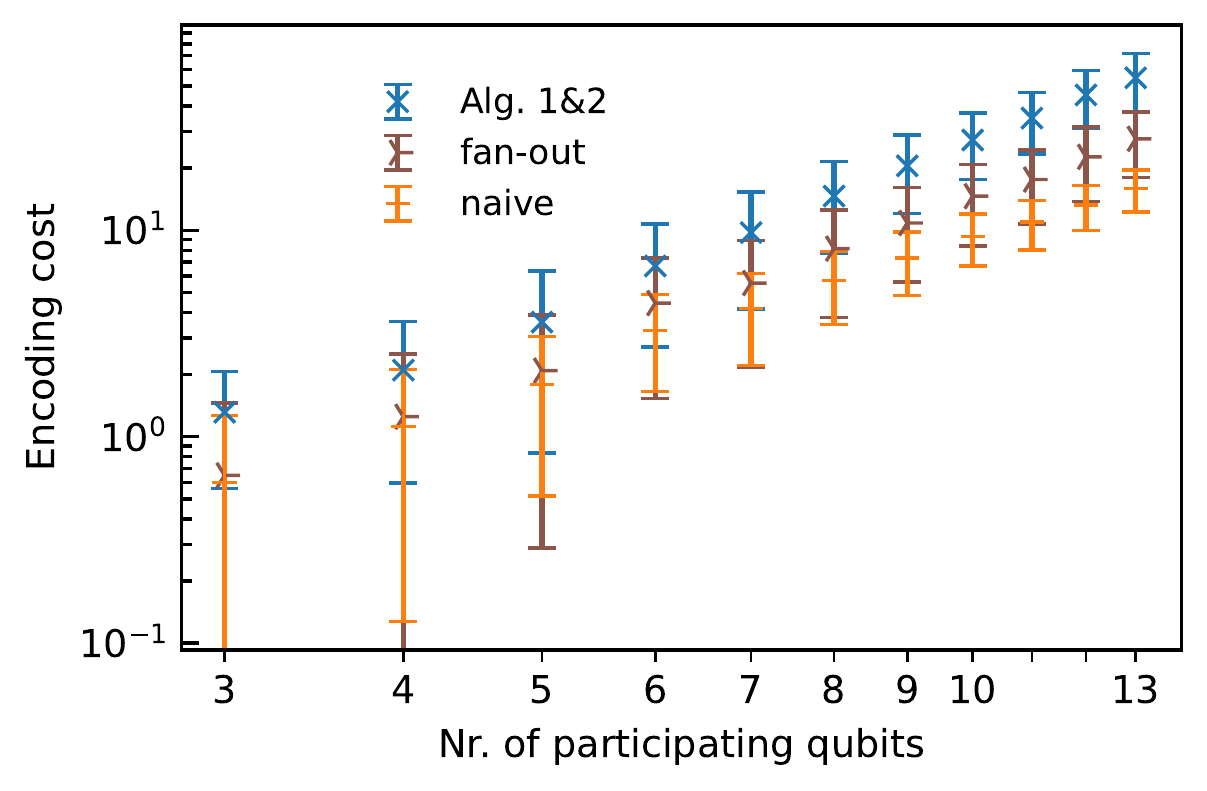}\hfill
  \includegraphics[width=0.47\linewidth]{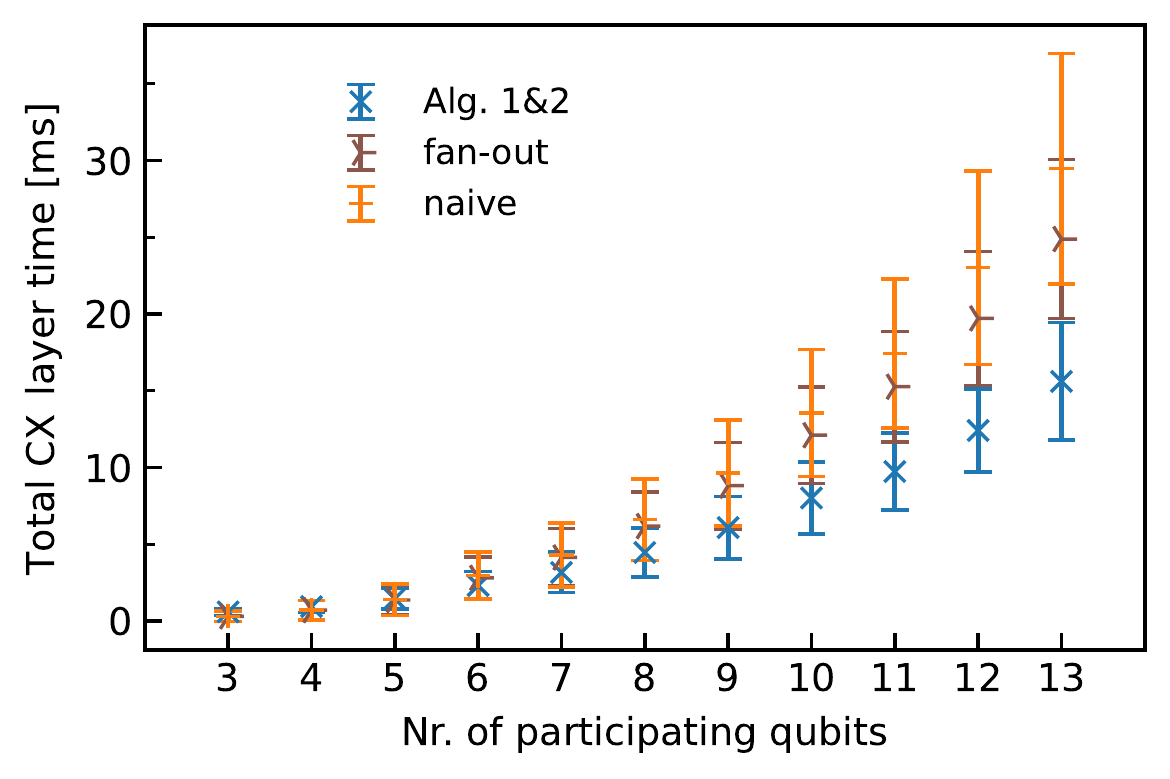}
  \\
  \includegraphics[width=0.47\linewidth]{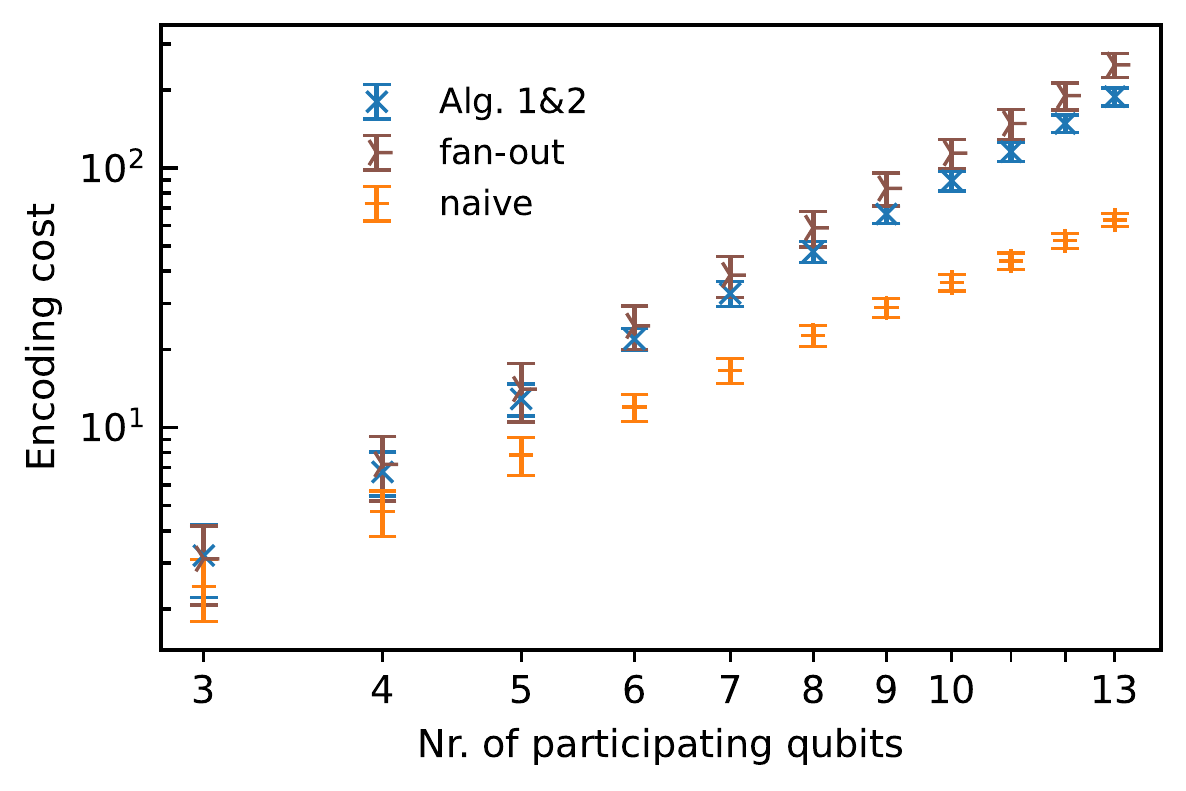}\hfill
  \includegraphics[width=0.47\linewidth]{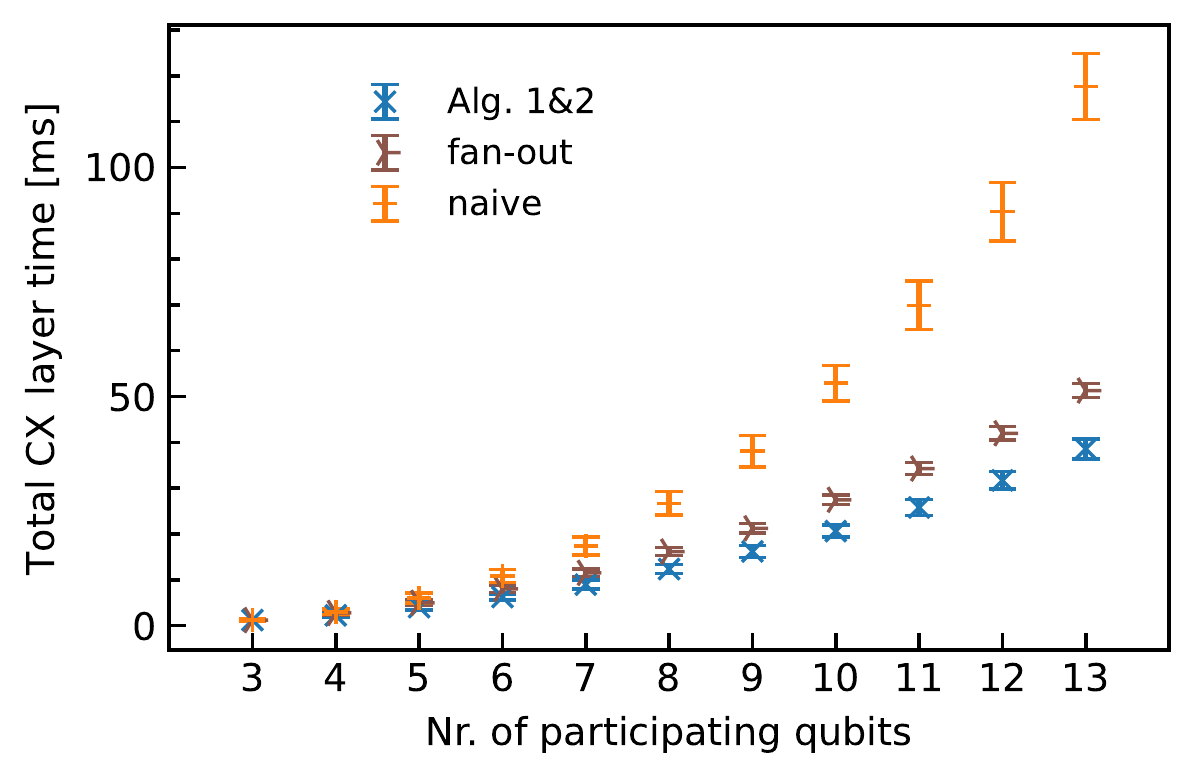}
  \caption{\label{fig:dirCX}
  Comparing the performances of \cref{alg:hadamard,alg:CZ}, fan-out and the naive approach for the implementation of a random directed $\Gate{CX}$ layer.
  The error bars show the variance over $100$ samples.
  The numerical values are obtained using the hardware specific parameters from \cref{sec:numerics}.
  \textbf{Top and bottom:} Sparse and dense $T_{\Gate{CZ}}$ with the probability $0.2$ and $0.8$ for picking a ``$1$'', respectively.
  \textbf{Left:} The encoding cost of a random directed $\Gate{CX}$ layer on $n$ qubits.
  In the top left the large variance in the encoding cost is due to the sparsity of $T_{\Gate{CZ}}$.
  In contrast, for dense $T_{\Gate{CZ}}$ (bottom left) \cref{alg:hadamard,alg:CZ} lead to a reduced encoding cost compared to the fan-out approach.
  \textbf{Right:} Neglecting local gates, for both sparse and dense $T_{\Gate{CZ}}$ the \cref{alg:hadamard,alg:CZ} yields the lowest total $\Gate{CX}$ layer time.
}
\end{center}
\end{figure*}

\subsubsection{Numerical results for the directed CX layer}
\label{sec:dir_CX_numerics}
We demonstrate the performance of \cref{alg:hadamard,alg:CZ} for compiling an arbitrary directed $\Gate{CX}$ layer.
Since the $\Gate{CX}$ layer is the most costly gate layer, in the Bruhat decomposition we only present the compilation of this layer.

Consider a directed $\Gate{CX}$ layer with randomly chosen entries of the lower triangular part of $T_{\Gate{CZ}}$, and $T_{\Gate H}$ as in \cref{eq:hadamard}.
We distinguish between the naive implementation, the implementation of the fan-out gates directly as a $\Gate{GZZ}$ gate (and local gates) and the application of \cref{alg:hadamard,alg:CZ} to $T_{\Gate{CZ}}$ and $T_{\Gate H}$.
Like in \cref{sec:numerics}, the naive implementation corresponds to the sequential execution of two-qubit $\Gate{ZZ}$ gates.
Therefore, the encoding cost and the total gate time for the naive approach is $\sum_{i < j} T_{\Gate{CZ}}[i,j]$ and $\sum_{i < j} T_{\Gate{CZ}}[i,j]/J_{ij}$, respectively.

Implementing the fan-out gate directly associates one $\Gate{GZZ}$ gate with each column of $T_{\Gate{CZ}}$.
We further take advantage of the sparsity of $T_{\Gate{CZ}}$.
If we consider the fan-out gate represented by the $i$-th column of $T_{\Gate{CZ}}$, then for all $j$ with $T_{\Gate{CZ}}[i,j] = 0$ we can exclude all the $j$-th qubits from the participation in that $\Gate{GZZ}$ gate, reducing the encoding cost.
Thus, for the fan-out gate implementation we have $n-1$ symmetric matrices $A$ with ones in the first row/column and zeros everywhere else.
The total encoding cost for the fan-out approach is therefore the sum of the encoding costs for the $n-1$ $\Gate{GZZ}(A)$ gates. The same holds for the total $\Gate{CX}$ layer time as the sum of the total $\Gate{GZZ}$ times.

\Cref{alg:hadamard,alg:CZ} take a different advantage of the sparsity of $T_{\Gate{CZ}}$ by commuting Hadamard gates and combining parts of multiple fan-out gates to one $\Gate{GZZ}$ gate.
As stated above, we need at most $\lfloor \frac{n-1}{2} \rfloor$ $\Gate{GZZ}$ gates.
If $k$ fan-out gates are combined, the resulting $\Gate{GZZ}$ gate is characterized by a symmetric matrix $A$ where the first $k$ rows (and also columns) can have non-zero values.
The total encoding cost for the \cref{alg:hadamard,alg:CZ} is the sum of the encoding costs of the $\lfloor \frac{n-1}{2} \rfloor$ $\Gate{GZZ}$ gates plus the encoding cost of the $\lceil \frac{n-1}{2} \rceil$ $\Gate{ZZ}$ gates.
The same holds for the total $\Gate{CX}$ layer time as the sum of the total $\Gate{GZZ}$ times and the times for the $\lceil \frac{n-1}{2} \rceil$ $\Gate{ZZ}$ gates.
Note that we neglect local gates in our considerations.

\Cref{fig:dirCX} shows that for a dense directed $\Gate{CX}$ layer \cref{alg:hadamard,alg:CZ} have a significant advantage in the total $\Gate{CX}$ layer time over the naive and the fan-out implementation.
Also, the encoding cost is reduced compared to the fan-out implementation.
For sparse $T_{\Gate{CZ}}$ the advantage is still visible in the total $\Gate{CX}$ layer time, but the difference between the approaches is less substantial.

\subsection{Quantum Fourier transform}
\label{sec:QFT}
The quantum Fourier transform is an essential ingredient in many quantum algorithms.
To define the corresponding unitary operator on an $n$-qubit register, we identify the elements of the computational basis $\ket{x_{1},\dots,x_{n}}$ for $x_j \in \FF_2$ with integers in binary representation, i.e.\ $x = \sum_{j=1}^{n} x_j 2^{n-j}$.
The \ac{QFT} is then given as
\begin{equation}
\label{eq:def-qft}
 \mathrm{QFT}\ket{x}
 = \frac{1}{2^{n/2}} \sum_{y=0}^{2^n-1} e^{2\pi \i xy 2^{-n}} \ket{y}
 = \frac{1}{2^{n/2}} \bigotimes_{j=1}^n\left( \ket{0} +  \e^{2\pi \i x 2^{-j}} \ket{1}\right).
\end{equation}
The latter form immediately leads to the efficient quantum circuit in \cref{fig:qft} which uses $n(n-1)/2$ controlled $\Gate{R_Z}$-rotations with angles $2\pi/2^j$ for $j=2,\dots,n$ and $n$ Hadamard gates.
For convenience, we introduce the shorthand notation $\Gate{R}_j \coloneqq \Gate{R_Z}(2\pi/2^j)$, in particular $\Gate{Z}\coloneqq\Gate{R_1}$, $\Gate{S}\coloneqq\Gate{R_2}=\sqrt{ \Gate{Z}}$, and $\Gate{T}\coloneqq\Gate{R_3} = \sqrt[4]{\Gate{Z}} $.
In the \ac{QFT} circuit in \cref{fig:qft} we can collect for each $j<n$ the subsequent controlled $\Gate{R}_{n-j}, \dots , \Gate{R}_1$ to a (multi-qubit) $\Gate{GCR_Z} (A)$ gate specified by the $(n-j+1) \times (n-j+1)$ symmetric matrix
\begin{equation}
 A = 2 \pi \begin{pmatrix}
 0        & 2^{-1} & \dots & 2^{-n+j}  \\
 2^{-1}   & 0       &  \dots     & 0  \\
 \vdots   & \vdots       & \ddots &   \\
 2^{-n+j} &  0      &       &  0 \\
\end{pmatrix} \, .
\end{equation}

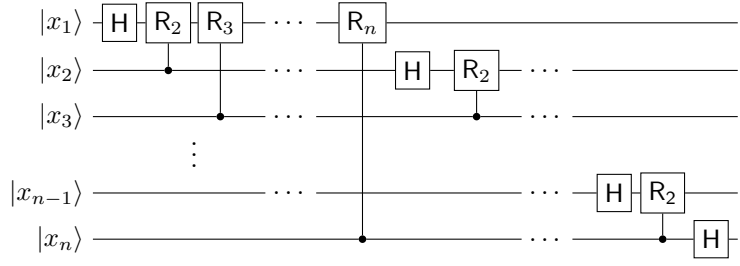
\begin{figure}
\begin{center}
\begin{tikzpicture}
 \begin{yquant}
  qubit {$\ket{x_{1}}$} q;
  qubit {$\ket{x_{2}}$} q[+1];
  qubit {$\ket{x_{3}}$} q[+1];
  [name=ypos]
  nobit n;
  qubit {$\ket{x_{n-1}}$} q[+1];
  qubit {$\ket{x_{n}}$} q[+1];

  box {$\Gate{H}$} q[0];
  [name=left]
  box {$\Gate{R}_2$} q[0] | q[1];
  [name=right]
  box {$\Gate{R}_3$} q[0] | q[2];
  hspace {2mm} -;
  text {$\dots$} q[0-3];
  hspace {2mm} -;
  box {$\Gate{R}_{n}$} q[0] | q[4];

  box {$\Gate{H}$} q[1];
  hspace {2mm} -;
  box {$\Gate{R}_2$} q[1] | q[2];
  hspace {2mm} -;
  text {$\dots$} q[1-4];
  hspace {2mm} -;
  box {$\Gate{H}$} q[3];
  box {$\Gate{R_2}$} q[3] | q[4];

  box {$\Gate{H}$} q[4];
 \end{yquant}
 \path (left |- ypos-0) -- (right |- ypos-0) node[midway,yshift=1mm] {$\vdots$};
\end{tikzpicture}
\end{center}
\caption{The quantum Fourier transform (with reversed order of the output qubits, i.e. without the swapping gates at the end).}
\label{fig:qft}
\end{figure}

\begin{figure*}[t]
	\begin{center}
		\begin{tabular}{cc}
			\includegraphics[width=0.45\linewidth]{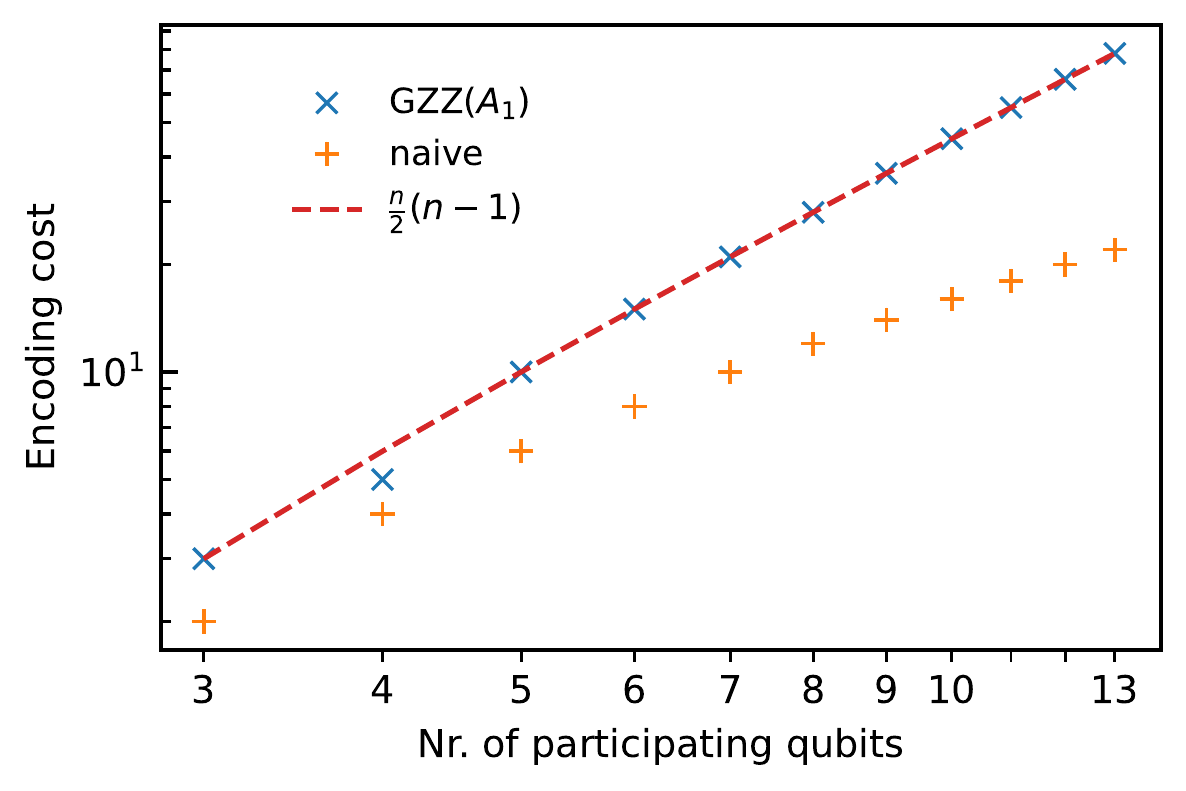}
			&
			\includegraphics[width=0.45\linewidth]{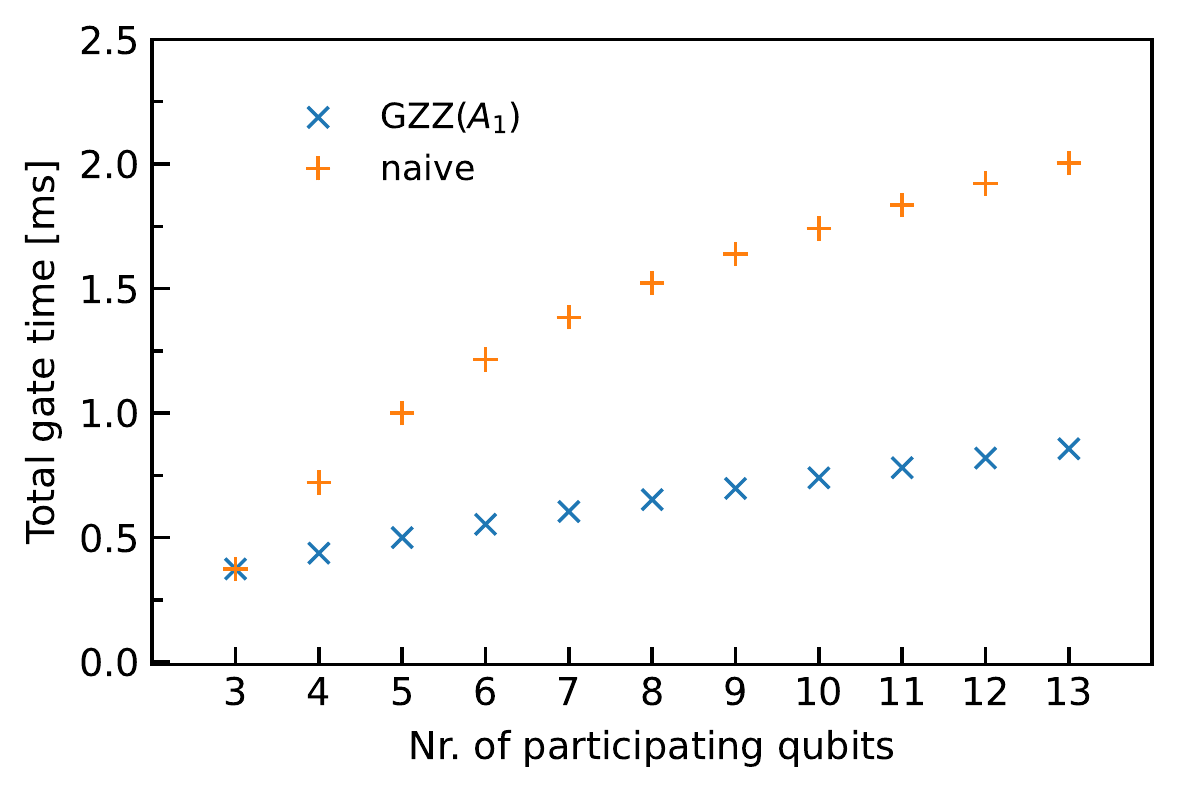}
		\end{tabular}
		\caption{\label{fig:numQFT} \textbf{Left: }The encoding cost, i.e.\ the different encodings,  we need to implement $\Gate{GZZ} (A_1)$ on $n$ qubits.
			For the naive approach the encoding cost only increases with $2(n-2)$, since that is the number of non-zero entries in $A_1$.
			\textbf{Right: }Due to the exponentially small non-zero entries in $A_1$ even the naive approach seems to have an asymptotic linear scaling of the total gate time.
			As before, the numerical values are obtained using the hardware specific parameters from \cref{sec:numerics}.
	  }
	\end{center}
\end{figure*}
Since our compilation scheme for the fully directed $\Gate{CX}$ layer in \cref{sec:dirCX} is agnostic of the rotation angles, up to local $\Gate{R_Z}$ rotations, we can apply it directly to the \ac{QFT} circuit.
This yields the compiled \ac{QFT} circuit:
\begin{equation*}
\scalebox{0.80}{
\begin{tikzpicture}
   \tikzset{
      operator/.append style={ minimum height=2em }
  }
 \begin{yquant}
  qubit {} q[4];
  [name=wave, register/minimum height=5mm]
  nobit wave;
  qubit {} q[+2];

  [name=left]
  box {$\Gate{H}$} q[0];
  box {$\Gate{S}$} q[0] | q[1];

  box {$\Gate{H}$} q[1];
  box {$\Gate{GCR_Z}(A_1)$} (-);

  box {$\Gate{H}$} q[2];
  [name=left]
  box {$\Gate{S}$} q[2] | q[3];

  box {$\Gate{H}$} q[3];
  box {$\Gate{GCR_Z}(A_3)$\\\\\\\\} (q[2]-q[5]);
  hspace {2mm} -;
  text {$\dots$} q[2-5];
  hspace {2mm} -;
  box {$\Gate{S}$} q[4] | q[5];

  box {$\Gate{H}$} q[5];
 \end{yquant}
 \node[wave, fit=(wave) (current bounding box.east |- wave), inner ysep=.7pt, inner xsep=0pt] {};
\end{tikzpicture}
}
\end{equation*}
Note that we are able to achieve this form without any Hadamard gate obstructing the required movement of $\Gate{CR_Z}$~gates.
The combination of diagonal multi-qubit gates across non-diagonal local gates has been carried out for small systems e.g. in Refs.~\cite{piltz2016versatile, ivanov2015simplified}.
However, this approach requires numerics in unitary groups beyond $\mathsf{SU}(2)$, which we avoid to ensure scalability (w.r.t.\ the Hilbert space dimension).
We also do not address the compilation of local $\Gate{R_Z}$ rotations here, see e.g.\ Ref.~\cite{kliuchnikov_shorter_2022} which covers that topic.

After our compilation scheme we have $\lceil \frac{n-1}{2} \rceil$ $\Gate{CS}$ gates (controlled-$\sqrt{ \Gate{Z}}$ gates) and $\lfloor \frac{n-1}{2} \rfloor$ $\Gate{GCR_Z}$ gates characterized by the symmetric matrix
\begin{equation}
\label{eq:A_k}
 A_j \coloneqq 2 \pi \begin{pmatrix}
 0        & 0          & 2^{-2} & \cdots & 2^{-n+j} \\
 0        & 0          & 2^{-1} & \cdots & 2^{-n+j+1} \\
 2^{-2}   & 2^{-1}     & 0      & \cdots & 0 \\
 \vdots   & \vdots     & \vdots & \ddots &  \\
 2^{-n+j} & 2^{-n+j+1} & 0      &        & 0 \\
\end{pmatrix} \, .
\end{equation}
A $\Gate{GCR_Z}(A_j)$ gate can be implemented as described in \cref{sec:gate_notation} using a $\Gate{GZZ}(A_j)$ and additional single-qubit $\Gate{R_Z}$ gates on the qubits they act on.
For the \ac{QFT}, we can push the resulting $\Gate{R_Z}$ gates on the target qubits to the end of circuit, and likewise we can push the ones on the control qubits to the beginning.

In conclusion, we can apply the fully directed $\Gate{CX}$ layer compiling scheme to implement a \ac{QFT} circuit with the same amount of $\GZZ$ gates and the same encoding cost.
The most difficult part in the \ac{QFT} are the local $\Gate{R_Z}$ rotations with exponentially small angles.
Due to the small values in the matrices $A_j$ one can also expect a practical issue in the implementation with too small $\lambda_{\vec m}$, see \cref{sec:limits}.

\subsubsection*{Numerical results for the \texorpdfstring{\ac{QFT}}{QFT}}
\label{sec:QFT_numerics}
We provide the numerical results for the performance of the $\Gate{GZZ} (A_j)$ gate with $A_j$ from \cref{eq:A_k} in \cref{fig:numQFT}.
It is a priori unclear how the exponentially small entries in $A_j$ effect the time spent in each encoding $ \lambda_{\vec m}$.
We set $j = 1$ since $\Gate{GZZ} (A_1)$ acts on $n$ qubits and is therefore the most costly $\Gate{GZZ}$ gate in the \ac{QFT}.
As in \cref{sec:numerics} we compare the result of the $\Gate{GZZ}$ gate implementation via the \ac{LP}~\eqref{eq:LP1} against the naive approach, i.e.\ the sequential implementation of two-qubit controlled $\Gate{R}_j$ gates.
As before, we neglect the finite recoding time, i.e.\ the time for executing the $\Gate{X}$ gate layers, and only consider the time needed to execute a $\Gate{GZZ} (A_1)$ gate or the naive approach.
Note that in \cref{fig:numQFT} we consider only a single $\Gate{GZZ} (A_1)$ gate, while the total \ac{QFT} circuit consists of $\lfloor \frac{n-1}{2} \rfloor$ $\Gate{GZZ} (A_j)$ gates.

Due to the exponentially small entries in $A_1$ we might in practice run into the problem of too small $\lambda_{\vec m}$ for a \ac{QFT} on a moderate amount of qubits.

\subsection{Circuit reduction for quantum chemistry applications}
\label{sec:chemistry}
Simulating molecular dynamics is probably one of the main applications for quantum computations.
Molecular dynamics are governed by the Coulomb Hamiltonian, consisting of the kinetic energy terms and of the Coulomb interactions between the electrons and the nuclei.
It is common to neglect the kinetic terms of the nuclei which is called ``Born-Oppenheimer approximation'' \cite{BO}.

The determination of approximate ground and excited eigenstates of the remaining electronic Hamiltonian is a hard task even on a quantum device \cite{Kempe04TheComplexityOf}.
Since simulating the time-dependent Schrödinger equation is more natural on a quantum computer there has been much effort to approximate eigenstates by solving the time-dependent Schrödinger equations \cite{parrish_quantum_filter_2019,klymko2021real,stair_quantum_krylov_2020}.
The corresponding time evolution can be approximated by factorizing the time dependent interactions into $m-1$ layers \cite{Cohn21QuantumFilter}:
\begin{equation}
 U_t \approx U_{Ext} \Gate{\hat G}(\varphi_{m}) \left[ \prod_{k=1}^{m-1} \Gate{GZZ}(A_k) \Gate{ \hat G}(\varphi_{k}) \right] \Gate{\hat R_Z}(\theta_{0,1}) \Gate{\hat G}(\varphi_{0}) \, ,
\end{equation}
where the hat denotes layers of the gates and $\Gate{G}(\varphi)$ is a Givens rotation on two qubits.
$U_{Ext}$ represents the constant nucleus-nucleus Coulomb interaction and corresponds to a global phase which we henceforth omit.
$A_k \in \HTL (\RR^n)$ denotes the total coupling matrix, characterizing the $\Gate{GZZ}$ gate.
The right-hand side can then be represented as the circuit
\begin{equation}
\label{eq:qchemistry}
\scalebox{0.85}{
\begin{tikzpicture}
 \begin{yquant}
  qubit {} q[2];
  [name=ypos]
  nobit n;
  qubit {} q[+2];

  box {$\Gate{G}(\varphi_0)$} (q[0],q[1]);
  box {$\Gate{G}(\varphi_0)$} (q[2],q[3]);
  [name=left]
  box {$\Gate{R_Z}(\theta_0)$} q[0];
  [name=right]
  box {$\Gate{R_Z}(\theta_1)$} q[1];
  box {$\Gate{R_Z}(\theta_0)$} q[2];
  box {$\Gate{R_Z}(\theta_1)$} q[3];

  align q;
  box {$\Gate{G}(\varphi_1)$} (q[0],q[1]);
  box {$\Gate{G}(\varphi_1)$} (q[2],q[3]);

  box {$\Gate{GZZ}(A_1)$} (-);
  [name=left2]
  box {$\Gate{G}(\varphi_2)$} (q[0],q[1]);
  [name=right2]
  box {$\Gate{G}(\varphi_2)$} (q[2],q[3]);

  box {$\Gate{GZZ}(A_2)$} (-);
  output {$\dots$} q;

 \end{yquant}
 \path (left |- ypos-0) -- (right |- ypos-0) node[midway,yshift=1mm] {$\vdots$};
 \path (left2 |- ypos-0) -- (right2 |- ypos-0) node[midway,yshift=1mm] {$\vdots$};
\end{tikzpicture}
} \, .
\end{equation}
The implementation of the $\Gate{GZZ}$ gates is straightforward, see \cref{sec:J_shaping}, and we again do not address the compilation of local $\Gate{R_Z} (\theta)$ rotations in this work.
Therefore, we focus on the decomposition of the Givens rotation
\begin{equation}
\label{eq:twoQubitGivens}
\Gate{G}(\varphi) = \begin{pmatrix}
 1 & 0            & 0           & 0 \\
 0 & \cos (\varphi)  & \sin (\varphi) & 0 \\
 0 & -\sin (\varphi) & \cos (\varphi) & 0 \\
 0 & 0            & 0           & 1 \\
\end{pmatrix}
\equiv
\begin{tikzpicture}[baseline=(X.base)]
\node[scale=.75, ampersand replacement=\&] (X)	{
\begin{tikzcd}[column sep=tiny]
 \qw \& \gate{\Gate S} \& \qw            \& \targ{}   \& \gate{\Gate{R_Y} (-\varphi)} \& \targ{}    \& \qw            \& \gate{\Gate S^\dagger} \& \qw \\
 \qw \& \gate{\Gate S} \& \gate{\Gate H} \& \ctrl{-1} \& \gate{\Gate{R_Y} (-\varphi)} \& \ctrl{-1}  \& \gate{\Gate H} \& \gate{\Gate S^\dagger} \& \qw \\
\end{tikzcd}
};
\end{tikzpicture}
\end{equation}
where $\Gate{R_Y} (\varphi) = e^{-i Y \varphi / 2}$ is a rotation around the $y$-axis.
Using the Euler decomposition, up to global phases we can express $\Gate{R_Y} (\varphi)$ as
\begin{equation}
\label{eq:Yeulerdecomp}
 \Gate{R_Y} (\varphi) \equiv \sqrt{\Gate X}^\dagger \Gate{R_Z} (\varphi) \sqrt{\Gate X} \equiv \Gate S^\dagger \Gate{R_X} (\varphi) \Gate S \,.
\end{equation}
Inserting this into \cref{eq:twoQubitGivens} we obtain
\begin{equation}
 \Gate{G}(\varphi) \equiv \scalebox{0.75}{
\begin{quantikz}[column sep=tiny]
 \qw & \gate{\Gate S} & \qw            & \gate{\sqrt{\Gate X}} & \targ{}   & \gate{\Gate{R_Z} (-\varphi)} & \targ{}   & \gate{\sqrt{\Gate X}^\dagger}  & \qw            & \gate{\Gate S^\dagger} & \qw \\
 \qw & \gate{\Gate S} & \gate{\Gate H} & \gate{\Gate S}        & \ctrl{-1} & \gate{\Gate{R_X} (-\varphi)} & \ctrl{-1} & \gate{\Gate S^\dagger}  & \gate{\Gate H} & \gate{\Gate S^\dagger} & \qw
\end{quantikz}} \, .
\end{equation}
Using the commutation rule
\begin{equation}
\label{eq:CX_Z_ZZ_CX}
\Gate{CX}_{2,1} \Gate{R_Z} (\varphi)_1 \equiv \Gate{ZZ}_{1,2} (\varphi) \Gate{CX}_{2,1} \,
\end{equation}
twice, first we commute the $\Gate{R}_Z$ gate with the right $\Gate{CX}$, then conjugate the $\Gate{R}_X$ gate and the left $\Gate{CX}$ with Hadamard gates and commute them.
This gives
\begin{equation}
\label{eq:compiled_givens}
 \Gate{G}(\varphi) \equiv
 \begin{tikzpicture}[baseline=(X.base)]
\node[scale=.75, ampersand replacement=\&] (X)	{
\begin{tikzcd}[column sep=tiny]
 \qw \& \gate{\Gate S} \& \qw            \& \gate{\sqrt{\Gate X}} \& \gate{\Gate H} \& \gate[2]{\Gate{ZZ}(-\varphi)} \& \gate{\Gate H} \& \gate[2]{\Gate{ZZ}(-\varphi)} \& \gate{\sqrt{\Gate X}^\dagger}  \& \qw            \& \gate{\Gate S^\dagger} \& \qw \\
 \qw \& \gate{\Gate S} \& \gate{\Gate H} \& \gate{\Gate S}        \& \gate{\Gate H} \&  \& \gate{\Gate H} \& \& \gate{\Gate S^\dagger}  \& \gate{\Gate H} \& \gate{\Gate S^\dagger} \& \qw \\
\end{tikzcd}
};
\end{tikzpicture}
\, .
\end{equation}
Since any layer of $\Gate{ZZ}$ gates can be implemented with only one $\Gate{GZZ}$ gate, see \cref{eq:GZZ_to_ZZ}, we can collect the parallel $\Gate{ZZ}(-\varphi)$ gates of the Givens rotation layer $\Gate{\hat G}(\varphi)$ into a single $\Gate{GZZ} (-\varphi A_{NN})$ gate.
Overall, this results in two $\Gate{GZZ}$ gates per Givens rotation layer.
Here, $A_{NN} \in \HTL(\FF_2^n)$ denotes the total coupling matrix that pairwise couples next-neighbor qubits, i.e.\ qubits 1 and 2, qubits 3 and 4, etc.
Concretely, $A_{NN}$ takes the form
\begin{equation}
A_{NN} = \begin{pmatrix}
		0&1&&&&&\\
		1&0&&&&&\\
		&&&\ddots&&&\\
		&&&&&0&1\\
		&&&&&1&0
	\end{pmatrix}. 
\end{equation}
Usually, the subdiagonal of the hardware-given coupling matrix $J$ contains the largest values and thus corresponds to the fastest $\Gate{ZZ}$ gates.
This results in very fast $\Gate{GZZ}(-\varphi A_{NN})$ gates.

A straightforward implementation of the Givens rotation layer with \cref{eq:twoQubitGivens} results in $n$ local $\Gate{R_Y}$ rotations with arbitrary angle.
On a quantum computer with a finite native gate set the rotations with arbitrary angle might be challenging to implement.
Although there are methods to decompose rotations with arbitrary angle into a native gate set, see e.g.\ Refs.~\cite{kliuchnikov_fast_2013, ross_optimal_2016, bouland_efficient_2021}, they always come with a considerable encoding cost.

To sum up, to approximate molecular dynamics as described by the circuit \eqref{eq:qchemistry} one needs $2(m+1)$ $\Gate{GZZ} (-\varphi_k A_{NN})$ gates, $(m-1)$ $\Gate{GZZ} (A_k)$ gates, $n$ local $\Gate{R_Z}(\theta_{0,1})$ gates and some local Clifford gate.

\subsection{Decomposing general diagonal unitaries}
\label{sec:diagunit}
Since the $\Gate{GZZ}$ gate is diagonal, it is natural to look for a compilation procedure to implement general diagonal unitaries using the $\Gate{GZZ}$ gate together with local $\Gate{R_Z}$ and Hadamard gates.
We assume that we are able to implement $\Gate{R_Z}$ gates with arbitrarily small angles which in practice might be difficult.
Yet, there are algorithms to approximate any $\Gate{R_Z}$-angle using only Clifford+T gates \cite{ross_optimal_2016} or Clifford+$\sqrt{T}$ \cite{kliuchnikov_shorter_2022}.
This compilation scheme leads us beyond Clifford circuits
and should be understood as a stepping stone towards more general compilation schemes. 

First we introduce the representation of diagonal unitaries by phase polynomials \cite{amy_cnot_complexity_2018,amy_polynomial_time_2014}.
Then we show how certain terms of the phase polynomial can be implemented in parallel.
We optimize the order of these parallelized layers such that the encoding cost and the total $\Gate{GZZ}$ time is reduced.

Any diagonal $n$-qubit unitary acts as
\begin{equation}
\label{eq:diag_unitary}
 U_{f} \ket{\vec x} \coloneqq \e^{2 \pi \i f( \vec x )} \ket{\vec x},
\end{equation}
with $\vec x \in \FF_2^n$ for some pseudo-Boolean function $f: \FF_2^n \rightarrow \R$.
Such a function can be uniquely expressed as a multilinear \emph{phase polynomial} \cite[Theorem 1.1]{odonnell_analysis_2021_new},
\begin{equation}
\label{eq:FE}
 f(\vec x) = \sum_{\vec y \in \FF_2^n} \alpha_{\vec y} \chi_{\vec y} (\vec x) 
\end{equation}
with coefficients or phases $\alpha_{\vec y} \in \RR$ and \emph{parity function} $x\mapsto \chi_{\vec y}(x)=y_1 x_1 \oplus \dots \oplus y_n x_n$ for each \emph{parity} $\vec y \in \FF_2^n$.
Additionally taking into account basis state transformations $\ket{\vec x}\mapsto\ket{g(\vec x)}$ the above extends to the so-called \emph{sum-over-paths representation}, which was used in Refs.~\cite{amy_cnot_complexity_2018,amy_polynomial_time_2014}. 

Terms in the expansion of $f$ of the form $\alpha x_i$ and $\alpha (x_i \oplus x_j)$ can directly be implemented as $\Gate{R_Z}(\alpha)_i$ and $\Gate{ZZ} (\alpha)_{i,j}$ gate, respectively, and are therefore easy.
We thus only consider terms with more than two variables $x_i$.

We define the unitary representing one term in $f(\vec x)$ by
\begin{equation}
 U_{i, \vec y} \ket{\vec x} \coloneqq \e^{2 \pi \i \alpha_{\vec y} \chi_{\vec y} (\vec x)} \ket{\vec x} \, ,
\end{equation}
where $i$ is a qubit index.
The role of the subscript $i$ gets apparent in the circuit representation of the unitary $U_{i, \vec y}$
\begin{equation}
\label{eq:one_fourier_term}
\begin{aligned}
U_{i, \vec y} &\equiv
\begin{tikzpicture}[baseline=(X.base)]
\node[scale=.75, ampersand replacement=\&] (X)	{
\begin{tikzcd}[column sep=tiny]
\lstick{$x_{l_1}$}     \& \ctrl{3} \& \qw \ldots  \& \qw      \& \qw                                         \& \qw      \& \qw \ldots  \& \ctrl{3} \& \qw \\
\vdots                 \&          \&             \&          \&                                             \&          \&             \&          \&\\
\lstick{$x_{l_{m-1}}$} \& \qw      \& \qw \ldots  \& \ctrl{1} \& \qw                                         \& \ctrl{1} \& \qw \ldots  \& \qw      \& \qw \\
\lstick{$x_{i} = x_{l_m}$}     \& \targ{}  \& \qw \ldots  \& \targ{}  \& \gate{\Gate{R_Z} (\alpha_{\vec y})} \& \targ{}  \& \qw \ldots  \& \targ{}  \& \qw
\end{tikzcd}
};
\end{tikzpicture}
\\
&=\begin{tikzpicture}[baseline=(X.base)]
\node[scale=.75, ampersand replacement=\&] (X)	{
\begin{tikzcd}[column sep=tiny]
\lstick{$x_{l_1}$}     \& \qw            \& \ctrl{3}   \& \qw \ldots \& \qw        \& \qw              \& \qw                                         \& \qw       \& \qw        \& \qw \ldots \& \ctrl{3}   \& \qw      \& \qw \\
\vdots                 \&                \&            \&            \&            \&                  \&                                             \&           \&            \&            \&            \&          \&     \\
\lstick{$x_{l_{m-1}}$} \& \qw            \& \qw        \& \qw \ldots \& \ctrl{1}   \& \qw              \& \qw                                         \& \qw       \& \ctrl{1}   \& \qw \ldots \& \qw        \& \qw      \& \qw \\
\lstick{$x_{i} = x_{l_m}$}     \& \gate{\Gate H} \& \control{} \& \qw \ldots \& \control{} \& \gate{\Gate H}   \& \gate{\Gate{R_Z} (\alpha_{\vec y})} \& \gate{\Gate H}  \& \control{} \& \qw \ldots \& \control{} \& \gate{\Gate H} \& \qw
\end{tikzcd}
};
\end{tikzpicture}
 \,
\end{aligned}
\end{equation}
where $l = (k \in [n]: y_k \neq 0)\in [n]^m$ is a sequence containing the $m$ non-zero components of $\vec y$ and with the conjugating Hadamard gates, the $\Gate{R_Z}$ gate and all the $\Gate{CX}$ gate targets on the $i$-th qubit.
Note that we omit $\alpha_{\vec y}$ in $U_{i, \vec y}$ since the following discussion is agnostic of the phase $\alpha_{\vec y}$.
The domain of $U_{i, \vec y}$ is given by the support of the parity $\supp (\vec y) \coloneqq \Set*{ k \given y_k \neq 0 }$.
Since the unitaries of the two $\Gate{CX}$ layers conjugating the $\Gate{R_Z}$ gate in $U_{i, \vec y}$ are permutation matrices one has the freedom in choosing any $i \in \supp (\vec y)$. 
That is, for any $i,j\in\supp(\vec y)$, $U_{i,\vec y}$ and $U_{j,\vec y}$ implement the same unitary.
Applying \cref{eq:CZlayer} to the right-hand side of \cref{eq:one_fourier_term}, we find that each $U_{i, \vec y}$ can be implemented with two $\Gate{GZZ}$ gates, one $\Gate{R_Z} (\alpha_{\vec y})$ gate, four Hadamard gates and some $\Gate{R_Z} (\pi/2)$ gates.

Clearly, two such diagonal unitaries $U_{i,\vec y}$ and $\U_{i,\vec y^\prime}$ can be implemented in parallel if their supports are disjoint, i.e.\ if $\supp(\vec y) \cap \supp(\vec y^\prime) = \emptyset$.
Since our $\Gate{GZZ}$ gates are time-optimal, this results in a time-optimal implementation of those unitaries, assuming as above a time-optimal implementation of the single-qubit $\Gate{R_Z}$ gates.
Denote by $\mc S \coloneqq \Set*{ \supp (\vec y) \given \vec y \in \FF_2^n , |\vec y|>2 }$ the set of all supports of the parities without the easy terms.
\Cref{alg:par} is a heuristic algorithm, which parallelizes $U_{i, \vec y}$ with disjoint supports such that the resulting support is as large as possible.
\begin{algorithm}[H]
\caption{Parallelizing supports.}\label{alg:par}
\textbf{Input:} $\mc S$ \Comment{Set of supports}
\begin{algorithmic}[0]
\State $\mc{L} \gets \emptyset$
\While{$\mc S \neq \emptyset$}
    \State Choose $s_k \in \mc S$
    \State $\mc S \gets \mc S \setminus \{ s_k \}$  \Comment{Remove $s_k$ from $\mc S$}
    \State $s \gets \{ s_k \}$
    \While{There exists $s_i \in \mc S$ s.t. $s \cap s_i = \emptyset$ and $\max_{s_i \in \mc S} |s \cup s_i|$}
        \State $s \gets s \cup \{ s_i \}$  \Comment{Append disjoint support with maximal union size}
        \State $\mc S \gets \mc S \setminus \{ s_i \}$
    \EndWhile
    \State $\mc{L} \gets \mc{L} \cup s$  \Comment{Append to set of parallelized layers}
\EndWhile
\end{algorithmic}
\textbf{Output:} $\mc L$ \Comment{Set of parallelized layers}
\end{algorithm}

Between any two consecutive unitaries $U_{i,\vec y}$ and $\U_{i',\vec y^\prime}$ with support overlap $o \coloneqq \supp (\vec y) \cap \supp (\vec y^\prime) = s \cap s^\prime\neq\emptyset$ two types of cancellation can happen.
First, the Hadamard gates cancel, provided that they act on the same qubit.
This is achieved by choosing the qubit $i$ on which the Hadamard gates act, concretely by demanding $i = i^\prime \in o$.
Thereafter, one can combine the two $\Gate{CZ}$ layers of $U_{i,\vec y}$ and $\U_{i',\vec y^\prime}$ and implement them with just one $\Gate{GZZ}$ gate, see \cref{eq:CZlayer}.
Doing so, a second cancellation happens automatically:
One can see from the right-hand side of \cref{eq:one_fourier_term} that between $U_{i,\vec y}$ and $\U_{i,\vec y^\prime}$, $2|o|$ $\Gate{CZ}$ gates cancel and therefore $|o|$ less qubits participate in the $\Gate{GZZ}$ gate.
This leads to a quadratic reduction $\LandauO(|o|^2)$ of the encoding cost, which for a $n$ qubit $\Gate{GZZ}$ gate is $n(n-1)/2$.

Let $\mc{L}_r \coloneqq \Set*{ u \in \mc{L} \given |u|=r }$, where $\mc{L}$ is the set of all parallelized layers returned by \cref{alg:par}, be the subset of all parallelized layers that contain $r$ disjoint supports of unitaries $U_{i, \vec y}$.
The set $\mc{L}_r$ is important for the placement of the Hadamard gates.
For example, if two parallelized layers $u,u^\prime \in \mc{L}_r$ are executed consecutively and there exist $i$ and $j$ for all $s_1 , \dots , s_r \in u$ and $s_1^\prime , \dots , s_r^\prime \in u^\prime$ such that $s_i \cap s_j^\prime \neq \emptyset$, then all interleaving Hadamard gates cancel if they are placed on the qubits contained in $s_i \cap s_j^\prime$.

We now extend this argument to all parallelized layers $u^{(k)} \in \mc{L}_r$.
If there is a qubit contained in the repeated overlap of the $r$ supports of all consecutive parallelized layers, all Hadamard gates between the parallelized layers cancel.
Concretely, for the cancellation of all Hadamard gates there must exist indices $i_k \in \{1, \dots , r\}$ for all $s_1^{(k)} , \dots , s_r^{(k)} \in u^{(k)}$ such that $\bigcap_k^{|\mc{L}_r|} s_{i_k}^{(k)} \neq \emptyset$.
We further note that if we apply $u \in \mc{L}_r$ and $v \in \mc{L}_{r^\prime}$ successively for $r \neq r^\prime$, then it is impossible that all interleaving Hadamard gates cancel.

The next lemma gives the encoding cost for the set of parallelized layers $\mc{L}_r$ and deals with the placement of the Hadamard gates by introducing few ancilla qubits.
\begin{lemma}
\label{lem:Pr}
 The set of parallelized layers $\mc{L}_r$ can be implemented using $2r$ Hadamard gates, $|\mc{L}_r|r$ non-Clifford $\Gate{R_Z}$ rotations and $|\mc{L}_r|+1$ $\Gate{GZZ}$ gates by introducing at most $r$ ancilla qubits.
\end{lemma}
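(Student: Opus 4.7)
The plan is to prove Lemma \ref{lem:Pr} by a careful use of $r$ ancilla qubits (one per ``slot'' of the parallelization) that serve as permanent targets for the CX ladders of \eqref{eq:one_fourier_term}, so that the Hadamards between consecutive parallelized layers cancel completely except at the outermost two.

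First, I would set up notation. Label the $r$ disjoint supports in each layer $u^{(k)} \in \mc{L}_r$ as $s_1^{(k)}, \dots, s_r^{(k)}$ and introduce $r$ fresh ancilla qubits $a_1, \dots, a_r$ prepared in $\ket{0}$. For each $k$ and each $j \in \{1,\dots,r\}$, extend the parity $\vec y_j^{(k)}$ corresponding to $s_j^{(k)}$ by setting the $a_j$-component to $1$. Since $\chi_{\vec y}$ is linear in the bit of $a_j$ and this bit is $0$ throughout the computation, the extended unitary acts identically to the original $U_{i,\vec y_j^{(k)}}$ on the data qubits. The key point is that I am now free to choose the target qubit in each sub-unitary to be precisely $a_j$; this is the freedom already noted below \eqref{eq:one_fourier_term}.

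Next I would count, using the circuit identity \eqref{eq:one_fourier_term}. With the ancilla convention in place, every sub-unitary in slot $j$ has its Hadamards on $a_j$ and its $R_Z(\alpha)$ on $a_j$. For two successive parallelized layers $u^{(k)}, u^{(k+1)}$, the Hadamards on each $a_j$ between the inner ends of the two blocks annihilate (via $\Gate H^2 = \1$), while the two CZ-layer wings joining them can be merged using \eqref{eq:CZlayer} into a single $\Gate{GZZ}$ on the union of the supports plus the ancillas. Iterating this across all $|\mc{L}_r|$ layers leaves exactly two Hadamards on each $a_j$ (at the very beginning and very end), i.e.\ $2r$ Hadamards in total, and $|\mc{L}_r|+1$ $\Gate{GZZ}$ gates (one before the first layer, one between each adjacent pair, and one after the last). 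Each of the $|\mc{L}_r|$ layers contributes its $r$ non-Clifford $\Gate{R_Z}(\alpha)$ rotations, giving $|\mc{L}_r|\,r$ in total. The $\Gate{R_Z}(\pi/2)$ gates arising from \eqref{eq:CZlayer} are Clifford and not counted.

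The main obstacle I expect is bookkeeping: one must check that the cancellations I invoke do not interfere across different slots (this is where disjointness of the $r$ supports within each layer, together with the fact that the ancillas $a_1,\dots,a_r$ are distinct, is essential), and that merging adjacent CX-ladders into a single $\Gate{GZZ}$ really works when the supports differ between consecutive layers. Both are handled by the observation that the merged matrix $A$ in \eqref{eq:CZlayer} simply becomes the sum $A^{(k)}+A^{(k+1)}$ restricted to the union of the two enlarged supports, with identities filling the complement; no further commutation obstruction arises, since the participating gates are all diagonal in the $\Gate Z$-basis between the two outer Hadamard layers.
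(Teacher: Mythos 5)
Your proof is correct and takes essentially the same approach as the paper's: fix, for each of the $r$ slots, a single common qubit on which all Hadamards and $\Gate{R_Z}$ rotations of that slot are placed, so that the interleaving Hadamards cancel and the adjacent $\Gate{CZ}$ layers merge into $|\mc{L}_r|+1$ $\Gate{GZZ}$ gates. The only (immaterial) difference is that you introduce all $r$ ancillas up front, whereas the paper adds an ancilla for slot $i$ only when the slot-wise intersection $\bigcap_k s_{i_k}^{(k)}$ is empty -- both constructions stay within the ``at most $r$ ancilla qubits'' claimed by the statement, and your explicit remark that extending each parity by an ancilla held in $\ket{0}$ leaves the implemented unitary unchanged makes explicit a point the paper leaves implicit.
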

\begin{proof}
 Each $u^{(k)} \in \mc{L}_r$ contains $2r$ Hadamard gates.
 The position of the $2r$ Hadamard gates can be chosen freely on the $r$ supports $s_1^{(k)}, \dots , s_r^{(k)} \in u^{(k)}$.
 So we want to find qubits $q_i$ in the overlap of all supports, i.e.\ $q_i \in \bigcap_k^{|\mc{L}_r|} s_{i_k}^{(k)}$ with for each $i = 1, \dots ,r$.
 If this is not possible for qubit $q_i$, i.e.\ $\bigcap_k^{|\mc{L}_r|} s_{i_k}^{(k)} = \emptyset$, we add an ancilla qubit to all supports containing $q_i$ and set $q_i$ to that qubit.
 One sees that in the worst case, setting all $q_1, \dots , q_r$ to the ancilla qubits and adding them to the supports $s_1, \dots , s_r$ for all $u \in \mc{L}_r$ results in the cancellation of all interleaving Hadamard gates, leaving only $2r$ Hadamard gates.
 The $\Gate{GZZ}$ gate count results from combining the two $\Gate{GZZ}$ gates of subsequent $u$'s as discussed above.
\end{proof}
Note that the $r$ ancilla qubits can be reused by different $\mc{L}_r$ since we do not encode any information on them.
For an $n$-qubit diagonal unitary one needs $r \leq \lfloor \frac{n}{3} \rfloor$ ancilla qubits to ensure that all Hadamard gates cancel.
The upper bound comes from the fact that we consider only supports with $|\vec y| > 2$ so at most $\lfloor \frac{n}{3} \rfloor$ many supports can be in parallel.

Assume now, that we have $r$ ancilla qubits or that all interleaving Hadamard gates cancel.
Then one can freely permute the order of the parallelized layers.
Such reordering does not change the $\Gate{GZZ}$ gate count, but it possibly reduces the amount of qubits participating in the $\Gate{GZZ}$ gates and therefore the encoding cost.
We thus want to find an order such that the cancellation of $\Gate{CZ}$ gates is maximized, i.e.\ we want to maximize the overlap of the supports of consecutive parallelized layers.
We define the \emph{shared support size} between two parallelized layers $u,u^\prime \in \mc{L}_r$ with $r$ elements as $S_{u,u^\prime} \coloneqq \sum_{i=0}^r |s_i \cap s_i^\prime|$, where $s_i \in u$ and $s_i^\prime \in u^\prime$.
Choose the ordering of elements in $\mc{L}_r$ that maximizes the shared support size $S_{u,u^\prime}$ over all pairs of parallelized layers $u,u^\prime \in \mc{L}_r$.
This relates to the traveling salesman problem in the formulation of Ref.~\cite{Dantzig1954} with each parallelized layer as a vertex and weights $S_{u,u^\prime}$ on the edge between $u$ and $u^\prime$.
Hence the reordering of the parallelized layers cannot be solved efficiently.
However, there are heuristic algorithms, e.g.\ the Christofides algorithm \cite{christofides_worstcase_1976} which only takes $\LandauO (|\mc{L}_r|^3)$ steps and guarantees that the solution is within a factor $3/2$ of the optimal solution.

In the worst case of the proposed compiling method for general diagonal unitaries, we consider the set of all supports $\mathcal S$ without easy gates, i.e., without single- and two-qubit gates.
This set has $|\mathcal S|=2^n-(n+n(n-1)/2) = 2^n - n(n+1)/2$ elements, and contains for every element $s\in\mc S$ also the complement $\bar s$ with $|s \cup \bar s| = n$.
Therefore, we can always parallelize two gates, the one corresponding to $s$ and that corresponding to $\bar s$.
We thus only have $\mc{L}_2$ which has size $|\mc L_2|=|\mc S|/2=2^{n-1} - n(n+1)/4$.
Using Lemma~\ref{lem:Pr}, this requires only two ancilla qubits and $|\mc S|/2$ $\Gate{GZZ}$ gates.

In this section we proposed a method to decompose arbitrary diagonal unitaries into local gates and $\Gate{GZZ}$ gates.
We showed that this leads to an optimization problem which in general is NP-hard.
Using heuristic algorithms might still achieve a significant reduction in the encoding cost.

\begin{example}
We illustrate the compilation scheme explained above on a diagonal unitary on five qubits.
Let $\vec x \in \FF_2^5$ and define a diagonal unitary as in \cref{eq:diag_unitary} with
\begin{equation}
\begin{aligned}
 f(\vec x) = &\frac{1}{2} \big( x_1 \oplus x_2 + x_3 \oplus x_4 + x_4 \oplus x_5 \\
 &+ x_1 \oplus x_2 \oplus x_3 + x_2 \oplus x_4 \oplus x_5 \\
 &+ x_2 \oplus x_3 \oplus x_4 \oplus x_5 \big) \, .
\end{aligned}
\end{equation}
The factor $1/2$ is chosen for convenience and yields a $R_Z (\pi)=\Gate{Z}$ gate in the circuit representation of \cref{eq:one_fourier_term} for each term in $f$.
Conjugating these $\Gate{Z}$ gates with Hadamard gates as in \cref{eq:one_fourier_term} we get one $\Gate{HZH} = \Gate{X}$ gate for each term.

Note that in the above we considered only $s_i = \supp (\vec y_i) > 2$, since all terms with $s_i = 2$ can be implemented with a single $\Gate{GZZ}$ gate.
To keep this example short we here also allow $s_i = 2$ since this gives more non-trivial possibilities to parallelize layers.
We illustrate the diagonal unitary with the supports $s_i$ in the circuit diagram below.
Applying \cref{alg:par} yields
\begin{equation}
\label{eq:example_diag}
\begin{tikzpicture}[baseline=(X.base)]
\node[scale=.65, ampersand replacement=\&] (X)	{
\begin{tikzcd}[column sep=tiny]
\lstick{$x_1$} \& \gate[2]{s_1}\& \qw          \& \qw          \& \qw  \& \gate[3]{s_5}\& \qw       \& \qw       \& \qw \\
\lstick{$x_2$} \& \qw          \& \qw          \& \qw          \& \gate{s_4}  \& \qw       \& \qw       \& \gate[4]{s_7}\& \qw \\
\lstick{$x_3$} \& \qw          \& \gate[2]{s_2}\& \qw          \& \qw  \& \qw       \& \gate[3]{s_6}\& \qw       \& \qw \\
\lstick{$x_4$} \& \qw          \& \qw          \& \gate[2]{s_3}\& \qw  \& \qw       \& \qw       \& \qw       \& \qw \\
\lstick{$x_5$} \& \qw          \& \qw          \& \qw          \& \gate{s_4}  \& \qw       \& \qw       \& \qw       \& \qw
\end{tikzcd}
};
\end{tikzpicture}=
\begin{tikzpicture}[baseline=(X.base)]
\node[scale=.65, ampersand replacement=\&] (X)	{
\begin{tikzcd}[column sep=tiny, slice all,slice titles=$u_\col\ $,slice style=blue,slice label style={inner sep=1pt,anchor=south east,rotate=0}]
\lstick{$x_1$} \& \gate[2]{s_1}\& \gate[3]{s_5} \& \qw          \& \qw          \& \qw \\
\lstick{$x_2$} \& \qw          \& \qw          \& \gate{s_4}   \& \gate[4]{s_7}\& \qw \\
\lstick{$x_3$} \& \gate[3]{s_6}\& \qw          \& \gate[2]{s_2}\& \qw          \& \qw \\
\lstick{$x_4$} \& \qw          \& \gate[2]{s_3}\& \qw          \& \qw          \& \qw \\
\lstick{$x_5$} \& \qw          \& \qw          \& \gate{s_4}   \& \qw          \& \qw
\end{tikzcd}
};
\end{tikzpicture} \, .
\end{equation}
We now group the parallelized layers $u_i$ together in the sets $\mc{L}_2 = \Set*{ u_1, u_2, u_3 }$ and $\mc{L}_1 = \Set*{ u_4 }$.
In this example we do not make use of ancilla qubits.
One can see that the placement of the Hadamard gates should be on $s_1 \cap s_5 \cap s_4 = \{ x_2 \}$ for the supports $s_1, s_5, s_4$ and on $s_6 \cap s_3 \cap s_2 = \{ x_4 \}$ for the supports $s_6, s_3, s_2$ such that all the interior Hadamard gates cancel.
For $s_7$ the Hadamard gate can be placed on $x_2$ or $x_4$.
Since $| s_4 \cap s_7 | = | s_2 \cap s_7 |$ both choices are equally good for the cancellation of $\Gate{CZ}$ gates.
We chose the Hadamard position for $s_7$ to be $x_4$.
The shared support size $S_{u_i,u_j}$ between two parallelized layers for the set $\mc{L}_2$ can be calculated and expressed as the matrix
\begin{equation}
 S = \begin{pmatrix}
 * & 4 & 3 \\
 4 & * & 2 \\
 3 & 2 & * \\
\end{pmatrix}
\end{equation}
With $S_{u_2,u_1} = 4$ and $S_{u_1,u_3} = 3$ we get the optimal order for $\mc{L}_2$ as $u_2, u_1, u_3$, and $2 (S_{u_2,u_1} + S_{u_1,u_3}) = 14$ $\Gate{CZ}$ gates cancel.
Using \cref{eq:one_fourier_term} and $\Gate{HZH} = \Gate{X}$, the resulting circuit is
\begin{equation}
\label{eq:example_diag_GZZ}
\begin{tikzpicture}[baseline=(X.base)]
\node[scale=.75, ampersand replacement=\&] (X)	{
\begin{tikzcd}[column sep=tiny]
\lstick{$x_1$} \& \qw\slice{}    \& \ctrl{1}  \& \qw\slice{$\Gate{GZZ}_1 \qquad $}\& \qw\slice{}    \& \qw       \& \qw\slice{$\Gate{GZZ}_2 \qquad $}\& \qw\slice{}    \& \ctrl{1}  \& \qw\slice{$\Gate{GZZ}_3 \qquad $}\& \qw\slice{}    \& \qw\slice{$\Gate{GZZ}_4 \quad $}\& \qw\slice{}    \& \qw       \& \qw\slice{$\Gate{GZZ}_5 \qquad $}\& \qw\slice{}    \& \qw       \& \qw       \& \qw\slice{$\Gate{GZZ}_6 \qquad\quad $}\& \qw     \& \qw \\
\lstick{$x_2$} \& \gate{\Gate{H}}\& \control{}\& \ctrl{1}                         \& \gate{\Gate{X}}\& \ctrl{1}  \& \qw                              \& \gate{\Gate{X}}\& \control{}\& \ctrl{3}                         \& \gate{\Gate{X}}\& \ctrl{3}                         \& \gate{\Gate{H}}\& \ctrl{2}  \& \qw                              \& \qw            \& \qw       \& \ctrl{2}  \& \qw       \& \qw     \& \qw \\
\lstick{$x_3$} \& \qw            \& \qw       \& \control{}                       \& \qw            \& \control{}\& \ctrl{1}                         \& \qw            \& \qw       \& \qw                              \& \qw            \& \qw                              \& \qw            \& \qw       \& \qw                              \& \qw            \& \qw       \& \qw       \& \ctrl{1}  \& \qw     \& \qw \\
\lstick{$x_4$} \& \gate{\Gate{H}}\& \ctrl{1}  \& \qw                              \& \gate{\Gate{X}}\& \qw       \& \control{}                       \& \gate{\Gate{X}}\& \ctrl{1}  \& \qw                              \& \gate{\Gate{X}}\& \qw                              \& \qw            \& \control{}\& \ctrl{1}                         \& \gate{\Gate{X}}\& \ctrl{1}  \& \control{}\& \control{}\& \gate{\Gate{H}}\& \qw \\
\lstick{$x_5$} \& \qw            \& \control{}\& \qw                              \& \qw            \& \qw       \& \qw                              \& \qw            \& \control{}\& \control{}                       \& \qw            \& \control{}                       \& \qw            \& \qw       \& \control{}                       \& \qw            \& \control{}\& \qw       \& \qw       \& \qw     \& \qw
\end{tikzcd}
};
\end{tikzpicture} \, .
\end{equation}
One needs six $\Gate{GZZ}$ gates to implement the diagonal unitary.
The encoding cost for each $\Gate{GZZ}_i$ in \cref{eq:example_diag_GZZ} with support $s_i$ is $|s_i|(|s_i|-1)/2$, such that the total implement cost is $10+3+6+1+3+6=29$.
For the circuit on the left-hand side of \cref{eq:example_diag} we require two $\Gate{GZZ}$ gates for each support $s_i$.
We have $7$ supports $s_i$, i.e.\ $14$ $\Gate{GZZ}$ gates which leads to the total encoding cost is $2(2+2+2+2+3+3+6)=40$.
Thus, in this example our compilation scheme reduces the encoding cost by $\approx 25\%$.
\end{example}

\section{Conclusion}
In this work, we showed how to synthesize a time-optimal multi-qubit gate, the $\Gate{GZZ}$ gate, on an abstract quantum computing platform.
The only requirements on the platforms are that \ref{item:parallel_1qubit} single-qubit rotations can be executed in parallel, \ref{item:Ising} it offers global Ising-type interactions with all-to-all connectivity, and \ref{item:exclude} that there is a way to exclude certain qubits from the participation in the multi-qubit gate.
We showed that and how to realize these requirements in a microwave controlled ion trap using \acf{MAGIC}.

Arbitrary couplings between the qubits can be generated via $\Gate{X}$ gate layers interleaved by the Ising-type evolution. 
The required $\Gate{X}$ gate layers as well as the durations of the evolution times are determined by solving an \acs{LP}. 
In numerical experiments we showed that the gate time of the resulting $\Gate{GZZ}$ gates scales approximately linear with the number of participating qubits under reasonable assumptions on an implementing physical platform.
Based on these time-optimal $\Gate{GZZ}$ gates, we presented an improved compiling strategy for Clifford circuits, and applied this strategy to compile the \ac{QFT}. 
Moreover, we applied the $\Gate{GZZ}$ gates to the simulation of molecular dynamics, and presented a compiling strategy for general diagonal unitaries.
This can be thought of as a step towards compilation strategies for arbitrary unitaries.

In the future, it will be interesting to investigate how the $\Gate{GZZ}$ gates perform on a real-world ion trap, and how robust they can be made against errors similar to the error mitigation scheme for the \ac{DAQC} setting \cite{garcia_molina_noise_2021}.
Moreover, we hope that our time-optimal gate synthesis method will be applied to small ion trap registers which are embedded in a \ac{QPU} module.

\section*{Acknowledgements}
We are grateful to the MIQRO collaboration, in particular to Robert Jördens for helpful discussions and valuable feedback on our manuscript and Christof Wunderlich for discussions on the platform and the \ac{QFT}. 
Moreover, we would like to thank Christian Gogolin for making us aware of the connection to molecular dynamics and Lennart Bittel for fruitful discussions on convex optimization. 

This work has been funded by the German Federal Ministry of Education and Research (BMBF) within the funding program ``quantum technologies -- from basic research to market'' via the joint project MIQRO (grant numbers 13N15522 and 13N15521).

\section*{Appendices} 
\appendix
\section{Ion trap quantum computing with microwave control}
\label{apdx:MAGIC}
In this section we explain in more detail how qubits are encoded into cold ions stored in a trap, how \ac{MAGIC} can be used to tailor interactions between these qubits, and how microwaves perform single-qubit gates.
A full in-depth treatment can be found in Refs.~\cite{wunderlich_conditional_2001, johanning_quantum_2009}.

In short, pairs of hyperfine states of ions are interpreted as the computational basis states of qubits.
These ions are cooled down to form a ``Coulomb crystal'', a stable configuration confined by the trap potential.
An inhomogeneous magnetic field superimposed with the trap makes the crystal equilibrium depend on the internal state (the qubits).
Changes in the internal state hence lead to excitations (phonons), which can be interpreted as interactions between the qubits.
The position-dependent Zeeman effect makes individual qubit transitions addressable.
Microwaves drive Rabi oscillations on these transitions to perform single-qubit gates.

\subsection{Ytterbium ions and qubits}
We consider ions with nuclear spin $I = \frac{1}{2}$ and total electron angular momentum $J = \frac{1}{2}$, for example single-ionized Ytterbium-171 ($^{171}\mathrm{Yb}^+$).
These values imply that the ``ground state'' (the lowest main quantum number) is spanned by four ``hyperfine'' sublevels.
Since nuclear spin and electron angular momentum couple to each other through an interaction term $\propto \vec{I} \cdot \vec{J}$, the full Hamiltonian is not diagonal in the naive product basis $\ket{m_I, m_J}$.
One instead introduces the total angular momentum $\vec{F} \coloneqq \vec{I} + \vec{J}$.
The corresponding quantum numbers $F$ and $m_F = -F, \dots, F$ can be used to label the hyperfine eigenstates, which group into a singlet $\ket{F=0,m_F=0}$ and a triplet $\ket{F=1, m_F=-1,0,+1}$.
The singlet forms the overall ground state and is separated by an energy gap from the triplet states, which all have the same energy.
The energy difference is called the ``hyperfine structure constant'' of the ion.
This quantity is usually determined experimentally and typically lies in the microwave regime.
For $^{171}\mathrm{Yb}^+$, it corresponds to a frequency of $f_0 = 12.642 812 118 466(2) \text{GHz}$ \cite{fisk1997accurate}.

Placing the ion in an external magnetic field $\vec{B}$ yields new dipole terms $\propto \vec{B} \cdot \vec{I}$ and $\propto \vec{B} \cdot \vec{J}$ in the Hamiltonian.
Since these new terms do not preserve $F$, the $\ket{F, m_F}$ basis is no longer an eigenbasis.
However, they still preserve $m_F$, so the new energy eigenstates are superpositions of states with equal $m_F$.
The ``outer'' triplet states $\ket{F=1,m_F=\pm1}$ remain basis states due to their unique $m_F$, but $\ketb{F=0,m_F=0}$ and $\ket{F=1,m_F=0}$ form two orthogonal linear combinations with amplitudes depending on the field strength $B \coloneqq \abs{\vec{B}}$.
Also the eigenvalues of the Hamiltonian become $B$-dependent, and the previous triplet degeneracy is lifted.
For the present case of four states, the problem can be solved analytically leading to the Breit-Rabi formula \cite{breit_rabi_1931}.
We are mainly interested in the weak field limit $B \rightarrow 0$, commonly known as the Zeeman effect.
The energy shift of the states $\ket{F=1,m_F=\pm1}$ is exactly proportional to $B$ (``linear Zeeman effect''), while the other two eigenvalues are field-independent to first order (``magnetic insensitive'').
However, they do have a higher-order dependence starting from $\propto B^2$, which is known as the ``quadratic Zeeman effect'', see also \cref{fig:Yb_levels}.
Not only are these eigenvalues close to the zero-field values, but also the linear combinations have a dominant component of either $\ket{F=0,m_F=0}$ or $\ket{F=1,m_F=0}$.
So although this is technically wrong, one casually keeps the $\ket{F, m_F}$ nomenclature for the basis states even in the presence of a small magnetic field.

We now wish to ``encode'' a qubit into this four-dimensional state space, i.e.\ to pick two of the energy eigenstates as computational basis states $\ket{0}$ and $\ket{1}$.
A priori, there are six possible choices.
The Zeeman splitting between the $F=1$ states is typically on the order of MHz, so orders of magnitude smaller than the hyperfine splitting.
Also, the transitions between $\ket{F=1,m_F=0}$ and the other two triplet states are degenerate to first order in $B$.
This is why one usually chooses $\ket{0} \coloneqq \ket{F=0, m_F=0}$ and only considers the three remaining possibilities which we refer to as
\begin{center}\begin{minipage}{.9\linewidth}\begin{tabular}{>{\bfseries\boldmath}rl}
	$\sigma^+$ qubit & with $\ket{1} \coloneqq \ket{F=1,m_F=+1}$, \\
	\phantom{$\sigma^\pm$} \llap{$\pi\,$} qubit & with $\ket{1} \coloneqq \ket{F=1,m_F=0}$, \\
	$\sigma^-$ qubit & with $\ket{1} \coloneqq \ket{F=1,m_F=-1}$. \\
\end{tabular}\end{minipage}\end{center}
One may also label the qubits by the $m_F$ of their $\ket{1}$ state.
The $B$-dependent transitions frequencies between $\ket{0}$ and $\ket{1}$ can then be written
\begin{equation}\label{eq:omega_gen}
\omega(m_F,B) = 2\pi f_0 + m_F \frac{\mu B}{\hbar} + \LandauO(B^2),
\end{equation}
where $\mu$ has units of a magnetic moment ($\mu \approx \mu_\mathrm{B}$, the Bohr magneton, for $^{171}\mathrm{Yb}^+$).
This reflects both the quadratic ($m_F=0$) and the linear ($m_F=\pm1$) Zeeman effect.

\subsection{Trap Hamiltonian and \texorpdfstring{\ac{MAGIC}}{MAGIC}}
So far, we only considered a single ion.
In an ion trap quantum computer, $N$ such ions are confined in an (effective) potential generated by DC and RF electrodes.
We assume that the potential geometry and $N$ have been selected such that the ions form a ``linear Coulomb crystal'', i.e.\ a one-dimensional chain stabilized by the external trap potential and the mutual Coulomb repulsion of the ions, see \cref{fig:trap_sketch}.
If the crystal has been cooled sufficiently, we can disregard the radial dynamics and focus on the direction of the ion chain which we choose as the $z$ axis.
The system is described by a Hamiltonian with three contributions: the kinetic energy $T(\vec{p}) \propto \vec{p}^2$ of the ions with momenta $\vec{p} = (p_1, \dots, p_N)^T$, and the external and Coulomb potentials
\begin{equation}
	V_\mathrm{tot}(\vec{z}) \coloneqq \sum_{i=1}^N V_\mathrm{ext}(z_i) + K \sum_{i<j}^N \frac{1}{|z_i - z_j|},
\end{equation}
with ions positions $\vec{z} \coloneqq (z_1, \dots, z_N)^T$ and Coulomb constant $K$.
Again, after sufficient cooling it is reasonable to assume that the system is close to an equilibrium configuration $\vec{\bar{z}}$.
The potential can then be expanded in terms of the elongation $\vec{q} \coloneqq \vec{z} - \vec{\bar{z}}$.
The first order term vanishes due to equilibrium, such that
\begin{equation}
	V_\mathrm{tot}(\vec{\bar{z}} + \vec{q}) = V_\mathrm{tot}(\vec{\bar{z}}) + \frac{1}{2} \vec{q}^T \operatorname{H}_V(\vec{\bar{z}}) \vec{q} + \LandauO(|\vec{q}|^3),
\end{equation}
where $\operatorname{H}_V$ is the Hessian matrix of $V_\mathrm{tot}$.
We drop the constant term and neglect the higher-order terms in the following (harmonic approximation).
This results in a quadratic many-body Hamiltonian with a well-known spectrum of phonon excitations.

We add two more ingredients to the system:
An inhomogeneous magnetic field with constant gradient $B_1$ in $z$ direction,
\begin{equation}\label{eq:magnetic_field}
	\vec{B}(z) = B(z) \hat{e}_z, \qquad B(z) = B_0 + B_1 z,
\end{equation}
and the internal dynamics of ionic qubits with chosen bases specified by $\vec{m}_F \coloneqq (m_{F,1}, \dots, m_{F,N})^T$.
Plugging \cref{eq:magnetic_field} into \cref{eq:omega_gen} yields a qubit-specific transition frequency that depends on both $z_i$ and $m_{F,i}$.
In the Hamiltonian, the splitting can be expressed using the Pauli $Z$ operator of the qubit.
Expanded to first order in $\vec{q}$, the corresponding term reads
\begin{equation}
\begin{aligned}
	-\frac{\hbar}{2} \sum_{i=1}^N \omega(m_{F,i}, B(z_i)) Z_i
	&= -\frac{\hbar}{2} \sum_{i=1}^N \left( \omega^{(0)}_i + m_{F,i} \frac{\mu B_1}{\hbar} q_i \right) Z_i \\
	&= -\frac{\hbar}{2} \sum_{i=1}^N \omega^{(0)}_i Z_i - \frac{\mu B_1}{2} \vec{q}^T (\vec{m}_F \circ \vec{Z})\, ,
\end{aligned}
\end{equation}
with $\vec{Z} \coloneqq (Z_1, \dots, Z_N)^T$ the local $Z$ operators and frequencies $\omega^{(0)}_i$ that depend on $m_{F,i}$, $\bar{z}_i$ and various constants.
The first term of the result is a typical Zeeman term involving only the qubits, while the second term couples external ($\vec{q}$) and internal ($\vec{Z}$) degrees of freedom.
This last term is the central ingredient of \acf{MAGIC}.

Collecting all contributions gives the Hamiltonian
\begin{equation}
	H = T(\vec{p}) + \frac{1}{2} \vec{q}^T \operatorname{H}_V(\vec{\bar{z}}) \vec{q} - \frac{\mu B_1}{2} \vec{q}^T (\vec{m}_F \circ \vec{Z}) - \frac{\hbar}{2} \sum_{i=1}^N \omega^{(0)}_i Z_i \, .
\end{equation}
It is still second-order in the external degrees of freedom, but no longer purely quadratic.
We fix this by completing the square in $\vec{q}$,
\begin{equation}\label{eq:H_complete_square}
\begin{aligned}
		H ={} &T(\vec p) + \frac{1}{2} \left( \vec q - \frac{\mu B_1}{2} \operatorname{H}_V^{-1}(\bar{\vec{z}}) (\vec m_F \circ \vec Z) \right)^T \operatorname{H}_V(\bar{\vec{z}}) \left( \vec q - \frac{\mu B_1}{2} \operatorname{H}_V^{-1}(\bar{\vec{z}}) (\vec m_F \circ \vec Z) \right) \\
		&- \frac{\hbar}{2} \sum_{i=1}^N \omega^{(0)}_i Z_i - \frac{1}{2} \left(\frac{\mu B_1}{2}\right)^2 (\vec m_F \circ \vec Z)^T \operatorname{H}_V^{-1}(\bar{\vec{z}}) (\vec m_F \circ \vec Z) \,.
\end{aligned}
\end{equation}
The existence of the inverse Hessian $\operatorname{H}_V^{-1}(\bar{\vec{z}})$ is guaranteed by the assumption of a stable equilibrium, which implies that $\operatorname{H}_V$ is positive definite.
Note also that $\operatorname{H}_V$ and its inverse are symmetric matrices.
In Eq.~\eqref{eq:H_complete_square}, it is evident that the first line is quadratic in the external degrees of freedom ``up to a state dependent translation'', while the second line only involves the qubits.
Formally, this means that we can conjugate the Hamiltonian with the unitary
\begin{equation}\label{eq:separating_transform}
	U \coloneqq \exp\left( \i\, \frac{\mu B_1}{2\hbar} \vec p^T \operatorname{H}_V^{-1}(\bar{\vec{z}}) (\vec m_F \circ \vec Z) \right)
\end{equation}
to separate external and internal dynamics.
Noting that $U$ commutes with all terms of $H$ except for the one involving $\vec{q}$, this results in
\begin{equation}
	\tilde{H} \coloneqq U H U^\dagger = H_\mathrm{phonons}(\vec{p}, \vec{q}) + H_\mathrm{qubits}(\vec{Z}),
\end{equation}
with $H_\mathrm{phonons}(\vec{p}, \vec{q})$ being a standard many-body phonon Hamiltonian that is easily solvable in normal coordinates.
The internal part is simply the second line of \cref{eq:H_complete_square}, which we rewrite as
\begin{equation}\label{eq:Hqubits}
	H_\mathrm{qubits}(\vec{Z}) = -\frac{\hbar}{2} \sum_{i=1}^N \omega^{(0)}_i Z_i - \frac{\hbar}{2} \vec{Z}^T J(\vec{m}_F) \vec{Z}.
\end{equation}
Here, we used the Hadamard product identity (for vectors $\vec{a}, \vec{b}$ and a matrix $C$)
\begin{equation}\label{eq:Had_identity}
	(\vec{a} \circ \vec{b})^T C (\vec{a} \circ \vec{b}) = \vec{a}^T (C \circ \vec{b} \vec{b}^T) \vec{a}
\end{equation}
and defined
\begin{equation}\label{eq:def_J}\begin{aligned}
	J(\vec{m}_F) {}&\coloneqq J \circ \vec{m}_F \vec{m}_F^T, \\
	J {}&\coloneqq \frac{1}{\hbar} \left(\frac{\mu B_1}{2}\right)^2 \operatorname{H}_V^{-1}(\bar{\vec{z}}).
\end{aligned}\end{equation}
The two terms of \cref{eq:Hqubits} commute, so we can treat them separately, or remove the first one entirely by another unitary transformation.
In any case, the local operators $Z_i$ do not entangle the qubits, so we focus our analysis on the second term.
The diagonal entries of $J(\vec{m}_F)$ simply generate a constant term $-\frac{\hbar}{2} \Tr[J(\vec{m}_F)]\, \1$, which amounts to a global and hence unobservable phase.
Without loss of generality, we can thus set the diagonal terms of $J$ to zero.
Finally, for the discussion on the logical layer, we choose units in which $\hbar=1$.
After these simplifying steps, \cref{eq:Hqubits} becomes the Hamiltonian~\eqref{eq:Hising} on which our analysis is based.

We provide Python code for the computation of coupling matrices based on the trap potential and physical parameters on GitHub \cite{GitHub}.

\subsection{Microwaves and single-qubit gates}
\label{apdx:single_qubit}
A comprehensive discussion of the physics of microwave-controlled single-qubit gates, in particular within the \ac{MAGIC} scheme, goes even beyond the scope of this appendix.
It involves common concepts from quantum optics like the rotating wave approximation and Lamb-Dicke parameter, extended to the situation with a permanent inhomogeneous magnetic field.
This extension is crucial, because it is the magnetic gradient that enables reasonable gate times even for microwave radiation which are otherwise much slower than optical gates.
We recommend Refs.~\cite{wunderlich_conditional_2001, johanning_quantum_2009} for more details.

Important for our analysis is that resonant microwave excitation of a hyperfine transition induces a Rabi Hamiltonian on the corresponding qubit,
\begin{equation}\label{eq:H_Rabi}
	H_\mathrm{Rabi}(\Omega, \phi) = \frac{\hbar\Omega}{2} \big( \cos(\phi) X + \sin(\phi) Y \big).
\end{equation}
Here, $X$ and $Y$ are Pauli operators, $\Omega$ is the Rabi frequency which is proportional to the microwave amplitude, and $\phi$ is a parameter controlled by the relative phase between the microwave and essentially the qubit's Larmor precession.
Resonant excitation means that the microwave frequency matches the transition frequency $\omega(m_F, B)$ of the targeted qubit.
While the microwave is switched on, one (or several, if the experimental setup allows for multitone signals) Hamiltonian(s) of the form \eqref{eq:H_Rabi} act simultaneously with~\eqref{eq:Hqubits}, potentially with different Rabi frequencies $\Omega_i$.
As these Hamiltonians do not commute, the time evolution of the system usually eludes exact analytical treatment.
However, in applications one aims to make the Rabi frequencies orders of magnitude larger than the entries of $J$, which govern the entangling dynamics.
This allows for the approximation that the gates induced by the Hamiltonian~\eqref{eq:H_Rabi} are instantaneous, or equivalently that the Hamiltonian~\eqref{eq:Hqubits} is negligible while the microwave is on.
The validity of this approximation and the errors introduced by it are discussed in \cref{sec:limits} and Ref.~\cite{parra_rodriguez_digital_analog_2020}.
Time evolution under the Rabi Hamiltonian~\eqref{eq:H_Rabi} for a time $\theta/\Omega$ implements a family of Bloch rotation gates,
\begin{equation}
	\Gate{R}(\theta, \phi) \coloneqq \exp\!\left( -\frac{\i\theta}{\hbar\Omega} H_\mathrm{Rabi}(\Omega, \phi) \right).
\end{equation}
On the Bloch sphere, they rotate by an angle $\theta$ around an axis in the $xy$-plane specified by azimuth $\phi$.
The $\Gate{X}$ and $\Gate{H}$ gates necessary for our analysis are generated by
\begin{equation}
	\Gate{X} = \i \Gate{R}(\pi, 0), \qquad
	\Gate{H} = \i \Gate{R}(\pi, 0) \Gate{R}\left( \frac{\pi}{2}, \frac{\pi}{2} \right).
\end{equation}

\subsection{Physical and virtual recoding}
As mentioned before, the $\pi$~qubit with $m_F=0$ is called ``magnetic insensitive'' because it depends on $B$ only in second order.
This can be used to exclude (up to quadratic corrections) qubits from the interaction and perform ``subset operations''.
Corresponding rows and columns in the coupling matrix $J(\vec m_F)$ in \cref{eq:def_J} manifestly vanish.
However, insensitivity also means that all $\pi$~qubits have roughly the same transition frequency, see \cref{eq:omega_gen}.
Hence, they are not separately addressable, and we cannot perform independent single-qubit operations on them.
The advantages and disadvantages of $\pi$~qubits naturally pose the question, whether one can change the qubit basis during computation.

This can be answered in the affirmative, but one intermediately has to consider (at least) three-dimensional subspaces instead of qubits.
Let $\Gate{X}_\pm$ denote the $\Gate{X}$~gate on the $\sigma^\pm$~qubit of the same ion.
These gates ``swap'' the two states on which they act, and one can easily confirm that the local sequences $\Gate{X}_+ \Gate{X}_- \Gate{X}_+$ and $\Gate{X}_- \Gate{X}_+ \Gate{X}_-$ both effectively swap the states $\ket{F=1,m_F=+1}$ and $\ket{F=1,m_F=-1}$.
Thus, they ``recode'' between the $\sigma^+$ and $\sigma^-$ qubit.

Coding into and out of the $\pi$ qubit is slightly more complicated due to the mentioned degenerate transitions.
Yet we can use the global swap-like operation $\Gate{X}_0^{\otimes N}$, which applies an $\Gate{X}$~gate to all $\pi$ qubits at once (assuming that shifts due to the quadratic Zeeman effect do not become prohibitive).
Sequences of the form
\begin{equation}
	\Gate{X}_0^{\otimes N} \big( \ldots \otimes \Gate{I} \otimes \ldots \otimes \Gate{X}_+ \otimes \ldots \otimes \Gate{X}_- \otimes \ldots \big) \Gate{X}_0^{\otimes N},
\end{equation}
i.e.\ with an arbitrary combination of local $\Gate{X}_\pm$ and idle gates in between two global $\Gate{X}_0^{\otimes N}$~gates, achieve a recoding between $\pi$ and $\sigma^\pm$ qubits (for an $\Gate{X}_\pm$~gate) or just swap back and forth, effectively doing nothing (for an idle gate).

The described sequences of three rounds of microwave gates are able to physically manipulate the way quantum information is encoded in the ions.
However, our main concern is not how information is stored, but how it can be processed.
To that end one usually does not need to recode between the physical $\sigma^+$ and $\sigma^-$~qubits, but can instead use single layers of $\Gate{X}$~gates (on the physical qubit in use) to emulate the opposite encoding.
This can significantly reduce the number of single-qubit gates in a circuit, and we make heavy use of this possibility.
Using the notation of \cref{eq:virtual_encoding_basis}, a detailed derivation of this technique reads
\begin{equation}\begin{aligned}
	\Gate{X}^{\vec s} \exp\left( \frac{\i t}{2} \vec{Z}^T J(\vec{m}_F) \vec{Z} \right) \Gate{X}^{\vec s}
	&= \exp\left( \frac{\i t}{2} X^{\vec s} \vec{Z}^T X^{\vec s} J(\vec{m}_F) X^{\vec s} \vec{Z} X^{\vec s} \right) \\
	&= \exp\left( \frac{\i t}{2} \left( (-1)^{\vec s} \circ \vec{Z} \right)^T J(\vec{m}_F) \left( (-1)^{\vec s} \circ \vec{Z} \right) \right) \\
	&= \exp\left( \frac{\i t}{2} \vec{Z}^T \left( J(\vec{m}_F) \circ \left( (-1)^{\vec s} \left((-1)^{\vec s}\right)^T \right) \right) \vec{Z} \right) \\
	&= \exp\left( \frac{\i t}{2} \vec{Z}^T J\big( (-1)^{\vec s} \circ \vec{m}_F \big) \vec{Z} \right) \, ,
\end{aligned}\end{equation}
where we used that Pauli strings $X^{\vec s}$ are self-inverse and applied the Hadamard product identity \cref{eq:Had_identity}.
Negative signs in $(-1)^{\vec s}$ exchange, from the perspective of the $J$~coupling, the $\sigma$~qubits, but leave $\pi$~qubits invariant.
If this technique is used consecutively, subsequent $\Gate{X}$ gate layers can be combined into a single one,
\begin{equation}\label{eq:encoding_transition}
	\Gate{X}^{\vec s} \Gate{X}^{\vec s'} = \Gate{X}^{\vec s \oplus \vec s'},
\end{equation}
saving even more single-qubit gates, as detailed in the main text.

\section{Frame theory}
\label{apdx:frame}
In this section we connect our approach to frame theory, which is concerned with spanning sets for Hilbert spaces, i.e.\ with generalizations of the notion of a basis, see e.g.\ Ref.~\cite{waldron_introduction_2018}.

We consider the space of symmetric matrices with vanishing diagonal that is defined by
\begin{equation}\label{def:symtraceless_1}
	\HTL(\RR^n) \coloneqq \Set*{M \in \mathrm{Sym}(\RR^n)\given M_{ii}= 0 \ \forall i\in [n]} \, .
\end{equation}
Moreover, we denote the set of outer products of all possible encodings $\vec m= (-1)^{\vec b}$, where
$\vec b \in \FF_2^n$, by
\begin{equation}\label{def:outprod_1}
	\mc{V} \coloneqq \Set*{ \vec m \vec m^T \given \vec m \in \{-1,+1\}^n , m_n = +1} \, .
\end{equation}
We further define what it means for a frame to be harmonic:
\begin{definition}
\cite[Definition 11.1]{waldron_introduction_2018}
 Let $G$ be a finite abelian group, and let $\hat G$ be the set of irreducible characters of $G$.
 A tight frame $\{ \phi_i \}_{i \in J}$ for $\RR^k$ is called a \emph{harmonic} if it is unitarily equivalent to
 \begin{equation}
  (\xi|_J)_{\xi \in \hat G} \subset \RR^{|J|} \equiv \RR^k,
 \end{equation}
where $J \subset G$ and $|J| = k$.
\end{definition}

The following theorem extends \cref{thrm:frame1} and shows that $\mc{V}$ is a frame with certain properties.
\begin{theorem}
\label{thrm:frame}
$\mc{V}$
is a balanced equal-norm harmonic tight frame for $\HTL(\RR^n)$.
\end{theorem}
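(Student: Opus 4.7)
The plan is to verify the four properties separately, relying on the observation that the diagonal of $\vec m \vec m^T$ is invisible to $\HTL(\RR^n)$. Indeed, for any $M \in \HTL(\RR^n)$ we have $\langle M, \vec m \vec m^T \rangle_F = \vec m^T M \vec m = 2 \sum_{i<j} M_{ij} m_i m_j$, so throughout the argument $\vec m \vec m^T$ may be replaced by its projection $\phi_{\vec m} \coloneqq \vec m \vec m^T - I \in \HTL(\RR^n)$. It is convenient to parametrize $\mc V$ by $\vec b \in \FF_2^{n-1}$ via $m_i = (-1)^{b_i}$ for $i < n$ and $m_n = +1$, so that $|\mc V| = 2^{n-1}$.

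Equal norm is immediate since $\|\phi_{\vec m}\|_F^2 = \sum_{i \neq j} m_i^2 m_j^2 = n(n-1)$ is independent of $\vec m$. For balance I would verify entrywise that $\sum_{\vec m \in \mc V} \phi_{\vec m} = 0$: for each $i \neq j$ the sum $\sum_{\vec m} m_i m_j$ is a sum over $\FF_2^{n-1}$ of a nontrivial character (this requires a brief case split according to whether $n \in \{i,j\}$), hence vanishes.

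For tightness, I would expand
\begin{equation*}
\sum_{\vec m \in \mc V} \langle M, \phi_{\vec m}\rangle_F^2 = \sum_{i \neq j,\, k \neq l} M_{ij} M_{kl} \sum_{\vec m \in \mc V} m_i m_j m_k m_l,
\end{equation*}
where $M$'s zero diagonal restricts the outer sum. The inner sum over $\mc V$ equals $2^{n-1}$ when every index other than $n$ appears with even multiplicity in $(i,j,k,l)$, and vanishes otherwise (again a character-sum argument). Given the constraints $i\neq j$ and $k\neq l$, I would check that this forces $\{i,j\} = \{k,l\}$, yielding $\sum_{\vec m} \langle M, \phi_{\vec m}\rangle_F^2 = 2^n \|M\|_F^2$, so the frame bound is $a = 2^n$.

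The harmonic property is the most delicate step and the main obstacle: naively one would hope to encode $\mc V$ via the weight-two characters $\{e_i + e_j : 1\leq i<j\leq n\}$ of $\FF_2^n$, but the convention $m_n = +1$ must be absorbed into the construction. I would take $G \coloneqq \FF_2^{n-1}$ with dual characters $\xi_{\vec b}(\vec a) = (-1)^{\vec a \cdot \vec b}$, and
\begin{equation*}
J \coloneqq \{e_i + e_j : 1 \leq i < j \leq n-1\} \cup \{e_i : 1 \leq i \leq n-1\} \subset G.
\end{equation*}
These $\binom{n-1}{2} + (n-1) = \binom{n}{2}$ vectors are pairwise distinct by Hamming weight, so $|J| = \dim \HTL(\RR^n)$. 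A direct computation shows $\xi_{\vec b}(e_i + e_j) = m_i m_j$ for $i<j<n$ and $\xi_{\vec b}(e_i) = m_i = m_i m_n$, exhibiting an index permutation that intertwines $(\phi_{\vec m})$ with $(\xi_{\vec b}|_J)_{\xi_{\vec b} \in \hat G}$ and completing the harmonic identification.
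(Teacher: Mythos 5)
Your proposal is correct, and it rests on the same key observation as the paper's proof: the off-diagonal entries $m_i m_j$ of the frame elements are Walsh functions, i.e.\ characters of an elementary abelian $2$-group evaluated at the points $e_i \oplus e_j$. Where you differ is in how the frame properties are then established. The paper identifies the rows of the synthesis matrix with rows of a Hadamard matrix (the character table of $C_2\times\dots\times C_2$) and then invokes two cited frame-theoretic results, one to get ``equal-norm tight'' from orthogonality of those rows and one to get ``harmonic,'' checking only balance by hand; you instead verify everything directly — equal norm and balance by explicit character sums, tightness by a fourth-moment computation (your case analysis showing that only $\{i,j\}=\{k,l\}$ survives is correct, giving frame constant $2^{n}$ in your Frobenius convention, equivalently $2^{n-1}=|\mc V|$ in lower-triangular coordinates), and harmonicity by exhibiting the explicit subset $J\subset\FF_2^{n-1}$ of weight-one and weight-two vectors with $|J|=\binom{n}{2}$. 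This buys self-containedness, and your handling of the convention $m_n=+1$ (working with $G=\FF_2^{n-1}$ rather than $\FF_2^{n}$) is in fact tidier than the paper's phrasing, which speaks of all $\vec b\in\FF_2^{n}$ even though $\mc V$ contains only $2^{n-1}$ elements. One cosmetic caveat: since $\vec m\vec m^T\notin\HTL(\RR^n)$, some identification is unavoidable; with your choice of the Frobenius inner product on the projections $\phi_{\vec m}$, the identification of $\HTL(\RR^n)$ with $\RR^{n(n-1)/2}$ carries a factor $\sqrt 2$, so the ``unitary equivalence'' to $(\xi|_J)_{\xi\in\hat G}$ required by the definition of harmonic holds only up to this overall scalar, whereas the paper's implicit lower-triangular vectorization makes it exact. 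This affects none of the four properties in substance.
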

\begin{proof}
Let $k= n(n-1)/2 = \dim (\HTL(\RR^n))$ and $l = 2^{n-1} = |\mc{V}|$.
We denote the synthesis operator of $\mc{V}$ by $V: \RR_{\geq 0}^l \rightarrow \HTL(\RR^n) : \vec \lambda \mapsto \sum_{\vec m} \lambda_{\vec m} \vec m \vec m^T$ which can be represented by a matrix $V  \in \{-1,+1 \}^{l \times k}$.
We further denote the column $\vec v(\vec m)$ of $V$, containing the lower triangular elements of $\vec m \vec m^T$.
First we want to show that each row of $V$ corresponds to a row of a Hadamard matrix.
Since the elements of any Hadamard matrix $H_{\vec y,\vec x}$ (with normalization factor) are given by Walsh functions  $H_{\vec y,\vec x} = 2^{-n/2} W_{\vec y} (\vec x) = 2^{-n/2} (-1)^{\vec y \cdot \vec x}$ for $\vec y,\vec x \in \FF_2^n$, this amounts to showing that the rows of $V$ correspond to Walsh functions $W_{\vec y} (\vec x) \coloneqq (-1)^{\vec y \cdot \vec x} $, where $\vec y \cdot \vec x = \sum_i^n y_i x_i$.

Writing $\vec m=(-1)^{\vec b}$, the components of the matrix $\vec m \vec m^T$ are $(\vec m \vec m^T)_{gh} = (-1)^{b_g \oplus b_h}$.
Then the entries of the columns $\vec v (\vec m)$ are $\vec v (\vec m)_{gh} = (-1)^{b_g \oplus b_h}= (-1)^{\vec b \cdot \left(\vec e^g \oplus \vec e^h \right) }$, where the tuple $g$, $h$ is the index of the row of $V$ and  $\vec e^i \in \FF_2^{n}$ denotes the $i$-th standard unit vector.

This shows that $\vec v (\vec m)_{gh} = W_{\vec b} (\vec e^g \oplus \vec e^h)$.
So the $\vec e^g \oplus \vec e^h$ encodes the row indices and the $\vec b$ encodes the column indices of the Hadamard matrix.
Since we consider all $\vec b \in \FF_2^n$ we have all columns of the Hadamard matrix.
This shows that each row of $V$ corresponds to a row of a Hadamard matrix.

Since the rows/columns of a Hadamard matrix form an orthonormal basis one can use the row construction of tight frames from orthogonal projections \cite[Theorem 2.3]{waldron_introduction_2018} to get an equal-norm tight frame.
Furthermore, the Hadamard matrix is the character table of the cyclic group $C_2^{n} = C_2 \times \dots \times C_2$ which is abelian.
This can be seen by its recursive definition $H_n = H_1 \otimes H_{n-1}$, where $H_1$ is the character table of $C_2$.
Thus, Ref.~\cite[Theorem 11.1]{waldron_introduction_2018} implies that this equal-norm tight frame is also a harmonic.

It is balanced if $\sum_{\vec b}  (-1)^{b_g \oplus b_h} = 0$ for all $g, h = 1, \dots , n$ with $g \neq h$.
Since we only consider the lower/upper triangular matrix without the diagonal $g=h$, our frame is indeed balanced.
\end{proof}

\begin{corollary}
 The normalized frame $\{ \vec m \vec m^T /\sqrt{k} \}_{\vec m \in \{-1,+1\}^n}$ for $\HTL(\RR^n)$ forms a spherical $2$-design.
\end{corollary}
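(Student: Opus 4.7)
The plan is to deduce the spherical 2-design property from the structural results on $\mathcal{V}$ already established in \cref{thrm:frame}, via a classical characterization of 2-designs. Specifically, I would invoke the well-known fact (see, e.g., \cite{waldron_introduction_2018}) that a finite set $\{v_1,\dots,v_N\}$ of unit vectors in a real inner product space of dimension $k$ is a spherical 2-design if and only if it is balanced and forms a tight frame, i.e.,
\begin{equation*}
\sum_{i=1}^N v_i = 0 \quad\text{and}\quad \sum_{i=1}^N v_i v_i^T = \tfrac{N}{k}\, \Id.
\end{equation*}
This reduces the task to checking three conditions for the collection $\{\vec m \vec m^T/\sqrt k\}_{\vec m \in \{-1,+1\}^n}$ regarded as a set of vectors in $\HTL(\RR^n)$: unit norm, balancedness, and the tight-frame identity.

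For the norm, I would equip $\HTL(\RR^n)$ with the natural inner product $\langle A,B\rangle = \sum_{i<j}A_{ij}B_{ij}$, under which $\|\vec m\vec m^T\|^2 = \sum_{i<j} m_i^2 m_j^2 = \binom{n}{2} = k$, so that the scaling by $1/\sqrt k$ indeed produces unit vectors. For balancedness and tightness, I would observe that the full family $\{\vec m \vec m^T\}_{\vec m \in \{-1,+1\}^n}$ is simply two copies of $\mathcal{V}$ because $\vec m \vec m^T = (-\vec m)(-\vec m)^T$; hence both properties transfer directly from \cref{thrm:frame}, with each constant merely doubled. Since scaling all frame elements by the common factor $1/\sqrt k$ preserves both properties (only rescaling the tight-frame constant), the normalized collection is a balanced unit-norm tight frame, and the above characterization yields the 2-design property.

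The main obstacle is not conceptual but bookkeeping: one must ensure that the inner-product convention chosen on $\HTL(\RR^n)$ here is consistent with the one under which $\mathcal{V}$ is declared an equal-norm tight frame in \cref{thrm:frame}, so that the normalization $1/\sqrt k$ and the resulting tight-frame constant $2^n/k$ line up. Beyond this, the argument is a direct application of the frame-theoretic machinery already set up in \cref{apdx:frame}.
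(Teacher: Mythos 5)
Your proposal is correct and follows essentially the same route as the paper: both reduce the claim to \cref{thrm:frame} together with the standard characterization that a balanced unit-norm tight frame is a spherical $2$-design (the paper simply cites \cite[Proposition 1.2]{holmes_optimal_2004} and \cite[Proposition 6.1]{waldron_introduction_2018} for that equivalence). Your explicit verification of the normalization $\norm{\vec m \vec m^T} = \sqrt{k}$ and of the harmless doubling caused by passing from $\mc V$ (with $m_n=+1$) to all of $\{-1,+1\}^n$ is bookkeeping the paper leaves implicit, but it does not change the argument.
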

\begin{proof}
The elements of the normalized frame $\{ \vec m \vec m^T /\sqrt{k} \}_{\vec m \in \{-1,+1\}^n}$ constitute a set of unit vectors in $\RR^k$ which form a normalized balanced tight frame by \cref{thrm:frame}. 
Then, by Refs.~\cite[Proposition 1.2]{holmes_optimal_2004} and \cite[Proposition 6.1]{waldron_introduction_2018}, $\{ \vec m \vec m^T /\sqrt{k} \}_{\vec m \in \{-1,+1\}^n}$ is also a spherical $2$-design.
\end{proof}

\begin{theorem}
\label{thrm:gram}
The entries of the Gram matrix $P=V^T V$ are given by
 \begin{equation}
  P_{\vec m , \vec m^\prime}
  = \langle \vec v (\vec m) , \vec v (\vec m^\prime) \rangle
  = \frac{n}{2} (n-1)-2 \Delta_{\vec b, \vec b^\prime} (n- \Delta_{\vec b, \vec b^\prime} ) \, ,
 \end{equation}
where $\vec v (\vec m)$ are the columns of $V$, containing the lower triangular elements of $\vec m \vec m^T$, and $\Delta_{\vec b, \vec b^\prime}\coloneqq |\vec b \oplus \vec b^\prime|$ is the Hamming distance between $\vec b$ and $\vec b^\prime$.
\end{theorem}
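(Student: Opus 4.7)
The plan is to expand the inner product entrywise and then carry out a simple combinatorial count weighted by signs. Writing $\vec v(\vec m)$ as the vector whose components are $m_g m_h$ for $g>h$ (and analogously for $\vec v(\vec m')$), the inner product becomes
\begin{equation*}
\langle \vec v(\vec m),\vec v(\vec m')\rangle \;=\; \sum_{g>h} m_g m_h\,m'_g m'_h.
\end{equation*}
Substituting $\vec m=(-1)^{\vec b}$ and $\vec m'=(-1)^{\vec b'}$ and introducing $\vec c \coloneqq \vec b \oplus \vec b'$, each summand simplifies to $(-1)^{c_g \oplus c_h}$, which equals $+1$ if $c_g=c_h$ and $-1$ otherwise.

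Next I would partition the index pairs according to the values of $c_g$ and $c_h$. Since $|\vec c|=\Delta_{\vec b,\vec b'}$, there are exactly $\Delta_{\vec b,\vec b'}$ indices $i$ with $c_i=1$ and $n-\Delta_{\vec b,\vec b'}$ indices with $c_i=0$. Hence the number of unordered pairs with $c_g=c_h$ is $\binom{\Delta_{\vec b,\vec b'}}{2} + \binom{n-\Delta_{\vec b,\vec b'}}{2}$ (each contributing $+1$), while the number of pairs with $c_g\neq c_h$ is $\Delta_{\vec b,\vec b'}(n-\Delta_{\vec b,\vec b'})$ (each contributing $-1$).

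Collecting these contributions, the inner product equals
\begin{equation*}
\binom{\Delta_{\vec b,\vec b'}}{2} + \binom{n-\Delta_{\vec b,\vec b'}}{2} - \Delta_{\vec b,\vec b'}(n-\Delta_{\vec b,\vec b'}).
\end{equation*}
The remainder is an elementary algebraic manipulation: the two binomial terms combine with the subtracted cross term so that, after using $\binom{\Delta}{2}+\binom{n-\Delta}{2}+\Delta(n-\Delta)=\binom{n}{2}$, one obtains exactly $\tfrac{n(n-1)}{2}-2\Delta_{\vec b,\vec b'}(n-\Delta_{\vec b,\vec b'})$, which is the claimed expression for $P_{\vec m,\vec m'}$.

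There is no real conceptual obstacle here; the only thing to be mindful of is consistency between the ``lower triangular, $g>h$'' convention of $\vec v(\vec m)$ and the unordered-pair counting $\binom{\cdot}{2}$, so that diagonal pairs $g=h$ (which would correspond to the vanishing diagonal of $\vec m\vec m^T$) are correctly excluded throughout.
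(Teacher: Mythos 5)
Your proposal is correct and follows essentially the same route as the paper's proof: expand the inner product entrywise, reduce each summand to $(-1)^{c_g\oplus c_h}$ with $\vec c=\vec b\oplus\vec b'$, and count the $\pm 1$ contributions over the $\binom{n}{2}$ off-diagonal pairs, yielding $\tfrac{n(n-1)}{2}-2\Delta_{\vec b,\vec b'}(n-\Delta_{\vec b,\vec b'})$. The only cosmetic difference is that you count the $+1$ pairs explicitly via binomial coefficients, whereas the paper counts only the $\Delta(n-\Delta)$ sign-flipped pairs and subtracts twice their number.
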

\begin{proof}
 As in the proof above, write the $(g,h)$-th entry of the column $\vec v (\vec m)$ as
 \begin{equation}
  \vec v (\vec m)_{gh} = (-1)^{b_g \oplus b_h} \, .
 \end{equation}
This yields
\begin{equation}
\begin{aligned}
 \langle \vec v (\vec m) , \vec v (\vec m^\prime) \rangle &= \sum_{g < h} (-1)^{b_g \oplus b_h} (-1)^{ b_g^\prime \oplus b_h^\prime} \\
 &= \sum_{g < h} (-1)^{b_g \oplus b_g^\prime \oplus  b_h \oplus b_h^\prime} \\
 &= \sum_{g < h} (-1)^{c_g \oplus  c_h} \, ,
\end{aligned}
\end{equation}
where we set $\vec c \coloneqq \vec b \oplus \vec b^\prime$.
The summand after the last equation is the outer product $\vec{\tilde m} \vec{\tilde m}^T$ for $\vec{\tilde m} \in \{-1, +1\}^n$ and $\vec{\tilde m} = (-1)^{\vec c}$.
If the Hamming distance $\Delta_{\vec b, \vec b^\prime}$ vanishes, 
then $\vec{\tilde m} = (+1, \dots , +1)$ and $\langle \vec v (\vec m) , \vec v (\vec m^\prime) \rangle = n(n-1)/2$ which is the maximal entry of the Gram matrix located at the diagonal.
If $\Delta_{\vec b, \vec b^\prime} \neq 0$, $\vec{\tilde m}$ contains $\Delta_{\vec b, \vec b^\prime}$-many summands $-1$, such that the lower triangular part of $\vec{\tilde m} \vec{\tilde m}^T$ contains $\Delta_{\vec b, \vec b^\prime}(n-\Delta_{\vec b, \vec b^\prime})$-many summands $-1$.
Therefore,
\begin{equation}
 \langle \vec v (\vec m) , \vec v (\vec m^\prime) \rangle = \frac{n}{2} (n-1)-2 \Delta_{\vec b, \vec b^\prime} (n- \Delta_{\vec b, \vec b^\prime} ) \, .
\end{equation}
\end{proof}

\section{Convex optimization arguments for the \texorpdfstring{\acf{LP}}{LP}}
\label{apdx:convOpt}
We investigate the sparsity and geometric properties of the optimal solutions of the \ac{LP}~\eqref{eq:LP1}.
First, we prove \cref{prop:optSparseSol1}, which is a standard result in linear programming:

\begin{proposition}[Sparse optimal solutions]
\label{lem:optSparseSol}
 There is an optimal solution to the \ac{LP}~\eqref{eq:LP1} with sparsity $\leq n(n-1)/2$ for every $M \in \HTL(\RR^n)$.
 The simplex algorithm is guaranteed to return such an optimal solution.
\end{proposition}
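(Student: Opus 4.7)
The plan is to invoke the fundamental theorem of linear programming, which guarantees that a feasible and bounded LP admits an optimal solution attained at a vertex (a basic feasible solution) of its feasible polyhedron, together with the fact that the simplex algorithm is designed to traverse such vertices. I would organize the argument in three short steps.

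First, I would verify the two structural premises. Feasibility of the LP~\eqref{eq:LP1} for every $M \in \HTL(\RR^n)$ follows immediately from \cref{thrm:frame1}, since $\mc V$ spans $\HTL(\RR^n)$ and admits non-negative coefficient decompositions (by symmetry $\vec m \vec m^T = (-\vec m)(-\vec m)^T$, any real decomposition can be rearranged into a non-negative one). Boundedness of the objective from below is trivial because $\vec 1^T \vec\lambda \geq 0$ on the feasible set $\vec\lambda \in \RR_{\geq 0}^{2^{n-1}}$. Combined, this ensures that an optimal value is attained.

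Second, I would cast the problem in standard form and count constraints. The equality constraints $M = \sum_{\vec m} \lambda_{\vec m} \vec m \vec m^T$ live in the $n(n-1)/2$-dimensional space $\HTL(\RR^n)$, so vectorizing the lower-triangular off-diagonal entries yields a linear system $V \vec\lambda = \vecmap(M)$ with at most $n(n-1)/2$ rows. By the standard basic-feasible-solution characterization (see e.g.\ Ref.~\cite{Karloff1991}), at any vertex of the feasible polyhedron $\{\vec\lambda \geq 0 : V \vec\lambda = \vecmap(M)\}$, the columns of $V$ indexed by the support of $\vec\lambda$ are linearly independent; hence the support has size at most $\rank(V) \leq n(n-1)/2$.

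Third, I would apply the fundamental theorem of linear programming: since the LP is feasible and bounded and the feasible polyhedron has at least one vertex (which holds because the non-negativity constraints prevent the feasible set from containing a line), an optimum is attained at a vertex. Such a vertex is therefore a feasible $\vec\lambda$ with sparsity at most $n(n-1)/2$ that achieves the minimum. Finally, I would invoke the textbook fact that the simplex algorithm, starting from any basic feasible solution and moving along edges toward decreasing objective, terminates (with appropriate anti-cycling rules such as Bland's rule) at an optimal basic feasible solution, which by the preceding count has sparsity at most $n(n-1)/2$.

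There is essentially no technical obstacle here; the only subtlety is to justify that the feasible polyhedron has a vertex at all, which is immediate from the presence of $\vec\lambda \geq 0$ constraints, and to note that the bound $n(n-1)/2$ is the dimension of $\HTL(\RR^n)$ rather than the ambient dimension $2^{n-1}$ of $\vec\lambda$. Both points can be stated in one or two lines.
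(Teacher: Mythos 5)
Your proposal is correct and follows essentially the same route as the paper: a standard linear-programming argument that vertices (basic feasible solutions) of the feasible polytope have support at most the number of equality constraints, $n(n-1)/2 = \dim \HTL(\RR^n)$, that the optimum is attained at such a vertex, and that the simplex algorithm returns one. One small caveat: your parenthetical justification of feasibility via the symmetry $\vec m \vec m^T = (-\vec m)(-\vec m)^T$ does not work (that symmetry is already spent in restricting to $m_n=+1$ and cannot flip the sign of a negative coefficient); non-negativity instead follows from the frame being \emph{balanced}, i.e.\ $\sum_{\vec m} \vec m \vec m^T$ vanishes off the diagonal, so one can shift any real decomposition by a large multiple of this zero combination --- but since \cref{thrm:frame1} already asserts feasibility, this does not affect the validity of your proof of the proposition itself.
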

\begin{proof}
 The \ac{LP}~\eqref{eq:LP1} has $m=n(n-1)/2$ equality constraints.
 The feasible polytope is the convex polytope obtained by intersecting the $(n-m)$-dimensional subspace defined by those with the positive cone $x \geq 0$.
 To define a vertex of the feasible polytope, a point $x$ has to saturate at least $n-m$ many inequalities $x_i\geq 0$, hence, it has at most $n-(n-m) = m$ many non-zero entries.
 The optimal solutions are obtained by minimizing over the feasible polytope and thus correspond, in general, to a face of this polytope.
 Any vertex of this face defines an optimal solution which has to be at least $m=n(n-1)/2$-sparse.
 The last statement follows since the simplex algorithm only returns vertices of the feasible polytope.
\end{proof}

Alternative algorithms like interior point methods should be avoided. 
Since those are not guaranteed to return vertices of the feasible polytope if there is an entire face of optimal solutions, their solution will generally be a dense vector.

We can say a bit more about the solutions of the \ac{LP}~\eqref{eq:LP1} by geometrical observations.
These observations hold more generally for \ac{LP}s of the following form.

\begin{definition}
\label{def:LP_conical}
 Suppose that vectors $v_1,\dots,v_N \in \R^d$ are the vertices of a full-dimensional polytope $P$ and the origin is contained in the convex hull of $P$, i.e.\ $0\in\mathrm{conv}(P)$.
 Given a vector $u\in\R^d$, we define the following linear program:
 \begin{mini}
	{}{  \langle \mathbf 1,x\rangle = \sum_{i=1}^N x_i}{}{}
	\label{eq:LP_conical}
	\addConstraint{u} {=Vx}{}{}
	\addConstraint{x}{ \geq 0 }{}{}
 \end{mini}
 where $V = \sum_{i=1}^N v_i e_i^\top \in\R^{d\times N}$.
\end{definition}

\begin{lemma}
 In the setting of \cref{def:LP_conical}, the following holds:
 \begin{enumerate}[label= (\roman*)]\itemsep=0pt
  \item The point $u\in\R^d$ lies in a cone generated by a face $F$ of $P$. In particular, there is a feasible solution $x$ of the \ac{LP}~\eqref{eq:LP_conical} with support only on the vertices of $F$.
  \item Every feasible solution with support on the vertices of $F$ has the same objective value, namely $\langle f, u\rangle$ where $f$ is a normal vector of $F$, i.e.~$F \subset \{ y\in\R^d \, | \, \langle f, y\rangle = 1 \}$. This value does not depend on the choice of normal vector.
  \item There is a feasible solution with sparsity $\leq \dim F + 1 \leq d$ and objective value $\langle f, u\rangle$.
  \item The optimal solutions of the \ac{LP}~\eqref{eq:LP_conical} correspond exactly to the possible conical combinations of $u$ in $\cone(F^*) \coloneqq \Set*{ \sum_i x_i v_i \given v_i \in F^*, x_i \in \RR_{\geq 0} }$ where $F^*$ is the lowest-dimensional face of $P$ such that $u\in\cone(F^*)$.
  In particular, the minimum is given by $\langle f, u\rangle$ where $f$ is some normal vector of $F^*$.
 \end{enumerate}
\end{lemma}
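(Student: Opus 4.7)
The plan is to combine the fan structure of the faces of $P$ with LP duality. Since $P$ is full-dimensional and the origin lies in its interior (which I read as the operational content of $0\in\conv(P)$), the cones $\cone(F)$ over the faces $F$ of $P$ cover $\R^d$; this central fan is the geometric backbone of the argument.

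For part~(i), given $u\in\R^d$ I would pick a face $F$ from the fan with $u\in\cone(F)$, which by the definition of $\cone(F)$ immediately yields non-negative coefficients $x_i$ supported on the vertices of $F$ with $Vx = u$. For part~(ii), any face $F$ on the boundary of $P$ has $0\notin\aff(F)$, so there exists an affine functional $y\mapsto\langle f,y\rangle$ equal to $1$ on $\aff(F)$ and in particular on each $v_i\in F$. For a feasible $x$ with $\supp(x)\subseteq F$,
\[
\mathbf{1}^\T x \;=\; \sum_{v_i\in F} x_i\langle f,v_i\rangle \;=\; \langle f,Vx\rangle \;=\; \langle f,u\rangle,
\]
and two admissible normals differ by a vector orthogonal to $\linspan(F)\ni u$, so the value is intrinsic to $F$. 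Part~(iii) is Carath\'eodory's theorem applied inside $\cone(F)$, whose linear dimension is $\dim F+1\le d$, so at most $\dim F+1$ generators suffice.

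For part~(iv), I would read the geometry off LP duality. The dual of~\eqref{eq:LP_conical} is $\max\{\langle u,y\rangle:\langle v_i,y\rangle\le 1\ \forall i\}$, the support function of the polar $P^\circ$ at $u$, which coincides with the Minkowski gauge $\mu_P(u)=\inf\{t>0:u\in tP\}$. The optimal value $t^\ast=\mu_P(u)$ is characterized by $u/t^\ast\in\partial P$, so $u/t^\ast$ lies in the relative interior of a unique face $F^\ast$, which is exactly the smallest face whose cone contains $u$. Any optimal primal $x$ then realizes $u/t^\ast=\sum(x_i/t^\ast)v_i$ as a convex combination: choosing a supporting functional $f^\ast$ with $F^\ast=\{y\in P:\langle f^\ast,y\rangle=1\}$, the equality $\langle f^\ast,u/t^\ast\rangle=1$ combined with $\langle f^\ast,v_i\rangle\le 1$ forces $v_i\in F^\ast$ whenever $x_i>0$. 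Conversely, every non-negative combination of $u$ using vertices of $F^\ast$ has value $\langle f^\ast,u\rangle=t^\ast$ by part~(ii), hence is optimal.

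The main obstacle I anticipate is pinning down $F^\ast$ cleanly in part~(iv), because $u$ may lie on the boundary between several fan cones and therefore belong to the cone of many faces simultaneously; the supporting-hyperplane argument above, together with the uniqueness of the face in whose relative interior $u/t^\ast$ lies, resolves this. A secondary check is strong LP duality, which follows from the boundedness of $P^\circ$ under $0\in\interior(P)$, so no duality gap obstructs the identification of $t^\ast$ with $\mu_P(u)$.
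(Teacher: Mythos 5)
Your proposal is correct, and for parts (i)--(iii) it follows essentially the paper's own route: the central fan of cones over the proper faces of $P$ gives feasibility on a single face, the normal-vector computation $\vec 1^T x=\langle f,Vx\rangle=\langle f,u\rangle$ gives (ii), and your conical Carath\'eodory step is just a repackaging of the paper's triangulation of $F$ into simplices. Your reading of $0\in\conv(P)$ as $0\in\interior(P)$ is the right one; the paper's proof implicitly needs it as well (the ray $\RR_{>0}u$ must meet $\partial P$), and it holds in the intended application since the encodings form a balanced spanning set. Where you genuinely diverge is (iv): the paper argues by contradiction, picking a supporting functional with $\langle f,v_i\rangle<1$ for a support vertex outside the face and then iterating down to the minimal face $F^*$, whereas you first identify the optimal value with the Minkowski gauge $\mu_P(u)$ via the dual $\max\{\langle u,y\rangle: y\in P^\circ\}$, locate $F^*$ as the unique face whose relative interior contains $u/\mu_P(u)$, and then force $\supp(x)\subseteq F^*$ by the equality case of $\langle f^*,v_i\rangle\le 1$. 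The two arguments share the same core inequality, but yours buys an explicit formula for the optimum ($t^*=\mu_P(u)$) and a cleaner, non-iterative characterization of $F^*$ (at the cost of invoking duality/polarity, which strictly speaking is not even needed: weak duality plus your observation that any feasible solution supported on a proper face already has value $\mu_P(u)$ suffices), while the paper's argument is more elementary and self-contained.
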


\begin{proof}
 Statement (i) follows from the observation that given a point $u\in\R^d \setminus 0$, there is a unique $\vec \lambda > 0$ such that $\vec \lambda u$ lies on a face $F$ of $P$.
 Thus, $\cone(F)$ contains $u$.
 More precisely, the cones spanned by the facets of $P$ form a partition of $\R^d$ where the intersections between any two cones is either $\{0\}$ or a cone spanned by a lower-dimensional face of $P$.

 For statement (ii), let w.l.o.g.~$v_1,\dots,v_s$ be the vertices that lie in $F$.
 By assumption, $u\in\cone(v_1,\dots,v_s)$ and hence $u=\sum_{i=1}^s x_i v_i$.
 We find
 \begin{equation}
  \langle f, u \rangle = \sum_{i=1}^s x_i \langle f, v_i \rangle = \sum_{i=1}^s x_i = \langle \mathbf 1, x \rangle.
 \end{equation}
 Note that the objective value $\langle f,u \rangle$ is necessarily the same, for any choice of normal vector.

 Statement (iii) follows by triangulating the face $F$ with simplices.
 These simplices have $\dim F+1$ vertices and $u$ has to lie in the cone spanned by one of those.
 Thus, there is a feasible solution with sparsity at most $\dim F+1 \leq d$ and objective value $\langle f, u \rangle$.

 Finally, let $x^*$ be an optimal solution.
 Suppose that there is a $i\in\supp(x^*)$ such that the vertex $v_i$ is not in $F$.
 Then, we can find a supporting hyperplane of $F$ with normal vector $f$ such that $\langle f, v_i\rangle < 1$.
 We thus have
 \begin{equation}
  \langle f, u \rangle = \sum_{j\neq i} x^*_j \langle f, v_j \rangle + x^*_i \langle f, v_i \rangle < \sum_{j} x_j^* = \langle \mathbf 1, x^* \rangle.
 \end{equation}
 According to (ii), $\langle f, u \rangle$ is the objective value of a feasible solution with support on $F$.
 Hence, $x^*$ could not have been optimal, and all optimal solutions have to have support on $F$.
 Clearly, all previous arguments still hold if we find a face $F'\subset F$ of $P$ such that $u \in \cone(F')$ and hence the statement follows.
\end{proof}

\section{Acronyms}

\begin{acronym}[MAGIC]\itemsep.5\baselineskip
\acro{AGF}{average gate fidelity}
\acro{AQFT}{approximate quantum fourier transform}

\acro{BOG}{binned outcome generation}

\acro{CP}{completely positive}
\acro{CPT}{completely positive and trace preserving}
\acro{CS}{compressed sensing} 

\acro{DAQC}{digital-analog quantum computing}
\acro{DFE}{direct fidelity estimation} 
\acro{DM}{dark matter}
\acro{DD}{dynamical decoupling}

\acro{EASE}{efficient, arbitrary, simultaneously entangling}

\acro{GST}{gate set tomography}
\acro{GUE}{Gaussian unitary ensemble}

\acro{HOG}{heavy outcome generation}

\acro{LDP}{Lamb-Dicke parameter}
\acro{LP}{linear program}

\acro{MAGIC}{magnetic gradient-induced coupling}
\acro{MBL}{many-body localization}
\acro{MIP}{mixed integer program}
\acro{ML}{machine learning}
\acro{MLE}{maximum likelihood estimation}
\acro{MPO}{matrix product operator}
\acro{MPS}{matrix product state}
\acro{MS}{M{\o}lmer-S{\o}rensen}
\acro{MUBs}{mutually unbiased bases} 
\acro{mw}{micro wave}

\acro{NISQ}{noisy and intermediate-scale quantum}

\acro{POVM}{positive operator valued measure}
\acro{PVM}{projector-valued measure}

\acro{QAOA}{quantum approximate optimization algorithm}
\acro{QFT}{quantum Fourier transform}
\acro{QML}{quantum machine learning}
\acro{QMT}{measurement tomography}
\acro{QPT}{quantum process tomography}
\acro{QPU}{quantum processing unit}

\acro{RDM}{reduced density matrix}

\acro{SFE}{shadow fidelity estimation}
\acro{SIC}{symmetric, informationally complete}
\acro{SPAM}{state preparation and measurement}

\acro{RB}{randomized benchmarking}
\acro{rf}{radio frequency}
\acro{RIC}{restricted isometry constant}
\acro{RIP}{restricted isometry property}

\acro{TT}{tensor train}
\acro{TV}{total variation}

\acro{VQA}{variational quantum algorithm}

\acro{VQE}{variational quantum eigensolver}

\acro{XEB}{cross-entropy benchmarking}

\end{acronym}

\bibliographystyle{./myapsrev4-2}
\bibliography{mk_no_PC,new_no_PC}

\begin{thebibliography}{68}%
\makeatletter
\providecommand \@ifxundefined [1]{%
 \@ifx{#1\undefined}
}%
\providecommand \@ifnum [1]{%
 \ifnum #1\expandafter \@firstoftwo
 \else \expandafter \@secondoftwo
 \fi
}%
\providecommand \@ifx [1]{%
 \ifx #1\expandafter \@firstoftwo
 \else \expandafter \@secondoftwo
 \fi
}%
\providecommand \natexlab [1]{#1}%
\providecommand \enquote  [1]{``#1''}%
\providecommand \bibnamefont  [1]{#1}%
\providecommand \bibfnamefont [1]{#1}%
\providecommand \citenamefont [1]{#1}%
\providecommand \href@noop [0]{\@secondoftwo}%
\providecommand \href [0]{\begingroup \@sanitize@url \@href}%
\providecommand \@href[1]{\@@startlink{#1}\@@href}%
\providecommand \@@href[1]{\endgroup#1\@@endlink}%
\providecommand \@sanitize@url [0]{\catcode `\\12\catcode `\$12\catcode
  `\&12\catcode `\#12\catcode `\^12\catcode `\_12\catcode `\%12\relax}%
\providecommand \@@startlink[1]{}%
\providecommand \@@endlink[0]{}%
\providecommand \url  [0]{\begingroup\@sanitize@url \@url }%
\providecommand \@url [1]{\endgroup\@href {#1}{\urlprefix }}%
\providecommand \urlprefix  [0]{URL }%
\providecommand \Eprint[0]{\href }%
\providecommand \doibase [0]{https://doi.org/}%
\providecommand \selectlanguage [0]{\@gobble}%
\providecommand \bibinfo  [0]{\@secondoftwo}%
\providecommand \bibfield  [0]{\@secondoftwo}%
\providecommand \translation [1]{[#1]}%
\providecommand \BibitemOpen [0]{}%
\providecommand \bibitemStop [0]{}%
\providecommand \bibitemNoStop [0]{.\EOS\space}%
\providecommand \EOS [0]{\spacefactor3000\relax}%
\providecommand \BibitemShut  [1]{\csname bibitem#1\endcsname}%
\let\auto@bib@innerbib\@empty
\bibitem [{\citenamefont {{Preskill}}(2018)}]{Pre18}%
  \BibitemOpen
  \bibfield  {author} {\bibinfo {author} {\bibfnamefont {J.}~\bibnamefont
  {{Preskill}}},\ }\bibinfo {title} {\emph {Quantum computing in the {NISQ} era
  and beyond}},\ \href {https://doi.org/10.22331/q-2018-08-06-79} {\bibfield
  {journal} {\bibinfo  {journal} {Quantum}\ }\textbf {\bibinfo {volume} {2}},\
  \bibinfo {pages} {79} (\bibinfo {year} {2018})},\
  \Eprint{https://arxiv.org/abs/1801.00862} {arXiv:1801.00862}\BibitemShut
  {NoStop}%
\bibitem [{\citenamefont {Lidar}\ and\ \citenamefont
  {Brun}(2013)}]{LidBru13_new}%
  \BibitemOpen
  \bibfield  {author} {\bibinfo {author} {\bibfnamefont {D.~A.}\ \bibnamefont
  {Lidar}}\ and\ \bibinfo {author} {\bibfnamefont {T.~A.}\ \bibnamefont
  {Brun}},\ }\href
  {https://www.cambridge.org/core/books/quantum-error-correction/B51E8333050A0F9A67363254DC1EA15A}
  {\emph {\bibinfo {title} {Quantum error correction}}}\ (\bibinfo  {publisher}
  {Cambridge University Press},\ \bibinfo {year} {2013})\BibitemShut {NoStop}%
\bibitem [{\citenamefont {{Wang}}\ \emph {et~al.}(2001)\citenamefont {{Wang}},
  \citenamefont {{S{\o}rensen}},\ and\ \citenamefont
  {{M{\o}lmer}}}]{wang_multibit_2001}%
  \BibitemOpen
  \bibfield  {author} {\bibinfo {author} {\bibfnamefont {X.}~\bibnamefont
  {{Wang}}}, \bibinfo {author} {\bibfnamefont {A.}~\bibnamefont
  {{S{\o}rensen}}},\ and\ \bibinfo {author} {\bibfnamefont {K.}~\bibnamefont
  {{M{\o}lmer}}},\ }\bibinfo {title} {\emph {Multibit gates for quantum
  computing}},\ \href {https://doi.org/10.1103/PhysRevLett.86.3907} {\bibfield
  {journal} {\bibinfo  {journal} {\prl}\ }\textbf {\bibinfo {volume} {86}},\
  \bibinfo {pages} {3907} (\bibinfo {year} {2001})},\
  \Eprint{https://arxiv.org/abs/quant-ph/0012055}
  {arXiv:quant-ph/0012055}\BibitemShut {NoStop}%
\bibitem [{\citenamefont {{Monz}}\ \emph {et~al.}(2011)\citenamefont {{Monz}},
  \citenamefont {{Schindler}}, \citenamefont {{Barreiro}}, \citenamefont
  {{Chwalla}}, \citenamefont {{Nigg}}, \citenamefont {{Coish}}, \citenamefont
  {{Harlander}}, \citenamefont {{H{\"a}nsel}}, \citenamefont {{Hennrich}},\
  and\ \citenamefont {{Blatt}}}]{monz_14-qubit_2011}%
  \BibitemOpen
  \bibfield  {author} {\bibinfo {author} {\bibfnamefont {T.}~\bibnamefont
  {{Monz}}}, \bibinfo {author} {\bibfnamefont {P.}~\bibnamefont {{Schindler}}},
  \bibinfo {author} {\bibfnamefont {J.~T.}\ \bibnamefont {{Barreiro}}},
  \bibinfo {author} {\bibfnamefont {M.}~\bibnamefont {{Chwalla}}}, \bibinfo
  {author} {\bibfnamefont {D.}~\bibnamefont {{Nigg}}}, \bibinfo {author}
  {\bibfnamefont {W.~A.}\ \bibnamefont {{Coish}}}, \bibinfo {author}
  {\bibfnamefont {M.}~\bibnamefont {{Harlander}}}, \bibinfo {author}
  {\bibfnamefont {W.}~\bibnamefont {{H{\"a}nsel}}}, \bibinfo {author}
  {\bibfnamefont {M.}~\bibnamefont {{Hennrich}}},\ and\ \bibinfo {author}
  {\bibfnamefont {R.}~\bibnamefont {{Blatt}}},\ }\bibinfo {title} {\emph
  {14-qubit entanglement: Creation and coherence}},\ \href
  {https://doi.org/10.1103/PhysRevLett.106.130506} {\bibfield  {journal}
  {\bibinfo  {journal} {\prl}\ }\textbf {\bibinfo {volume} {106}},\ \bibinfo
  {eid} {130506} (\bibinfo {year} {2011})},\
  \Eprint{https://arxiv.org/abs/1009.6126} {arXiv:1009.6126}\BibitemShut
  {NoStop}%
\bibitem [{\citenamefont {{Linke}}\ \emph {et~al.}(2017)\citenamefont
  {{Linke}}, \citenamefont {{Maslov}}, \citenamefont {{Roetteler}},
  \citenamefont {{Debnath}}, \citenamefont {{Figgatt}}, \citenamefont
  {{Landsman}}, \citenamefont {{Wright}},\ and\ \citenamefont
  {{Monroe}}}]{Linke2017ExperimentalComparison}%
  \BibitemOpen
  \bibfield  {author} {\bibinfo {author} {\bibfnamefont {N.~M.}\ \bibnamefont
  {{Linke}}}, \bibinfo {author} {\bibfnamefont {D.}~\bibnamefont {{Maslov}}},
  \bibinfo {author} {\bibfnamefont {M.}~\bibnamefont {{Roetteler}}}, \bibinfo
  {author} {\bibfnamefont {S.}~\bibnamefont {{Debnath}}}, \bibinfo {author}
  {\bibfnamefont {C.}~\bibnamefont {{Figgatt}}}, \bibinfo {author}
  {\bibfnamefont {K.~A.}\ \bibnamefont {{Landsman}}}, \bibinfo {author}
  {\bibfnamefont {K.}~\bibnamefont {{Wright}}},\ and\ \bibinfo {author}
  {\bibfnamefont {C.}~\bibnamefont {{Monroe}}},\ }\bibinfo {title} {\emph
  {Experimental comparison of two quantum computing architectures}},\ \href
  {https://doi.org/10.1073/pnas.1618020114} {\bibfield  {journal} {\bibinfo
  {journal} {PNAS}\ }\textbf {\bibinfo {volume} {114}},\ \bibinfo {pages}
  {3305} (\bibinfo {year} {2017})},\ \Eprint{https://arxiv.org/abs/1702.01852}
  {arXiv:1702.01852}\BibitemShut {NoStop}%
\bibitem [{\citenamefont {Martinez}\ \emph {et~al.}(2016)\citenamefont
  {Martinez}, \citenamefont {Monz}, \citenamefont {Nigg}, \citenamefont
  {Schindler},\ and\ \citenamefont {Blatt}}]{martinez_compiling_2016}%
  \BibitemOpen
  \bibfield  {author} {\bibinfo {author} {\bibfnamefont {E.~A.}\ \bibnamefont
  {Martinez}}, \bibinfo {author} {\bibfnamefont {T.}~\bibnamefont {Monz}},
  \bibinfo {author} {\bibfnamefont {D.}~\bibnamefont {Nigg}}, \bibinfo {author}
  {\bibfnamefont {P.}~\bibnamefont {Schindler}},\ and\ \bibinfo {author}
  {\bibfnamefont {R.}~\bibnamefont {Blatt}},\ }\bibinfo {title} {\emph
  {Compiling quantum algorithms for architectures with multi-qubit gates}},\
  \href {https://doi.org/10.1088/1367-2630/18/6/063029} {\bibfield  {journal}
  {\bibinfo  {journal} {\njp}\ }\textbf {\bibinfo {volume} {18}},\ \bibinfo
  {pages} {063029} (\bibinfo {year} {2016})},\
  \Eprint{https://arxiv.org/abs/1601.06819} {arXiv:1601.06819}\BibitemShut
  {NoStop}%
\bibitem [{\citenamefont {Maslov}\ and\ \citenamefont
  {Nam}(2018)}]{maslov_use_2018}%
  \BibitemOpen
  \bibfield  {author} {\bibinfo {author} {\bibfnamefont {D.}~\bibnamefont
  {Maslov}}\ and\ \bibinfo {author} {\bibfnamefont {Y.}~\bibnamefont {Nam}},\
  }\bibinfo {title} {\emph {Use of global interactions in efficient quantum
  circuit constructions}},\ \href {https://doi.org/10.1088/1367-2630/aaa398}
  {\bibfield  {journal} {\bibinfo  {journal} {\njp}\ }\textbf {\bibinfo
  {volume} {20}},\ \bibinfo {pages} {033018} (\bibinfo {year} {2018})},\
  \Eprint{https://arxiv.org/abs/1707.06356} {arXiv:1707.06356}\BibitemShut
  {NoStop}%
\bibitem [{\citenamefont {{van de
  Wetering}}(2021)}]{VanDeWetering20ConstructingQuantumCircuits}%
  \BibitemOpen
  \bibfield  {author} {\bibinfo {author} {\bibfnamefont {J.}~\bibnamefont {{van
  de Wetering}}},\ }\bibinfo {title} {\emph {Constructing quantum circuits with
  global gates}},\ \href {https://doi.org/10.1088/1367-2630/abf1b3} {\bibfield
  {journal} {\bibinfo  {journal} {\njp}\ }\textbf {\bibinfo {volume} {23}},\
  \bibinfo {pages} {043015} (\bibinfo {year} {2021})},\
  \Eprint{https://arxiv.org/abs/2012.09061} {arXiv:2012.09061}\BibitemShut
  {NoStop}%
\bibitem [{\citenamefont {Grzesiak}\ \emph {et~al.}(2022)\citenamefont
  {Grzesiak}, \citenamefont {Maksymov}, \citenamefont {Niroula},\ and\
  \citenamefont {Nam}}]{grzesiak_efficient_2022}%
  \BibitemOpen
  \bibfield  {author} {\bibinfo {author} {\bibfnamefont {N.}~\bibnamefont
  {Grzesiak}}, \bibinfo {author} {\bibfnamefont {A.}~\bibnamefont {Maksymov}},
  \bibinfo {author} {\bibfnamefont {P.}~\bibnamefont {Niroula}},\ and\ \bibinfo
  {author} {\bibfnamefont {Y.}~\bibnamefont {Nam}},\ }\bibinfo {title} {\emph
  {Efficient quantum programming using {EASE} gates on a trapped-ion quantum
  computer}},\ \href {https://doi.org/10.22331/q-2022-01-27-634} {\bibfield
  {journal} {\bibinfo  {journal} {Quantum}\ }\textbf {\bibinfo {volume} {6}},\
  \bibinfo {pages} {634} (\bibinfo {year} {2022})},\
  \Eprint{https://arxiv.org/abs/2107.07591} {arXiv:2107.07591}\BibitemShut
  {NoStop}%
\bibitem [{\citenamefont {Bravyi}\ \emph {et~al.}(2022)\citenamefont {Bravyi},
  \citenamefont {Maslov},\ and\ \citenamefont
  {Nam}}]{bravyi_constant_cost_2022}%
  \BibitemOpen
  \bibfield  {author} {\bibinfo {author} {\bibfnamefont {S.}~\bibnamefont
  {Bravyi}}, \bibinfo {author} {\bibfnamefont {D.}~\bibnamefont {Maslov}},\
  and\ \bibinfo {author} {\bibfnamefont {Y.}~\bibnamefont {Nam}},\ }\bibinfo
  {title} {\emph {Constant-cost implementations of {Clifford} operations and
  multiply controlled gates using global interactions}},\ \href
  {https://doi.org/10.1103/PhysRevLett.129.230501} {\bibfield  {journal}
  {\bibinfo  {journal} {\prl}\ }\textbf {\bibinfo {volume} {129}},\ \bibinfo
  {pages} {230501} (\bibinfo {year} {2022})},\
  \Eprint{https://arxiv.org/abs/2207.08691} {arXiv:2207.08691}\BibitemShut
  {NoStop}%
\bibitem [{\citenamefont {{Figgatt}}\ \emph {et~al.}(2019)\citenamefont
  {{Figgatt}}, \citenamefont {{Ostrander}}, \citenamefont {{Linke}},
  \citenamefont {{Landsman}}, \citenamefont {{Zhu}}, \citenamefont {{Maslov}},\
  and\ \citenamefont {{Monroe}}}]{figgatt_parallel_2019}%
  \BibitemOpen
  \bibfield  {author} {\bibinfo {author} {\bibfnamefont {C.}~\bibnamefont
  {{Figgatt}}}, \bibinfo {author} {\bibfnamefont {A.}~\bibnamefont
  {{Ostrander}}}, \bibinfo {author} {\bibfnamefont {N.~M.}\ \bibnamefont
  {{Linke}}}, \bibinfo {author} {\bibfnamefont {K.~A.}\ \bibnamefont
  {{Landsman}}}, \bibinfo {author} {\bibfnamefont {D.}~\bibnamefont {{Zhu}}},
  \bibinfo {author} {\bibfnamefont {D.}~\bibnamefont {{Maslov}}},\ and\
  \bibinfo {author} {\bibfnamefont {C.}~\bibnamefont {{Monroe}}},\ }\bibinfo
  {title} {\emph {Parallel entangling operations on a universal ion-trap
  quantum computer}},\ \href {https://doi.org/10.1038/s41586-019-1427-5}
  {\bibfield  {journal} {\bibinfo  {journal} {\nat}\ }\textbf {\bibinfo
  {volume} {572}},\ \bibinfo {pages} {368} (\bibinfo {year} {2019})},\
  \Eprint{https://arxiv.org/abs/1810.11948} {arXiv:1810.11948}\BibitemShut
  {NoStop}%
\bibitem [{\citenamefont {{Lu}}\ \emph {et~al.}(2019)\citenamefont {{Lu}},
  \citenamefont {{Zhang}}, \citenamefont {{Zhang}}, \citenamefont {{Chen}},
  \citenamefont {{Shen}}, \citenamefont {{Zhang}}, \citenamefont {{Zhang}},\
  and\ \citenamefont {{Kim}}}]{lu_global_2019}%
  \BibitemOpen
  \bibfield  {author} {\bibinfo {author} {\bibfnamefont {Y.}~\bibnamefont
  {{Lu}}}, \bibinfo {author} {\bibfnamefont {S.}~\bibnamefont {{Zhang}}},
  \bibinfo {author} {\bibfnamefont {K.}~\bibnamefont {{Zhang}}}, \bibinfo
  {author} {\bibfnamefont {W.}~\bibnamefont {{Chen}}}, \bibinfo {author}
  {\bibfnamefont {Y.}~\bibnamefont {{Shen}}}, \bibinfo {author} {\bibfnamefont
  {J.}~\bibnamefont {{Zhang}}}, \bibinfo {author} {\bibfnamefont {J.-N.}\
  \bibnamefont {{Zhang}}},\ and\ \bibinfo {author} {\bibfnamefont
  {K.}~\bibnamefont {{Kim}}},\ }\bibinfo {title} {\emph {Global entangling
  gates on arbitrary ion qubits}},\ \href
  {https://doi.org/10.1038/s41586-019-1428-4} {\bibfield  {journal} {\bibinfo
  {journal} {\nat}\ }\textbf {\bibinfo {volume} {572}},\ \bibinfo {pages} {363}
  (\bibinfo {year} {2019})},\ \Eprint{https://arxiv.org/abs/1901.03508}
  {arXiv:1901.03508}\BibitemShut {NoStop}%
\bibitem [{\citenamefont {Grzesiak}\ \emph {et~al.}(2020)\citenamefont
  {Grzesiak}, \citenamefont {Blümel}, \citenamefont {Wright}, \citenamefont
  {Beck}, \citenamefont {Pisenti}, \citenamefont {Li}, \citenamefont {Chaplin},
  \citenamefont {Amini}, \citenamefont {Debnath}, \citenamefont {Chen},\ and\
  \citenamefont {Nam}}]{grzesiak_efficient_2020}%
  \BibitemOpen
  \bibfield  {author} {\bibinfo {author} {\bibfnamefont {N.}~\bibnamefont
  {Grzesiak}}, \bibinfo {author} {\bibfnamefont {R.}~\bibnamefont {Blümel}},
  \bibinfo {author} {\bibfnamefont {K.}~\bibnamefont {Wright}}, \bibinfo
  {author} {\bibfnamefont {K.~M.}\ \bibnamefont {Beck}}, \bibinfo {author}
  {\bibfnamefont {N.~C.}\ \bibnamefont {Pisenti}}, \bibinfo {author}
  {\bibfnamefont {M.}~\bibnamefont {Li}}, \bibinfo {author} {\bibfnamefont
  {V.}~\bibnamefont {Chaplin}}, \bibinfo {author} {\bibfnamefont {J.~M.}\
  \bibnamefont {Amini}}, \bibinfo {author} {\bibfnamefont {S.}~\bibnamefont
  {Debnath}}, \bibinfo {author} {\bibfnamefont {J.-S.}\ \bibnamefont {Chen}},\
  and\ \bibinfo {author} {\bibfnamefont {Y.}~\bibnamefont {Nam}},\ }\bibinfo
  {title} {\emph {Efficient arbitrary simultaneously entangling gates on a
  trapped-ion quantum computer}},\ \href
  {https://doi.org/10.1038/s41467-020-16790-9} {\bibfield  {journal} {\bibinfo
  {journal} {Nat. Commun.}\ }\textbf {\bibinfo {volume} {11}},\ \bibinfo
  {pages} {2963} (\bibinfo {year} {2020})},\
  \Eprint{https://arxiv.org/abs/1905.09294} {arXiv:1905.09294}\BibitemShut
  {NoStop}%
\bibitem [{\citenamefont {Mintert}\ and\ \citenamefont
  {Wunderlich}(2001)}]{mintert_ion_trap_2001}%
  \BibitemOpen
  \bibfield  {author} {\bibinfo {author} {\bibfnamefont {F.}~\bibnamefont
  {Mintert}}\ and\ \bibinfo {author} {\bibfnamefont {C.}~\bibnamefont
  {Wunderlich}},\ }\bibinfo {title} {\emph {Ion-trap quantum logic using
  long-wavelength radiation}},\ \href
  {https://doi.org/10.1103/PhysRevLett.87.257904} {\bibfield  {journal}
  {\bibinfo  {journal} {\prl}\ }\textbf {\bibinfo {volume} {87}},\ \bibinfo
  {pages} {257904} (\bibinfo {year} {2001})},\
  \Eprint{https://arxiv.org/abs/quant-ph/0104041}
  {arXiv:quant-ph/0104041}\BibitemShut {NoStop}%
\bibitem [{\citenamefont {Wunderlich}(2002)}]{wunderlich_conditional_2001}%
  \BibitemOpen
  \bibfield  {author} {\bibinfo {author} {\bibfnamefont {C.}~\bibnamefont
  {Wunderlich}},\ }\bibinfo {title} {\emph {Conditional spin resonance with
  trapped ions}},\ in\ \href {https://doi.org/10.1007/978-3-662-04897-9_25}
  {\emph {\bibinfo {booktitle} {Laser Physics at the Limits}}},\ \bibinfo
  {editor} {edited by\ \bibinfo {editor} {\bibfnamefont {H.}~\bibnamefont
  {Figger}}, \bibinfo {editor} {\bibfnamefont {C.}~\bibnamefont {Zimmermann}},\
  and\ \bibinfo {editor} {\bibfnamefont {D.}~\bibnamefont {Meschede}}}\
  (\bibinfo  {publisher} {Springer},\ \bibinfo {year} {2002})\ pp.\ \bibinfo
  {pages} {261--273},\ \Eprint{https://arxiv.org/abs/quant-ph/0111158}
  {arXiv:quant-ph/0111158}\BibitemShut {NoStop}%
\bibitem [{\citenamefont {Timoney}\ \emph {et~al.}(2011)\citenamefont
  {Timoney}, \citenamefont {Baumgart}, \citenamefont {Johanning}, \citenamefont
  {Varon}, \citenamefont {Wunderlich}, \citenamefont {Plenio},\ and\
  \citenamefont {Retzker}}]{timoney_quantum_2011}%
  \BibitemOpen
  \bibfield  {author} {\bibinfo {author} {\bibfnamefont {N.}~\bibnamefont
  {Timoney}}, \bibinfo {author} {\bibfnamefont {I.}~\bibnamefont {Baumgart}},
  \bibinfo {author} {\bibfnamefont {M.}~\bibnamefont {Johanning}}, \bibinfo
  {author} {\bibfnamefont {A.~F.}\ \bibnamefont {Varon}}, \bibinfo {author}
  {\bibfnamefont {C.}~\bibnamefont {Wunderlich}}, \bibinfo {author}
  {\bibfnamefont {M.~B.}\ \bibnamefont {Plenio}},\ and\ \bibinfo {author}
  {\bibfnamefont {A.}~\bibnamefont {Retzker}},\ }\bibinfo {title} {\emph
  {Quantum {Gates} and {Memory} using {Microwave} {Dressed} {States}}},\ \href
  {https://doi.org/10.1038/nature10319} {\bibfield  {journal} {\bibinfo
  {journal} {Nature}\ }\textbf {\bibinfo {volume} {476}},\ \bibinfo {pages}
  {185} (\bibinfo {year} {2011})},\ \Eprint{https://arxiv.org/abs/1105.1146}
  {arXiv:1105.1146}\BibitemShut {NoStop}%
\bibitem [{\citenamefont {{Ospelkaus}}\ \emph {et~al.}(2011)\citenamefont
  {{Ospelkaus}}, \citenamefont {{Warring}}, \citenamefont {{Colombe}},
  \citenamefont {{Brown}}, \citenamefont {{Amini}}, \citenamefont
  {{Leibfried}},\ and\ \citenamefont {{Wineland}}}]{ospelkaus_microwave_2011}%
  \BibitemOpen
  \bibfield  {author} {\bibinfo {author} {\bibfnamefont {C.}~\bibnamefont
  {{Ospelkaus}}}, \bibinfo {author} {\bibfnamefont {U.}~\bibnamefont
  {{Warring}}}, \bibinfo {author} {\bibfnamefont {Y.}~\bibnamefont
  {{Colombe}}}, \bibinfo {author} {\bibfnamefont {K.~R.}\ \bibnamefont
  {{Brown}}}, \bibinfo {author} {\bibfnamefont {J.~M.}\ \bibnamefont
  {{Amini}}}, \bibinfo {author} {\bibfnamefont {D.}~\bibnamefont
  {{Leibfried}}},\ and\ \bibinfo {author} {\bibfnamefont {D.~J.}\ \bibnamefont
  {{Wineland}}},\ }\bibinfo {title} {\emph {Microwave quantum logic gates for
  trapped ions}},\ \href {https://doi.org/10.1038/nature10290} {\bibfield
  {journal} {\bibinfo  {journal} {\nat}\ }\textbf {\bibinfo {volume} {476}},\
  \bibinfo {pages} {181} (\bibinfo {year} {2011})},\
  \Eprint{https://arxiv.org/abs/1104.3573} {arXiv:1104.3573}\BibitemShut
  {NoStop}%
\bibitem [{\citenamefont {Khromova}\ \emph {et~al.}(2012)\citenamefont
  {Khromova}, \citenamefont {Piltz}, \citenamefont {Scharfenberger},
  \citenamefont {Gloger}, \citenamefont {Johanning}, \citenamefont
  {Var{\'o}n},\ and\ \citenamefont {Wunderlich}}]{khromova2012designer}%
  \BibitemOpen
  \bibfield  {author} {\bibinfo {author} {\bibfnamefont {A.}~\bibnamefont
  {Khromova}}, \bibinfo {author} {\bibfnamefont {C.}~\bibnamefont {Piltz}},
  \bibinfo {author} {\bibfnamefont {B.}~\bibnamefont {Scharfenberger}},
  \bibinfo {author} {\bibfnamefont {T.}~\bibnamefont {Gloger}}, \bibinfo
  {author} {\bibfnamefont {M.}~\bibnamefont {Johanning}}, \bibinfo {author}
  {\bibfnamefont {A.}~\bibnamefont {Var{\'o}n}},\ and\ \bibinfo {author}
  {\bibfnamefont {C.}~\bibnamefont {Wunderlich}},\ }\bibinfo {title} {\emph
  {Designer spin pseudomolecule implemented with trapped ions in a magnetic
  gradient}},\ \href {https://doi.org/10.1103/PhysRevLett.108.220502}
  {\bibfield  {journal} {\bibinfo  {journal} {\prl}\ }\textbf {\bibinfo
  {volume} {108}},\ \bibinfo {pages} {220502} (\bibinfo {year} {2012})},\
  \Eprint{https://arxiv.org/abs/1112.5302} {arXiv:1112.5302}\BibitemShut
  {NoStop}%
\bibitem [{\citenamefont {Piltz}\ \emph {et~al.}(2016)\citenamefont {Piltz},
  \citenamefont {Sriarunothai}, \citenamefont {Ivanov}, \citenamefont
  {W{\"o}lk},\ and\ \citenamefont {Wunderlich}}]{piltz2016versatile}%
  \BibitemOpen
  \bibfield  {author} {\bibinfo {author} {\bibfnamefont {C.}~\bibnamefont
  {Piltz}}, \bibinfo {author} {\bibfnamefont {T.}~\bibnamefont {Sriarunothai}},
  \bibinfo {author} {\bibfnamefont {S.~S.}\ \bibnamefont {Ivanov}}, \bibinfo
  {author} {\bibfnamefont {S.}~\bibnamefont {W{\"o}lk}},\ and\ \bibinfo
  {author} {\bibfnamefont {C.}~\bibnamefont {Wunderlich}},\ }\bibinfo {title}
  {\emph {Versatile microwave-driven trapped ion spin system for quantum
  information processing}},\ \href {https://doi.org/10.1126/sciadv.1600093}
  {\bibfield  {journal} {\bibinfo  {journal} {Sci. Adv.}\ }\textbf {\bibinfo
  {volume} {2}},\ \bibinfo {pages} {e1600093} (\bibinfo {year} {2016})},\
  \Eprint{https://arxiv.org/abs/1509.01478} {arXiv:1509.01478}\BibitemShut
  {NoStop}%
\bibitem [{\citenamefont {Lekitsch}\ \emph {et~al.}(2017)\citenamefont
  {Lekitsch}, \citenamefont {Weidt}, \citenamefont {Fowler}, \citenamefont
  {M\o{}lmer}, \citenamefont {Devitt}, \citenamefont {Wunderlich},\ and\
  \citenamefont {Hensinger}}]{lekitsch_blueprint_2017}%
  \BibitemOpen
  \bibfield  {author} {\bibinfo {author} {\bibfnamefont {B.}~\bibnamefont
  {Lekitsch}}, \bibinfo {author} {\bibfnamefont {S.}~\bibnamefont {Weidt}},
  \bibinfo {author} {\bibfnamefont {A.~G.}\ \bibnamefont {Fowler}}, \bibinfo
  {author} {\bibfnamefont {K.}~\bibnamefont {M\o{}lmer}}, \bibinfo {author}
  {\bibfnamefont {S.~J.}\ \bibnamefont {Devitt}}, \bibinfo {author}
  {\bibfnamefont {C.}~\bibnamefont {Wunderlich}},\ and\ \bibinfo {author}
  {\bibfnamefont {W.~K.}\ \bibnamefont {Hensinger}},\ }\bibinfo {title} {\emph
  {Blueprint for a microwave trapped-ion quantum computer}},\ \href
  {https://doi.org/10.1126/sciadv.1601540} {\bibfield  {journal} {\bibinfo
  {journal} {Sci. Adv.}\ }\textbf {\bibinfo {volume} {3}},\ \bibinfo {pages}
  {e1601540} (\bibinfo {year} {2017})},\
  \Eprint{https://arxiv.org/abs/1508.00420} {arXiv:1508.00420}\BibitemShut
  {NoStop}%
\bibitem [{\citenamefont {{W{\"o}lk}}\ and\ \citenamefont
  {{Wunderlich}}(2017)}]{wolk_quantum_2017}%
  \BibitemOpen
  \bibfield  {author} {\bibinfo {author} {\bibfnamefont {S.}~\bibnamefont
  {{W{\"o}lk}}}\ and\ \bibinfo {author} {\bibfnamefont {C.}~\bibnamefont
  {{Wunderlich}}},\ }\bibinfo {title} {\emph {Quantum dynamics of trapped ions
  in a dynamic field gradient using dressed states}},\ \href
  {https://doi.org/10.1088/1367-2630/aa7b22} {\bibfield  {journal} {\bibinfo
  {journal} {\njp}\ }\textbf {\bibinfo {volume} {19}},\ \bibinfo {eid} {083021}
  (\bibinfo {year} {2017})},\ \Eprint{https://arxiv.org/abs/1606.04821}
  {arXiv:1606.04821}\BibitemShut {NoStop}%
\bibitem [{\citenamefont {{Cohn}}\ \emph {et~al.}(2021)\citenamefont {{Cohn}},
  \citenamefont {{Motta}},\ and\ \citenamefont
  {{Parrish}}}]{Cohn21QuantumFilter}%
  \BibitemOpen
  \bibfield  {author} {\bibinfo {author} {\bibfnamefont {J.}~\bibnamefont
  {{Cohn}}}, \bibinfo {author} {\bibfnamefont {M.}~\bibnamefont {{Motta}}},\
  and\ \bibinfo {author} {\bibfnamefont {R.~M.}\ \bibnamefont {{Parrish}}},\
  }\bibinfo {title} {\emph {Quantum filter diagonalization with compressed
  double-factorized {Hamiltonians}}},\ \href
  {https://doi.org/10.1103/PRXQuantum.2.040352} {\bibfield  {journal} {\bibinfo
   {journal} {PRX Quantum}\ }\textbf {\bibinfo {volume} {2}},\ \bibinfo {eid}
  {040352} (\bibinfo {year} {2021})},\
  \Eprint{https://arxiv.org/abs/2104.08957} {arXiv:2104.08957}\BibitemShut
  {NoStop}%
\bibitem [{\citenamefont {S\o{}rensen}\ and\ \citenamefont
  {M\o{}lmer}(1999)}]{PhysRevLett.83.2274}%
  \BibitemOpen
  \bibfield  {author} {\bibinfo {author} {\bibfnamefont {A.}~\bibnamefont
  {S\o{}rensen}}\ and\ \bibinfo {author} {\bibfnamefont {K.}~\bibnamefont
  {M\o{}lmer}},\ }\bibinfo {title} {\emph {Spin-spin interaction and spin
  squeezing in an optical lattice}},\ \href
  {https://doi.org/10.1103/PhysRevLett.83.2274} {\bibfield  {journal} {\bibinfo
   {journal} {Phys. Rev. Lett.}\ }\textbf {\bibinfo {volume} {83}},\ \bibinfo
  {pages} {2274} (\bibinfo {year} {1999})},\
  \Eprint{https://arxiv.org/abs/quant-ph/9903044}
  {arXiv:quant-ph/9903044}\BibitemShut {NoStop}%
\bibitem [{\citenamefont {S\o{}rensen}\ and\ \citenamefont
  {M\o{}lmer}(2000)}]{PhysRevA.62.022311}%
  \BibitemOpen
  \bibfield  {author} {\bibinfo {author} {\bibfnamefont {A.}~\bibnamefont
  {S\o{}rensen}}\ and\ \bibinfo {author} {\bibfnamefont {K.}~\bibnamefont
  {M\o{}lmer}},\ }\bibinfo {title} {\emph {Entanglement and quantum computation
  with ions in thermal motion}},\ \href
  {https://doi.org/10.1103/PhysRevA.62.022311} {\bibfield  {journal} {\bibinfo
  {journal} {Phys. Rev. A}\ }\textbf {\bibinfo {volume} {62}},\ \bibinfo
  {pages} {022311} (\bibinfo {year} {2000})},\
  \Eprint{https://arxiv.org/abs/quant-ph/0002024}
  {arXiv:quant-ph/0002024}\BibitemShut {NoStop}%
\bibitem [{\citenamefont {{Choi}}\ \emph {et~al.}(2014)\citenamefont {{Choi}},
  \citenamefont {{Debnath}}, \citenamefont {{Manning}}, \citenamefont
  {{Figgatt}}, \citenamefont {{Gong}}, \citenamefont {{Duan}},\ and\
  \citenamefont {{Monroe}}}]{choi_optimal_2014}%
  \BibitemOpen
  \bibfield  {author} {\bibinfo {author} {\bibfnamefont {T.}~\bibnamefont
  {{Choi}}}, \bibinfo {author} {\bibfnamefont {S.}~\bibnamefont {{Debnath}}},
  \bibinfo {author} {\bibfnamefont {T.~A.}\ \bibnamefont {{Manning}}}, \bibinfo
  {author} {\bibfnamefont {C.}~\bibnamefont {{Figgatt}}}, \bibinfo {author}
  {\bibfnamefont {Z.~X.}\ \bibnamefont {{Gong}}}, \bibinfo {author}
  {\bibfnamefont {L.~M.}\ \bibnamefont {{Duan}}},\ and\ \bibinfo {author}
  {\bibfnamefont {C.}~\bibnamefont {{Monroe}}},\ }\bibinfo {title} {\emph
  {Optimal quantum control of multimode couplings between trapped ion qubits
  for scalable entanglement}},\ \href
  {https://doi.org/10.1103/PhysRevLett.112.190502} {\bibfield  {journal}
  {\bibinfo  {journal} {\prl}\ }\textbf {\bibinfo {volume} {112}},\ \bibinfo
  {eid} {190502} (\bibinfo {year} {2014})},\
  \Eprint{https://arxiv.org/abs/1401.1575} {arXiv:1401.1575}\BibitemShut
  {NoStop}%
\bibitem [{\citenamefont {Parra-Rodriguez}\ \emph {et~al.}(2020)\citenamefont
  {Parra-Rodriguez}, \citenamefont {Lougovski}, \citenamefont {Lamata},
  \citenamefont {Solano},\ and\ \citenamefont
  {Sanz}}]{parra_rodriguez_digital_analog_2020}%
  \BibitemOpen
  \bibfield  {author} {\bibinfo {author} {\bibfnamefont {A.}~\bibnamefont
  {Parra-Rodriguez}}, \bibinfo {author} {\bibfnamefont {P.}~\bibnamefont
  {Lougovski}}, \bibinfo {author} {\bibfnamefont {L.}~\bibnamefont {Lamata}},
  \bibinfo {author} {\bibfnamefont {E.}~\bibnamefont {Solano}},\ and\ \bibinfo
  {author} {\bibfnamefont {M.}~\bibnamefont {Sanz}},\ }\bibinfo {title} {\emph
  {Digital-analog quantum computation}},\ \href
  {https://doi.org/10.1103/PhysRevA.101.022305} {\bibfield  {journal} {\bibinfo
   {journal} {\pra}\ }\textbf {\bibinfo {volume} {101}},\ \bibinfo {pages}
  {022305} (\bibinfo {year} {2020})},\
  \Eprint{https://arxiv.org/abs/1812.03637} {arXiv:1812.03637}\BibitemShut
  {NoStop}%
\bibitem [{\citenamefont {{Maslov}}\ and\ \citenamefont
  {{Zindorf}}(2022)}]{maslov_depth_2022}%
  \BibitemOpen
  \bibfield  {author} {\bibinfo {author} {\bibfnamefont {D.}~\bibnamefont
  {{Maslov}}}\ and\ \bibinfo {author} {\bibfnamefont {B.}~\bibnamefont
  {{Zindorf}}},\ }\bibinfo {title} {\emph {Depth optimization of {CZ}, {CNOT},
  and {Clifford} circuits}},\ \href {https://doi.org/10.1109/TQE.2022.3180900}
  {\bibfield  {journal} {\bibinfo  {journal} {{IEEE} Transactions on Quantum
  Engineering}\ }\textbf {\bibinfo {volume} {3}},\ \bibinfo {pages} {1}
  (\bibinfo {year} {2022})},\ \Eprint{https://arxiv.org/abs/2201.05215}
  {arXiv:2201.05215}\BibitemShut {NoStop}%
\bibitem [{\citenamefont {Ivanov}\ \emph {et~al.}(2015)\citenamefont {Ivanov},
  \citenamefont {Johanning},\ and\ \citenamefont
  {Wunderlich}}]{ivanov2015simplified}%
  \BibitemOpen
  \bibfield  {author} {\bibinfo {author} {\bibfnamefont {S.~S.}\ \bibnamefont
  {Ivanov}}, \bibinfo {author} {\bibfnamefont {M.}~\bibnamefont {Johanning}},\
  and\ \bibinfo {author} {\bibfnamefont {C.}~\bibnamefont {Wunderlich}},\
  }\href@noop {} {\bibinfo {title} {\emph {Simplified implementation of the
  quantum {Fourier} transform with {Ising}-type {Hamiltonians}: Example with
  ion traps}}},\ \Eprint{https://arxiv.org/abs/1503.08806} {arXiv:1503.08806}
  (\bibinfo {year} {2015})\BibitemShut {NoStop}%
\bibitem [{\citenamefont {Breit}\ and\ \citenamefont
  {Rabi}(1931)}]{breit_rabi_1931}%
  \BibitemOpen
  \bibfield  {author} {\bibinfo {author} {\bibfnamefont {G.}~\bibnamefont
  {Breit}}\ and\ \bibinfo {author} {\bibfnamefont {I.~I.}\ \bibnamefont
  {Rabi}},\ }\bibinfo {title} {\emph {Measurement of nuclear spin}},\ \href
  {https://doi.org/10.1103/PhysRev.38.2082.2} {\bibfield  {journal} {\bibinfo
  {journal} {Phys. Rev.}\ }\textbf {\bibinfo {volume} {38}},\ \bibinfo {pages}
  {2082} (\bibinfo {year} {1931})}\BibitemShut {NoStop}%
\bibitem [{\citenamefont {Bourdeauducq}\ \emph {et~al.}(2021)\citenamefont
  {Bourdeauducq}, \citenamefont {whitequark}, \citenamefont {Jördens},
  \citenamefont {Nadlinger}, \citenamefont {Sionneau},\ and\ \citenamefont
  {Kermarrec}}]{Kasprowicz:20}%
  \BibitemOpen
  \bibfield  {author} {\bibinfo {author} {\bibfnamefont {S.}~\bibnamefont
  {Bourdeauducq}}, \bibinfo {author} {\bibnamefont {whitequark}}, \bibinfo
  {author} {\bibfnamefont {R.}~\bibnamefont {Jördens}}, \bibinfo {author}
  {\bibfnamefont {D.}~\bibnamefont {Nadlinger}}, \bibinfo {author}
  {\bibfnamefont {Y.}~\bibnamefont {Sionneau}},\ and\ \bibinfo {author}
  {\bibfnamefont {F.}~\bibnamefont {Kermarrec}},\ }\bibinfo {title} {\emph
  {{ARTIQ}}}\ \href {https://doi.org/10.5281/zenodo.6619071}
  {10.5281/zenodo.6619071} (\bibinfo {year} {2021}),\ \bibinfo {note}
  {{Version} 6}\BibitemShut {NoStop}%
\bibitem [{\citenamefont {Dumer}\ \emph {et~al.}(2003)\citenamefont {Dumer},
  \citenamefont {Micciancio},\ and\ \citenamefont
  {Sudan}}]{dumer_hardness_nodate}%
  \BibitemOpen
  \bibfield  {author} {\bibinfo {author} {\bibfnamefont {I.}~\bibnamefont
  {Dumer}}, \bibinfo {author} {\bibfnamefont {D.}~\bibnamefont {Micciancio}},\
  and\ \bibinfo {author} {\bibfnamefont {M.}~\bibnamefont {Sudan}},\ }\bibinfo
  {title} {\emph {Hardness of approximating the minimum distance of a linear
  code}},\ \href {https://doi.org/10.1109/TIT.2002.806118} {\bibfield
  {journal} {\bibinfo  {journal} {IEEE Trans. Inf. Theory}\ }\textbf {\bibinfo
  {volume} {49}},\ \bibinfo {pages} {22} (\bibinfo {year} {2003})}\BibitemShut
  {NoStop}%
\bibitem [{\citenamefont {Christofides}(1976)}]{christofides_worstcase_1976}%
  \BibitemOpen
  \bibfield  {author} {\bibinfo {author} {\bibfnamefont {N.}~\bibnamefont
  {Christofides}},\ }\href {https://apps.dtic.mil/sti/citations/ADA025602}
  {\emph {\bibinfo {title} {Worst-case analysis of a new heuristic for the
  travelling salesman problem}}},\ \bibinfo {type} {Tech. Rep.}\ (\bibinfo
  {institution} {Defense Technical Information Center},\ \bibinfo {year}
  {1976})\BibitemShut {NoStop}%
\bibitem [{\citenamefont {Karloff}(1991)}]{Karloff1991}%
  \BibitemOpen
  \bibfield  {author} {\bibinfo {author} {\bibfnamefont {H.}~\bibnamefont
  {Karloff}},\ }\href {https://doi.org/10.1007/978-0-8176-4844-2_2} {\emph
  {\bibinfo {title} {Linear Programming}}}\ (\bibinfo  {publisher}
  {Birkh{\"a}user Boston},\ \bibinfo {year} {1991})\ pp.\ \bibinfo {pages}
  {23--47}\BibitemShut {NoStop}%
\bibitem [{\citenamefont {{Kliesch}}\ and\ \citenamefont
  {{Roth}}(2021)}]{Kliesch2020TheoryOfQuantum}%
  \BibitemOpen
  \bibfield  {author} {\bibinfo {author} {\bibfnamefont {M.}~\bibnamefont
  {{Kliesch}}}\ and\ \bibinfo {author} {\bibfnamefont {I.}~\bibnamefont
  {{Roth}}},\ }\bibinfo {title} {\emph {Theory of quantum system
  certification}},\ \href {https://doi.org/10.1103/PRXQuantum.2.010201}
  {\bibfield  {journal} {\bibinfo  {journal} {PRX Quantum}\ }\textbf {\bibinfo
  {volume} {2}},\ \bibinfo {pages} {010201} (\bibinfo {year} {2021})},\
  \bibinfo {note} {tutorial},\ \Eprint{https://arxiv.org/abs/2010.05925}
  {arXiv:2010.05925}\BibitemShut {NoStop}%
\bibitem [{\citenamefont {Karahano\u{g}lu}\ \emph {et~al.}(2013)\citenamefont
  {Karahano\u{g}lu}, \citenamefont {Erdo\u{g}an},\ and\ \citenamefont
  {Birbil}}]{karahanoglu_mixed_2013}%
  \BibitemOpen
  \bibfield  {author} {\bibinfo {author} {\bibfnamefont {N.~B.}\ \bibnamefont
  {Karahano\u{g}lu}}, \bibinfo {author} {\bibfnamefont {H.}~\bibnamefont
  {Erdo\u{g}an}},\ and\ \bibinfo {author} {\bibfnamefont {{\c{S}}.~{\.{I}}.}\
  \bibnamefont {Birbil}},\ }\bibfield  {title} {\bibinfo {title} {\emph {A
  mixed integer linear programming formulation for the sparse recovery problem
  in compressed sensing}},\ }in\ \href
  {https://doi.org/10.1109/ICASSP.2013.6638790} {\emph {\bibinfo {booktitle}
  {2013 {IEEE} {International} {Conference} on {Acoustics}, {Speech} and
  {Signal} {Processing}}}}\ (\bibinfo {year} {2013})\ pp.\ \bibinfo {pages}
  {5870--5874}\BibitemShut {NoStop}%
\bibitem [{\citenamefont {Johanning}\ \emph {et~al.}(2009)\citenamefont
  {Johanning}, \citenamefont {Varón},\ and\ \citenamefont
  {Wunderlich}}]{johanning_quantum_2009}%
  \BibitemOpen
  \bibfield  {author} {\bibinfo {author} {\bibfnamefont {M.}~\bibnamefont
  {Johanning}}, \bibinfo {author} {\bibfnamefont {A.~F.}\ \bibnamefont
  {Varón}},\ and\ \bibinfo {author} {\bibfnamefont {C.}~\bibnamefont
  {Wunderlich}},\ }\bibinfo {title} {\emph {Quantum simulations with cold
  trapped ions}},\ \href {https://doi.org/10.1088/0953-4075/42/15/154009}
  {\bibfield  {journal} {\bibinfo  {journal} {J. Phys. B}\ }\textbf {\bibinfo
  {volume} {42}},\ \bibinfo {pages} {154009} (\bibinfo {year} {2009})},\
  \Eprint{https://arxiv.org/abs/0905.0118} {arXiv:0905.0118}\BibitemShut
  {NoStop}%
\bibitem [{\citenamefont {Baßler}\ and\ \citenamefont
  {Zipper}(2022)}]{GitHub}%
  \BibitemOpen
  \bibfield  {author} {\bibinfo {author} {\bibfnamefont {P.}~\bibnamefont
  {Baßler}}\ and\ \bibinfo {author} {\bibfnamefont {M.}~\bibnamefont
  {Zipper}},\ }\href {https://github.com/matt-zipp/arXiv-2206.06387} {\bibinfo
  {title} {\emph {Source code for ``{Synthesis of and compilation with
  time-optimal multi-qubit gates}''}}},\ \bibinfo {howpublished}
  {\url{https://github.com/matt-zipp/arXiv-2206.06387}} (\bibinfo {year}
  {2022})\BibitemShut {NoStop}%
\bibitem [{\citenamefont {Diamond}\ and\ \citenamefont
  {Boyd}(2016)}]{diamond2016cvxpy}%
  \BibitemOpen
  \bibfield  {author} {\bibinfo {author} {\bibfnamefont {S.}~\bibnamefont
  {Diamond}}\ and\ \bibinfo {author} {\bibfnamefont {S.}~\bibnamefont {Boyd}},\
  }\bibinfo {title} {\emph {{CVXPY}: {A} {P}ython-embedded modeling language
  for convex optimization}},\ \href {http://jmlr.org/papers/v17/15-408.html}
  {\bibfield  {journal} {\bibinfo  {journal} {J. Mach. Learn. Res.}\ }\textbf
  {\bibinfo {volume} {17}},\ \bibinfo {pages} {1} (\bibinfo {year}
  {2016})}\BibitemShut {NoStop}%
\bibitem [{\citenamefont {Agrawal}\ \emph {et~al.}(2018)\citenamefont
  {Agrawal}, \citenamefont {Verschueren}, \citenamefont {Diamond},\ and\
  \citenamefont {Boyd}}]{agrawal2018rewriting_new}%
  \BibitemOpen
  \bibfield  {author} {\bibinfo {author} {\bibfnamefont {A.}~\bibnamefont
  {Agrawal}}, \bibinfo {author} {\bibfnamefont {R.}~\bibnamefont
  {Verschueren}}, \bibinfo {author} {\bibfnamefont {S.}~\bibnamefont
  {Diamond}},\ and\ \bibinfo {author} {\bibfnamefont {S.}~\bibnamefont
  {Boyd}},\ }\bibinfo {title} {\emph {A rewriting system for convex
  optimization problems}},\ \href
  {https://doi.org/10.1080/23307706.2017.1397554} {\bibfield  {journal}
  {\bibinfo  {journal} {J. Control Decis.}\ }\textbf {\bibinfo {volume} {5}},\
  \bibinfo {pages} {42} (\bibinfo {year} {2018})},\
  \Eprint{https://arxiv.org/abs/1709.04494} {arXiv:1709.04494}\BibitemShut
  {NoStop}%
\bibitem [{\citenamefont {{Free Software Foundation}}(2012)}]{GLPK}%
  \BibitemOpen
  \bibfield  {author} {\bibinfo {author} {\bibnamefont {{Free Software
  Foundation}}},\ }\href {https://www.gnu.org/software/glpk/} {\bibinfo {title}
  {\emph {{GLPK (GNU Linear Programming Kit)}}}} (\bibinfo {year} {2012}),\
  \bibinfo {note} {version: 0.4.6}\BibitemShut {NoStop}%
\bibitem [{\citenamefont {{MOSEK ApS}}(2022)}]{mosek}%
  \BibitemOpen
  \bibfield  {author} {\bibinfo {author} {\bibnamefont {{MOSEK ApS}}},\ }\href
  {https://docs.mosek.com/latest/pythonapi/index.html} {\emph {\bibinfo {title}
  {{MOSEK Optimizer API for Python 9.3.14}}}} (\bibinfo {year}
  {2022})\BibitemShut {NoStop}%
\bibitem [{\citenamefont {{Kukita}}\ \emph {et~al.}(2022)\citenamefont
  {{Kukita}}, \citenamefont {{Kiya}},\ and\ \citenamefont
  {{Kondo}}}]{kukita_short_2021}%
  \BibitemOpen
  \bibfield  {author} {\bibinfo {author} {\bibfnamefont {S.}~\bibnamefont
  {{Kukita}}}, \bibinfo {author} {\bibfnamefont {H.}~\bibnamefont {{Kiya}}},\
  and\ \bibinfo {author} {\bibfnamefont {Y.}~\bibnamefont {{Kondo}}},\
  }\bibinfo {title} {\emph {Short composite quantum gate robust against two
  common systematic errors}},\ \href {https://doi.org/10.7566/JPSJ.91.104001}
  {\bibfield  {journal} {\bibinfo  {journal} {J. Phys. Soc. Japan}\ }\textbf
  {\bibinfo {volume} {91}},\ \bibinfo {eid} {104001} (\bibinfo {year}
  {2022})},\ \Eprint{https://arxiv.org/abs/2112.12945}
  {arXiv:2112.12945}\BibitemShut {NoStop}%
\bibitem [{\citenamefont {Spielman}\ and\ \citenamefont
  {Teng}(2004)}]{spielman_smoothed_2004}%
  \BibitemOpen
  \bibfield  {author} {\bibinfo {author} {\bibfnamefont {D.~A.}\ \bibnamefont
  {Spielman}}\ and\ \bibinfo {author} {\bibfnamefont {S.-H.}\ \bibnamefont
  {Teng}},\ }\bibinfo {title} {\emph {Smoothed analysis of algorithms: {Why}
  the simplex algorithm usually takes polynomial time}},\ \href
  {https://doi.org/10.1145/990308.990310} {\bibfield  {journal} {\bibinfo
  {journal} {Journal of the {ACM}}\ }\textbf {\bibinfo {volume} {51}},\
  \bibinfo {pages} {385} (\bibinfo {year} {2004})},\
  \Eprint{https://arxiv.org/abs/cs/0111050} {arXiv:cs/0111050}\BibitemShut
  {NoStop}%
\bibitem [{\citenamefont {Bravyi}\ and\ \citenamefont
  {Maslov}(2021)}]{bravyi_hadamard_free_2020}%
  \BibitemOpen
  \bibfield  {author} {\bibinfo {author} {\bibfnamefont {S.}~\bibnamefont
  {Bravyi}}\ and\ \bibinfo {author} {\bibfnamefont {D.}~\bibnamefont
  {Maslov}},\ }\bibinfo {title} {\emph {Hadamard-free circuits expose the
  structure of the {Clifford} group}},\ \href
  {https://doi.org/10.1109/TIT.2021.3081415} {\bibfield  {journal} {\bibinfo
  {journal} {IEEE Trans. Inf. Theory}\ }\textbf {\bibinfo {volume} {67}},\
  \bibinfo {pages} {4546} (\bibinfo {year} {2021})},\
  \Eprint{https://arxiv.org/abs/2003.09412} {arXiv:2003.09412}\BibitemShut
  {NoStop}%
\bibitem [{\citenamefont {{Aaronson}}\ and\ \citenamefont
  {{Gottesman}}(2004)}]{aaronson_improved_2004}%
  \BibitemOpen
  \bibfield  {author} {\bibinfo {author} {\bibfnamefont {S.}~\bibnamefont
  {{Aaronson}}}\ and\ \bibinfo {author} {\bibfnamefont {D.}~\bibnamefont
  {{Gottesman}}},\ }\bibinfo {title} {\emph {Improved simulation of stabilizer
  circuits}},\ \href {https://doi.org/10.1103/PhysRevA.70.052328} {\bibfield
  {journal} {\bibinfo  {journal} {\pra}\ }\textbf {\bibinfo {volume} {70}},\
  \bibinfo {eid} {052328} (\bibinfo {year} {2004})},\
  \Eprint{https://arxiv.org/abs/quant-ph/0406196}
  {arXiv:quant-ph/0406196}\BibitemShut {NoStop}%
\bibitem [{\citenamefont {Maslov}\ and\ \citenamefont
  {Roetteler}(2018)}]{maslov_shorter_2018}%
  \BibitemOpen
  \bibfield  {author} {\bibinfo {author} {\bibfnamefont {D.}~\bibnamefont
  {Maslov}}\ and\ \bibinfo {author} {\bibfnamefont {M.}~\bibnamefont
  {Roetteler}},\ }\bibinfo {title} {\emph {Shorter stabilizer circuits via
  {Bruhat} decomposition and quantum circuit transformations}},\ \href
  {https://doi.org/10.1109/TIT.2018.2825602} {\bibfield  {journal} {\bibinfo
  {journal} {IEEE Trans. Inf. Theory}\ }\textbf {\bibinfo {volume} {64}},\
  \bibinfo {pages} {4729} (\bibinfo {year} {2018})},\
  \Eprint{https://arxiv.org/abs/1705.09176} {arXiv:1705.09176}\BibitemShut
  {NoStop}%
\bibitem [{\citenamefont {Duncan}\ \emph {et~al.}(2020)\citenamefont {Duncan},
  \citenamefont {Kissinger}, \citenamefont {Perdrix},\ and\ \citenamefont
  {van~de Wetering}}]{duncan_graph_theoretic_2020}%
  \BibitemOpen
  \bibfield  {author} {\bibinfo {author} {\bibfnamefont {R.}~\bibnamefont
  {Duncan}}, \bibinfo {author} {\bibfnamefont {A.}~\bibnamefont {Kissinger}},
  \bibinfo {author} {\bibfnamefont {S.}~\bibnamefont {Perdrix}},\ and\ \bibinfo
  {author} {\bibfnamefont {J.}~\bibnamefont {van~de Wetering}},\ }\bibinfo
  {title} {\emph {Graph-theoretic simplification of quantum circuits with the
  {ZX}-calculus}},\ \href {https://doi.org/10.22331/q-2020-06-04-279}
  {\bibfield  {journal} {\bibinfo  {journal} {Quantum}\ }\textbf {\bibinfo
  {volume} {4}},\ \bibinfo {pages} {279} (\bibinfo {year} {2020})},\
  \Eprint{https://arxiv.org/abs/1902.03178} {arXiv:1902.03178}\BibitemShut
  {NoStop}%
\bibitem [{\citenamefont {Kueng}\ and\ \citenamefont
  {Gross}()}]{kueng_qubit_2015}%
  \BibitemOpen
  \bibfield  {author} {\bibinfo {author} {\bibfnamefont {R.}~\bibnamefont
  {Kueng}}\ and\ \bibinfo {author} {\bibfnamefont {D.}~\bibnamefont {Gross}},\
  }\href@noop {} {\bibinfo {title} {\emph {Qubit stabilizer states are complex
  projective 3-designs}}},\ \Eprint{https://arxiv.org/abs/1510.02767}
  {arXiv:1510.02767}\BibitemShut {NoStop}%
\bibitem [{\citenamefont {Huang}\ \emph {et~al.}(2020)\citenamefont {Huang},
  \citenamefont {Kueng},\ and\ \citenamefont
  {Preskill}}]{huang_predicting_2020}%
  \BibitemOpen
  \bibfield  {author} {\bibinfo {author} {\bibfnamefont {H.-Y.}\ \bibnamefont
  {Huang}}, \bibinfo {author} {\bibfnamefont {R.}~\bibnamefont {Kueng}},\ and\
  \bibinfo {author} {\bibfnamefont {J.}~\bibnamefont {Preskill}},\ }\bibinfo
  {title} {\emph {Predicting many properties of a quantum system from very few
  measurements}},\ \href {https://doi.org/10.1038/s41567-020-0932-7} {\bibfield
   {journal} {\bibinfo  {journal} {Nat. Phys.}\ }\textbf {\bibinfo {volume}
  {16}},\ \bibinfo {pages} {1050} (\bibinfo {year} {2020})},\
  \Eprint{https://arxiv.org/abs/2002.08953} {arXiv:2002.08953}\BibitemShut
  {NoStop}%
\bibitem [{\citenamefont
  {{Schlingemann}}(2001)}]{schlingemann_stabilizer_2001_new}%
  \BibitemOpen
  \bibfield  {author} {\bibinfo {author} {\bibfnamefont {D.}~\bibnamefont
  {{Schlingemann}}},\ }\href@noop {} {\bibinfo {title} {\emph {Stabilizer codes
  can be realized as graph codes}}},\
  \Eprint{https://arxiv.org/abs/quant-ph/0111080} {arXiv:quant-ph/0111080}
  (\bibinfo {year} {2001})\BibitemShut {NoStop}%
\bibitem [{\citenamefont {Schlingemann}\ and\ \citenamefont
  {Werner}(2001)}]{schlingemann_quantum_2001}%
  \BibitemOpen
  \bibfield  {author} {\bibinfo {author} {\bibfnamefont {D.}~\bibnamefont
  {Schlingemann}}\ and\ \bibinfo {author} {\bibfnamefont {R.~F.}\ \bibnamefont
  {Werner}},\ }\bibinfo {title} {\emph {Quantum error-correcting codes
  associated with graphs}},\ \href {https://doi.org/10.1103/PhysRevA.65.012308}
  {\bibfield  {journal} {\bibinfo  {journal} {\pra}\ }\textbf {\bibinfo
  {volume} {65}},\ \bibinfo {pages} {012308} (\bibinfo {year} {2001})},\
  \Eprint{https://arxiv.org/abs/quant-ph/0012111}
  {arXiv:quant-ph/0012111}\BibitemShut {NoStop}%
\bibitem [{\citenamefont {Kliuchnikov}\ \emph {et~al.}(2022)\citenamefont
  {Kliuchnikov}, \citenamefont {Lauter}, \citenamefont {Minko}, \citenamefont
  {Paetznick},\ and\ \citenamefont {Petit}}]{kliuchnikov_shorter_2022}%
  \BibitemOpen
  \bibfield  {author} {\bibinfo {author} {\bibfnamefont {V.}~\bibnamefont
  {Kliuchnikov}}, \bibinfo {author} {\bibfnamefont {K.}~\bibnamefont {Lauter}},
  \bibinfo {author} {\bibfnamefont {R.}~\bibnamefont {Minko}}, \bibinfo
  {author} {\bibfnamefont {A.}~\bibnamefont {Paetznick}},\ and\ \bibinfo
  {author} {\bibfnamefont {C.}~\bibnamefont {Petit}},\ }\href@noop {} {\bibinfo
  {title} {\emph {Shorter quantum circuits}}},\
  \Eprint{https://arxiv.org/abs/2203.10064} {arXiv:2203.10064} (\bibinfo {year}
  {2022})\BibitemShut {NoStop}%
\bibitem [{\citenamefont {Born}\ and\ \citenamefont {Oppenheimer}(1927)}]{BO}%
  \BibitemOpen
  \bibfield  {author} {\bibinfo {author} {\bibfnamefont {M.}~\bibnamefont
  {Born}}\ and\ \bibinfo {author} {\bibfnamefont {R.}~\bibnamefont
  {Oppenheimer}},\ }\bibinfo {title} {\emph {Zur {Q}uantentheorie der
  {M}olekeln}},\ \href {https://doi.org/10.1002/andp.19273892002} {\bibfield
  {journal} {\bibinfo  {journal} {Ann. Phys.}\ }\textbf {\bibinfo {volume}
  {389}},\ \bibinfo {pages} {457} (\bibinfo {year} {1927})}\BibitemShut
  {NoStop}%
\bibitem [{\citenamefont {{Kempe}}\ \emph {et~al.}(2006)\citenamefont
  {{Kempe}}, \citenamefont {{Kitaev}},\ and\ \citenamefont
  {{Regev}}}]{Kempe04TheComplexityOf}%
  \BibitemOpen
  \bibfield  {author} {\bibinfo {author} {\bibfnamefont {J.}~\bibnamefont
  {{Kempe}}}, \bibinfo {author} {\bibfnamefont {A.}~\bibnamefont {{Kitaev}}},\
  and\ \bibinfo {author} {\bibfnamefont {O.}~\bibnamefont {{Regev}}},\
  }\bibinfo {title} {\emph {The complexity of the local {Hamiltonian}
  problem}},\ \href {https://doi.org/10.1137/S0097539704445226} {\bibfield
  {journal} {\bibinfo  {journal} {SIAM J. Comput.}\ }\textbf {\bibinfo {volume}
  {35}},\ \bibinfo {pages} {1070} (\bibinfo {year} {2006})},\
  \Eprint{https://arxiv.org/abs/quant-ph/0406180}
  {arXiv:quant-ph/0406180}\BibitemShut {NoStop}%
\bibitem [{\citenamefont {Parrish}\ and\ \citenamefont
  {McMahon}(2019)}]{parrish_quantum_filter_2019}%
  \BibitemOpen
  \bibfield  {author} {\bibinfo {author} {\bibfnamefont {R.~M.}\ \bibnamefont
  {Parrish}}\ and\ \bibinfo {author} {\bibfnamefont {P.~L.}\ \bibnamefont
  {McMahon}},\ }\href@noop {} {\bibinfo {title} {\emph {Quantum filter
  diagonalization: {Quantum} eigendecomposition without full quantum phase
  estimation}}},\ \Eprint{https://arxiv.org/abs/1909.08925} {arXiv:1909.08925}
  (\bibinfo {year} {2019})\BibitemShut {NoStop}%
\bibitem [{\citenamefont {Klymko}\ \emph {et~al.}(2022)\citenamefont {Klymko},
  \citenamefont {Mejuto-Zaera}, \citenamefont {Cotton}, \citenamefont
  {Wudarski}, \citenamefont {Urbanek}, \citenamefont {Hait}, \citenamefont
  {Head-Gordon}, \citenamefont {Whaley}, \citenamefont {Moussa}, \citenamefont
  {Wiebe}, \citenamefont {de~Jong},\ and\ \citenamefont
  {Tubman}}]{klymko2021real}%
  \BibitemOpen
  \bibfield  {author} {\bibinfo {author} {\bibfnamefont {K.}~\bibnamefont
  {Klymko}}, \bibinfo {author} {\bibfnamefont {C.}~\bibnamefont
  {Mejuto-Zaera}}, \bibinfo {author} {\bibfnamefont {S.~J.}\ \bibnamefont
  {Cotton}}, \bibinfo {author} {\bibfnamefont {F.}~\bibnamefont {Wudarski}},
  \bibinfo {author} {\bibfnamefont {M.}~\bibnamefont {Urbanek}}, \bibinfo
  {author} {\bibfnamefont {D.}~\bibnamefont {Hait}}, \bibinfo {author}
  {\bibfnamefont {M.}~\bibnamefont {Head-Gordon}}, \bibinfo {author}
  {\bibfnamefont {K.~B.}\ \bibnamefont {Whaley}}, \bibinfo {author}
  {\bibfnamefont {J.}~\bibnamefont {Moussa}}, \bibinfo {author} {\bibfnamefont
  {N.}~\bibnamefont {Wiebe}}, \bibinfo {author} {\bibfnamefont {W.~A.}\
  \bibnamefont {de~Jong}},\ and\ \bibinfo {author} {\bibfnamefont {N.~M.}\
  \bibnamefont {Tubman}},\ }\bibinfo {title} {\emph {Real time evolution for
  ultracompact {Hamiltonian} eigenstates on quantum hardware}},\ \href
  {https://doi.org/10.1103/PRXQuantum.3.020323} {\bibfield  {journal} {\bibinfo
   {journal} {PRX Quantum}\ }\textbf {\bibinfo {volume} {3}},\ \bibinfo {pages}
  {020323} (\bibinfo {year} {2022})},\
  \Eprint{https://arxiv.org/abs/2103.08563} {arXiv:2103.08563}\BibitemShut
  {NoStop}%
\bibitem [{\citenamefont {Stair}\ \emph {et~al.}(2020)\citenamefont {Stair},
  \citenamefont {Huang},\ and\ \citenamefont
  {Evangelista}}]{stair_quantum_krylov_2020}%
  \BibitemOpen
  \bibfield  {author} {\bibinfo {author} {\bibfnamefont {N.~H.}\ \bibnamefont
  {Stair}}, \bibinfo {author} {\bibfnamefont {R.}~\bibnamefont {Huang}},\ and\
  \bibinfo {author} {\bibfnamefont {F.~A.}\ \bibnamefont {Evangelista}},\
  }\bibinfo {title} {\emph {A multireference quantum {Krylov} algorithm for
  strongly correlated electrons}},\ \href
  {https://doi.org/10.1021/acs.jctc.9b01125} {\bibfield  {journal} {\bibinfo
  {journal} {J. Chem. Theory Comput.}\ }\textbf {\bibinfo {volume} {16}},\
  \bibinfo {pages} {2236} (\bibinfo {year} {2020})},\ \bibinfo {note} {pMID:
  32091895}\BibitemShut {NoStop}%
\bibitem [{\citenamefont {Kliuchnikov}\ \emph {et~al.}(2013)\citenamefont
  {Kliuchnikov}, \citenamefont {Maslov},\ and\ \citenamefont
  {Mosca}}]{kliuchnikov_fast_2013}%
  \BibitemOpen
  \bibfield  {author} {\bibinfo {author} {\bibfnamefont {V.}~\bibnamefont
  {Kliuchnikov}}, \bibinfo {author} {\bibfnamefont {D.}~\bibnamefont
  {Maslov}},\ and\ \bibinfo {author} {\bibfnamefont {M.}~\bibnamefont
  {Mosca}},\ }\bibinfo {title} {\emph {Fast and efficient exact synthesis of
  single qubit unitaries generated by {Clifford} and {T} gates}},\ \href
  {https://doi.org/10.26421/QIC13.7-8-4} {\bibfield  {journal} {\bibinfo
  {journal} {Quantum Inform. Comput.}\ }\textbf {\bibinfo {volume} {13}},\
  \bibinfo {pages} {607} (\bibinfo {year} {2013})},\
  \Eprint{https://arxiv.org/abs/1206.5236} {arXiv:1206.5236}\BibitemShut
  {NoStop}%
\bibitem [{\citenamefont {Ross}\ and\ \citenamefont
  {Selinger}(2016)}]{ross_optimal_2016}%
  \BibitemOpen
  \bibfield  {author} {\bibinfo {author} {\bibfnamefont {N.~J.}\ \bibnamefont
  {Ross}}\ and\ \bibinfo {author} {\bibfnamefont {P.}~\bibnamefont
  {Selinger}},\ }\bibinfo {title} {\emph {Optimal ancilla-free {Clifford}+{T}
  approximation of z-rotations}},\ \href
  {https://doi.org/10.26421/QIC16.11-12-1} {\bibfield  {journal} {\bibinfo
  {journal} {Quantum Inform. Compu.}\ }\textbf {\bibinfo {volume} {16}},\
  \bibinfo {pages} {901} (\bibinfo {year} {2016})},\
  \Eprint{https://arxiv.org/abs/1403.2975} {arXiv:1403.2975}\BibitemShut
  {NoStop}%
\bibitem [{\citenamefont {Bouland}\ and\ \citenamefont
  {Giurgica-Tiron}()}]{bouland_efficient_2021}%
  \BibitemOpen
  \bibfield  {author} {\bibinfo {author} {\bibfnamefont {A.}~\bibnamefont
  {Bouland}}\ and\ \bibinfo {author} {\bibfnamefont {T.}~\bibnamefont
  {Giurgica-Tiron}},\ }\href@noop {} {\bibinfo {title} {\emph {Efficient
  universal quantum compilation: An inverse-free {Solovay-Kitaev}
  algorithm}}},\ \Eprint{https://arxiv.org/abs/2112.02040}
  {arXiv:2112.02040}\BibitemShut {NoStop}%
\bibitem [{\citenamefont {Amy}\ \emph {et~al.}(2018)\citenamefont {Amy},
  \citenamefont {Azimzadeh},\ and\ \citenamefont
  {Mosca}}]{amy_cnot_complexity_2018}%
  \BibitemOpen
  \bibfield  {author} {\bibinfo {author} {\bibfnamefont {M.}~\bibnamefont
  {Amy}}, \bibinfo {author} {\bibfnamefont {P.}~\bibnamefont {Azimzadeh}},\
  and\ \bibinfo {author} {\bibfnamefont {M.}~\bibnamefont {Mosca}},\ }\bibinfo
  {title} {\emph {On the {CNOT}-complexity of {CNOT}-{PHASE} circuits}},\ \href
  {https://doi.org/10.1088/2058-9565/aad8ca} {\bibfield  {journal} {\bibinfo
  {journal} {Quantum Sci. Technol.}\ }\textbf {\bibinfo {volume} {4}},\
  \bibinfo {pages} {015002} (\bibinfo {year} {2018})},\
  \Eprint{https://arxiv.org/abs/1712.01859} {arXiv:1712.01859}\BibitemShut
  {NoStop}%
\bibitem [{\citenamefont {Amy}\ \emph {et~al.}(2014)\citenamefont {Amy},
  \citenamefont {Maslov},\ and\ \citenamefont
  {Mosca}}]{amy_polynomial_time_2014}%
  \BibitemOpen
  \bibfield  {author} {\bibinfo {author} {\bibfnamefont {M.}~\bibnamefont
  {Amy}}, \bibinfo {author} {\bibfnamefont {D.}~\bibnamefont {Maslov}},\ and\
  \bibinfo {author} {\bibfnamefont {M.}~\bibnamefont {Mosca}},\ }\bibinfo
  {title} {\emph {Polynomial-time {T}-depth optimization of {Clifford}+{T}
  circuits via matroid partitioning}},\ \href
  {https://doi.org/10.1109/TCAD.2014.2341953} {\bibfield  {journal} {\bibinfo
  {journal} {IEEE Trans. Comput.-Aided Des. Integr. Circuits Syst.}\ }\textbf
  {\bibinfo {volume} {33}},\ \bibinfo {pages} {1476} (\bibinfo {year}
  {2014})},\ \Eprint{https://arxiv.org/abs/1303.2042}
  {arXiv:1303.2042}\BibitemShut {NoStop}%
\bibitem [{\citenamefont {O'Donnell}()}]{odonnell_analysis_2021_new}%
  \BibitemOpen
  \bibfield  {author} {\bibinfo {author} {\bibfnamefont {R.}~\bibnamefont
  {O'Donnell}},\ }\href
  {https://www.cambridge.org/core/books/analysis-of-boolean-functions/B05A66E4DCC778E02B84C16376F4D1FD}
  {\emph {\bibinfo {title} {Analysis of {Boolean} Functions}}}\ (\bibinfo
  {publisher} {Cambridge University Press})\
  \Eprint{https://arxiv.org/abs/2105.10386} {arXiv:2105.10386} \BibitemShut
  {NoStop}%
\bibitem [{\citenamefont {Dantzig}\ \emph {et~al.}(1954)\citenamefont
  {Dantzig}, \citenamefont {Fulkerson},\ and\ \citenamefont
  {Johnson}}]{Dantzig1954}%
  \BibitemOpen
  \bibfield  {author} {\bibinfo {author} {\bibfnamefont {G.}~\bibnamefont
  {Dantzig}}, \bibinfo {author} {\bibfnamefont {R.}~\bibnamefont {Fulkerson}},\
  and\ \bibinfo {author} {\bibfnamefont {S.}~\bibnamefont {Johnson}},\
  }\bibinfo {title} {\emph {Solution of a large-scale traveling-salesman
  problem}},\ \href {https://doi.org/10.1287/opre.2.4.393} {\bibfield
  {journal} {\bibinfo  {journal} {Operations Research Society of America}\
  }\textbf {\bibinfo {volume} {2}},\ \bibinfo {pages} {393} (\bibinfo {year}
  {1954})}\BibitemShut {NoStop}%
\bibitem [{\citenamefont {García-Molina}\ \emph {et~al.}()\citenamefont
  {García-Molina}, \citenamefont {Martin},\ and\ \citenamefont
  {Sanz}}]{garcia_molina_noise_2021}%
  \BibitemOpen
  \bibfield  {author} {\bibinfo {author} {\bibfnamefont {P.}~\bibnamefont
  {García-Molina}}, \bibinfo {author} {\bibfnamefont {A.}~\bibnamefont
  {Martin}},\ and\ \bibinfo {author} {\bibfnamefont {M.}~\bibnamefont {Sanz}},\
  }\href@noop {} {\bibinfo {title} {\emph {Noise in digital and digital-analog
  quantum computation}}},\ \Eprint{https://arxiv.org/abs/2107.12969}
  {arXiv:2107.12969}\BibitemShut {NoStop}%
\bibitem [{\citenamefont {Fisk}\ \emph {et~al.}(1997)\citenamefont {Fisk},
  \citenamefont {Sellars}, \citenamefont {Lawn},\ and\ \citenamefont
  {Coles}}]{fisk1997accurate}%
  \BibitemOpen
  \bibfield  {author} {\bibinfo {author} {\bibfnamefont {P.~T.}\ \bibnamefont
  {Fisk}}, \bibinfo {author} {\bibfnamefont {M.~J.}\ \bibnamefont {Sellars}},
  \bibinfo {author} {\bibfnamefont {M.~A.}\ \bibnamefont {Lawn}},\ and\
  \bibinfo {author} {\bibfnamefont {G.}~\bibnamefont {Coles}},\ }\bibinfo
  {title} {\emph {Accurate measurement of the 12.6 {GHz} ``clock'' transition
  in trapped {$^{171}$Yb$^+$} ions}},\ \href
  {https://doi.org/10.1109/58.585119} {\bibfield  {journal} {\bibinfo
  {journal} {IEEE Transactions on Ultrasonics, Ferroelectrics, and Frequency
  Control}\ }\textbf {\bibinfo {volume} {44}},\ \bibinfo {pages} {344}
  (\bibinfo {year} {1997})}\BibitemShut {NoStop}%
\bibitem [{\citenamefont {Waldron}(2018)}]{waldron_introduction_2018}%
  \BibitemOpen
  \bibfield  {author} {\bibinfo {author} {\bibfnamefont {S.~F.~D.}\
  \bibnamefont {Waldron}},\ }\href {https://doi.org/10.1007/978-0-8176-4815-2}
  {\emph {\bibinfo {title} {An {Introduction} to {Finite} {Tight} {Frames}}}},\
  Applied and {Numerical} {Harmonic} {Analysis}\ (\bibinfo  {publisher}
  {Springer New York},\ \bibinfo {address} {New York, NY},\ \bibinfo {year}
  {2018})\BibitemShut {NoStop}%
\bibitem [{\citenamefont {Holmes}\ and\ \citenamefont
  {Paulsen}(2004)}]{holmes_optimal_2004}%
  \BibitemOpen
  \bibfield  {author} {\bibinfo {author} {\bibfnamefont {R.~B.}\ \bibnamefont
  {Holmes}}\ and\ \bibinfo {author} {\bibfnamefont {V.~I.}\ \bibnamefont
  {Paulsen}},\ }\bibinfo {title} {\emph {Optimal frames for erasures}},\ \href
  {https://doi.org/10.1016/j.laa.2003.07.012} {\bibfield  {journal} {\bibinfo
  {journal} {Linear Algebra Its Appl.}\ }\textbf {\bibinfo {volume} {377}},\
  \bibinfo {pages} {31} (\bibinfo {year} {2004})}\BibitemShut {NoStop}%
\end{thebibliography}%
\end{document}